\newtheorem{definition}{Definition}
\newtheorem{lemma}{Lemma}
\newtheorem{theorem}{Theorem}
\newtheorem{example}{Example}
\newcommand{\setalglineno}[1]{%
	\setcounter{ALG@line}{\numexpr#1-1}}
\newcommand*{\angela}[1]{\textcolor{black}{#1}}
\newcommand*{\gramm}[1]{\textcolor{black}{#1}}
\title{Procrustes analysis for high-dimensional data}
\author[1]{Angela Andreella}
\affil[1]{Department of Statistical Sciences, University of Padova, Italy}
\author[2]{Livio Finos}
\affil[2]{Department of Developmental Psychology and Socialization, University of Padova}
\date{}
\begin{document}












\maketitle
\begin{abstract}
The Procrustes-based perturbation model \citep{Goodall} allows minimization of the Frobenius distance between matrices by similarity transformation. However, it suffers from non-identifiability, critical interpretation of the transformed matrices, and inapplicability in high-dimensional data. We provide an extension of the perturbation model focused on the high-dimensional data framework, called the ProMises (Procrustes von Mises-Fisher) model. The ill-posed and interpretability problems are solved by imposing a proper prior distribution for the orthogonal matrix parameter \gramm{(}i.e., the von Mises-Fisher distribution\gramm{)} which is a conjugate prior, resulting in a fast estimation process. Furthermore, we present the Efficient ProMises model for the high-dimensional framework, useful in neuroimaging, where the problem has much more than three dimensions. We found a great improvement in functional \gramm{m}agnetic \gramm{r}esonance \gramm{i}maging \gramm{(fMRI)} connectivity analysis \gramm{because} the ProMises model permits incorporat\gramm{ion of} topological brain information in the alignment's estimation process.

\begin{keywords}
functional alignment; functional magnetic resonance imaging; high-dimensional data; Procrustes analysis; von Mises-Fisher distribution.
\end{keywords}
\end{abstract}


\section{Introduction}\label{introduction}
The Procrustes problem \gramm{is aimed at matching} matrices using similarity transformations by minimizing their Frobenius distance. It allows compar\gramm{ison of} matrices with dimensions defined in an arbitrary coordinate system.
This method raised the interest of applied researchers hence highlighting its potentiality through a plethora of applications in several fields, such as ecology \citep{saito2015should}, biology \citep{rohlf1990extensions},  analytical chemometrics \citep{ANDRADE2004123}, and psychometrics \citep{Green,psychom}.

The interest of a large audience from applied fields stimulates, in parallel, the growth of a vast \gramm{body of} literature. Despite this, essentially all applications comprise spatial coordinates \gramm{(}i.e., two\gramm{-} or three-dimensional\gramm{)}. \cite{Haxby} first introduced the use of this approach into a different context: align functional Magnetic Resonance Images (fMRI). The coordinates are hence substituted by voxels \gramm{(}i.e., three-dimensional pixels\gramm{)}, and the problem becomes inherently high-dimensional. The approach rapidly grew in popularity in the neuroimaging community because of its effectiveness. However, the proposed solution is naive; the extension from the spatial context to a more general and high\gramm{-}dimensional one \gramm{is} a theoretical challenge that needs adequate attention.

The most serious concern is the results' interpretability. In most cases\gramm{,} Procrustes methods turn into an ill-posed problem. It is a barely noticeable problem with spatial coordinates \gramm{because} the solution is unique up to rotations hence\gramm{,} the user has the freedom to choose the point of view that provides the nicest picture. When the dimensions do not have a spatial meaning, any rotation completely changes the interpretation of the results. 

To tackle \gramm{this} problem, we revise the perturbation model \citep{Goodall}, which rephrases the Procrustes problem as a statistical model. The matrices are defined as a random perturbation of a reference matrix plus an error term. The perturbation is expressed by rotation, scaling, and translation and the matrix normal distribution \citep{Gupta} is assumed for the error terms. \gramm{Like} \cite{MardiaBayesian, MardiaBayesian1}, we assume that the orthogonal matrix parameter follows the von Mises-Fisher distribution. We prove that the proposed prior distribution is conjugate, making the estimation process quite fast. Indeed, the maximum a posterior estimate of the orthogonal matrix parameters results in a minor modification of the original solution \gramm{because} the prior information enters into the pairwise cross-product of the matrices to be aligned. 
The prior distribution plays the role of regularizing term, resolving the non-identifiability of the orthogonal matrix parameter. In the application to fMRI data in Section \ref{application}, we further show that specification of a prior \gramm{distribution} permits the integration of functional and topological aspects, which largely improve\gramm{s} the \gramm{results'} interpretability. We then propose a comprehensive approach to the Procrustes problem: the ProMises (Procrustes von Mises-Fisher) model. 

The second problem raised by the extension to the high-dimensional framework is computational. The estimation algorithm of a Procrustes-based method involves a series of singular value decompositions of $m\times m$ matrices ($m$ dimensions). In a typical fMRI data set, the subjects \gramm{(}i.e., the matrices to be aligned\gramm{)} have a few hundred \gramm{(}observations/rows\gramm{)} $n$ and hundred\gramm{s of} thousands \gramm{of} voxels \gramm{(}dimensions/columns\gramm{)} $m$. We prove that the minimization problem can be solved by a series of singular value decompositions of $n\times n$ matrices, reducing the computation burden and making the ProMises model applicable to matrices with virtually any number of  columns. \angela{We denote this approach as the Efficient ProMises model.}

\angela{We emphasize here that the problem of aligning fMRI data is not three-dimensional as it could appear at first \gramm{glance}, \gramm{although} it is high-dimensional: \gramm{E}ach voxel is one dimension of the problem.
Nevertheless, in such conditions, three critical issues arise\gramm{:} The first is that \gramm{the} Procrustes method combines any voxel inside the brain without distinguishing between adjacent and distant anatomical locations. This can be questionable \gramm{because} the voxels have a spatial organization, and, despite the inter-subject variability, we expect some degree of spatial similarity between subjects in their functional organization. Therefore\gramm{,} we want the subjects' voxels of a given location \gramm{to be more likely to contribute to} to the construction of voxels with the same location in the common space.
The second issue revolves around the non-identifiability of orthogonal transformations $\boldsymbol{R}_i$, where $i$ indexes the subjects. The Procrustes method does not return a unique solution of the maximum likelihood estimate for $\boldsymbol{R}_i$: \gramm{G}iven a solution \gramm{(}i.e., a three-dimensional image\gramm{)}, any linear combination that mixes the voxels' values is an equivalent solution to the problem. The solutions are equivalent only from a mathematical point of view \gramm{because} the practical consequence is the loss of \gramm{results'} topological interpretability.
The third issue is the computational load: applying the Procrustes-based alignment to the whole brain implies the decomposition of many square matrices of dimensions roughly equal to $200.000$ \gramm{(}i.e., the number of voxels.\gramm{)}}

\angela{The Efficient ProMises model resolves all these three issues. The use of a properly chosen prior shrinks the estimate to the anatomical solution (i.e., no rotation), hence making the solution unique and interpretable from an anatomical point of view. Finally, as mentioned before, the Efficient implementation permits perform\gramm{ing} functional alignment on high-dimensional data such as fMRI data.}

The paper is organized as follows. Section \ref{background} introduces the perturbation model \citep{Goodall}, stressing its critical issues. Section \ref{procrustes-solution-with-regularization} illustrates the ProMises model and its challenges: identifiability, interpretability, and flexibility. Section \ref{light} define\gramm{s} the Efficient version of the ProMises model, which permits appl\gramm{ication of} the functional alignment \gramm{to} high-dimensional data. Finally, the presented model is evaluated by analyzing task-related fMRI data in Section \ref{application}. The proposed method is developed in \url{https://github.com/angeella/ProMisesModel} using Python \citep{van1995python}'s programming language, and in \url{https://github.com/angeella/alignProMises} using the \texttt{R} \citep{R} programming language. \angela{We report the proofs of the main lemmas here, whereas the remaining proofs are included in the appendix.}

\section{Perturbation Model}\label{background}
\subsection{Background}
Let $\{\boldsymbol{X}_{i}  \in {\rm I\!R}^{n \times m} \}_{i = 1,\dots,N}$ be a set of matrices to be aligned. The Procrustes-based method uses similarity transformations to match \gramm{each} matric\gramm{x} to the target one as close\gramm{ly} as possible, according to the Frobenius distance.

Each matrix $\boldsymbol{X}_i$ could be then assumed to be a similarity transformation of a shared matrix $\boldsymbol{M} \in {\rm I\!R}^{n \times m}$, which contains the common reference space's coordinates, plus a random error matrix $\boldsymbol{E}_i \in {\rm I\!R}^{n \times m}$. The perturbation model proposed by \cite{Goodall} is then reported. 

\begin{definition}[\cite{Goodall}]\label{Perturbation}
	Let $\{\boldsymbol{X}_{i}  \in {\rm I\!R}^{n \times m} \}_{i = 1,\dots,N}$ be a set of matrices to be aligned, and $\mathcal{O}(m)$ the orthogonal group in dimension $m$. The perturbation model is defined as
	\begin{equation*}
		\boldsymbol{X}_i= \alpha_i (\boldsymbol{M} + \boldsymbol{E}_i)\boldsymbol{R}_i^\top + \mathbf{1}_n^\top \boldsymbol{t}_i \quad\quad \text{subject to }\quad \boldsymbol{R}_i \in \mathcal{O}(m),
	\end{equation*}
	where $\boldsymbol{E}_i \sim \mathcal{MN}_{n,m}(0,\boldsymbol{\Sigma}_n,\boldsymbol{\Sigma}_m)$ \gramm{---}i.e., the matrix normal distribution with $\boldsymbol{\Sigma}_n \in {\rm I\!R}^{n \times n}$ and $\boldsymbol{\Sigma}_m \in {\rm I\!R}^{m \times m}$ scale parameters\gramm{---} $\boldsymbol{M} \in {\rm I\!R}^{n \times m}$ is the shared matrix, $\alpha_i \in {\rm I\!R}^{+}$ is the isotropic scaling, $\boldsymbol{t}_i \in {\rm I\!R}^{1 \times m}$ defines the translation vector, and $\mathbf{1}_n \in {\rm I\!R}^{1 \times n}$ is a vector of ones.
\end{definition}

To simplify the problem, the column-centered $\boldsymbol{X}_i$ are considered. The distribution is
\begin{equation*}
	\text{vec}(\boldsymbol{C}_n\boldsymbol{X}_i| \boldsymbol{R}_i, \alpha_i, \boldsymbol{M}, \boldsymbol{\Sigma}_m, \boldsymbol{\Sigma}_n) \sim \mathcal{N}_{n m}(\text{vec}(\alpha_i \boldsymbol{C}_n\boldsymbol{M} \boldsymbol{R}_i^\top),  \boldsymbol{R}_i \boldsymbol{\Sigma}_m \boldsymbol{R}_i^\top \otimes \alpha_i^2 \boldsymbol{C}_n \boldsymbol{\Sigma}_n \boldsymbol{C}_n^\top),
\end{equation*} 
where $\boldsymbol{C}_n = \boldsymbol{I}_n - \frac{1}{n} \boldsymbol{J}_n$, $\boldsymbol{I}_n \in {\rm I\!R}^{n \times n}$ is the identity matrix, $\boldsymbol{J}_n$ is a $n \times n$ matrix of ones, and $\text{vec}(\boldsymbol{A})$ the vectorization of the matrix $\boldsymbol{A}$. Let the singular value decomposition of $\boldsymbol{C}_n = \boldsymbol{\Gamma} \boldsymbol{\Delta} \boldsymbol{\Gamma}^\top$, where $\boldsymbol{\Gamma} \in {\rm I\!R}^{n \times (n-1)}$, then\gramm{:}
\begin{align*}
	\text{vec}(\boldsymbol{\Gamma}^\top \boldsymbol{C}_n\boldsymbol{X}_i| \boldsymbol{R}_i, \alpha_i, \boldsymbol{M}, \boldsymbol{\Sigma}_m, \boldsymbol{\Sigma}_n) \sim& \mathcal{N}_{(n-1) m}(\text{vec}(\alpha_i \boldsymbol{\Gamma}^\top \boldsymbol{C}_n\boldsymbol{M} \boldsymbol{R}_i^\top),   \boldsymbol{R}_i \boldsymbol{\Sigma}_m \boldsymbol{R}_i^\top \otimes \alpha_i^2 \boldsymbol{\Gamma}^\top \boldsymbol{C}_n \boldsymbol{\Sigma}_n \boldsymbol{C}_n^\top \boldsymbol{\Gamma}).
\end{align*}

The $\boldsymbol{\Gamma}^\top$ transformation leads to independence between the rows of $\boldsymbol{E}_i$ without affecting the estimation of $\boldsymbol{R}_i$. $\boldsymbol{\Gamma}^\top \boldsymbol{C}_n \boldsymbol{X}_i$ and $\boldsymbol{\Gamma}^\top \boldsymbol{C}_n \boldsymbol{M}$ have now $n-1$ rows; however, we can simply re-project them on ${\rm I\!R}^{n \times m}$ using $\boldsymbol{\Gamma}$. \gramm{Because} we can always write $\tilde{\boldsymbol{X}}_i = \boldsymbol{\Gamma} \boldsymbol{\Gamma}^\top \boldsymbol{C}_n \boldsymbol{X}_i$,  $\tilde{\boldsymbol{M}} = \boldsymbol{\Gamma} \boldsymbol{\Gamma}^\top \boldsymbol{C}_n \boldsymbol{M}$, \gramm{and} $\tilde{\boldsymbol{\Sigma}}_n = \boldsymbol{\Gamma}\boldsymbol{\Gamma}^\top \boldsymbol{C}_n \boldsymbol{\Sigma}_n \boldsymbol{C}_n^\top \boldsymbol{\Gamma}\boldsymbol{\Gamma}^\top$ without loss of generality\gramm{,} we can re-write Definition \ref{Perturbation} as \gramm{follows:}
\begin{definition}\label{Procrustes_new}
	\begin{align}
		\boldsymbol{X}_i= \alpha_i (\boldsymbol{M} + \boldsymbol{E}_i)\boldsymbol{R}_i^\top \quad\quad \text{subject to }\quad \boldsymbol{R}_i \in \mathcal{O}(m)
		\label{eq:eq15_new},
	\end{align}
	where $\boldsymbol{E}_i \sim \mathcal{MN}_{n m}(0,\boldsymbol{\Sigma}_n,\boldsymbol{\Sigma}_m)$. This way, we obtain \gramm{the following:}
	\begin{align}
		\text{vec}(\boldsymbol{X}_i| \boldsymbol{R}_i, \alpha_i, \boldsymbol{M}, \boldsymbol{\Sigma}_m, \boldsymbol{\Sigma}_n) \sim \mathcal{N}_{n m}(\text{vec}(\alpha_i \boldsymbol{M} \boldsymbol{R}_i^\top),  \boldsymbol{R}_i \boldsymbol{\Sigma}_m \boldsymbol{R}_i^\top  \otimes \alpha_i^2 \boldsymbol{\Sigma}_n).
		\label{covN_new}
	\end{align}
\end{definition}
The following notation is also adopted: $||\cdot||$ to indicate the Frobenius norm and $<\cdot,\cdot>$ for the Frobenius inner product \citep{GolubMatrix}.

The main objective of this work is comparing the shapes $\boldsymbol{X}_i$ instead of the form's analysis of the matrices. For that, the parameters of interest are $\boldsymbol{R}_i$ \gramm{and} $\alpha_i$, \gramm{whereas} $\boldsymbol{M}$, $\boldsymbol{\Sigma}_n$, \gramm{and} $\boldsymbol{\Sigma}_m$ are considered as nuisance parameters for each $i = 1, \dots, N$. The estimation of the unknown parameters changes if these nuisance parameters are known. Subsection \ref{krm_GPA} initially presents the estimates under this assumption and then \gramm{provides} the more realistic case of unknown nuisance parameters.

\subsection{Estimation of the Perturbation Model}\label{krm_GPA}

We formalize some results from \cite{Douglas} in the case of known nuisance parameters $\boldsymbol{M}$, $\boldsymbol{\Sigma}_m$, \gramm{and} $\boldsymbol{\Sigma}_n$, with $\boldsymbol{\Sigma}_n$ and $\boldsymbol{\Sigma}_m$ positive definite matrices by the following theorem\gramm{:}

\begin{theorem}[\cite{Douglas}]
	\label{thm:R}
	Consider the perturbation model described in Definition \ref{Procrustes_new}, and the singular value decomposition $\boldsymbol{X}_i^\top \boldsymbol{\Sigma}_n^{-1} \boldsymbol{M} \allowbreak \boldsymbol{\Sigma}_m^{-1}= \boldsymbol{U}_i \boldsymbol{D}_i \boldsymbol{V}_i^\top$. The maximum likelihood estimators equal $\hat{\boldsymbol{R}}_i = \boldsymbol{U}_i \boldsymbol{V}_i^\top$, and $\hat{\alpha_i}_{\boldsymbol{\hat{R}}_i} = ||\boldsymbol{\Sigma}_m^{-1/2} \hat{\boldsymbol{R}}_i^\top \boldsymbol{X}_i^\top \boldsymbol{\Sigma_n}^{-1/2}||^2/tr{(\boldsymbol{D}_i)}$.
\end{theorem}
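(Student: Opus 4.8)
My plan is to extract the $(\boldsymbol{R}_i,\alpha_i)$-dependence of the likelihood \eqref{covN_new}, recognize the $\boldsymbol{R}_i$-part as a weighted orthogonal Procrustes problem, and finish with a scalar minimization in $\alpha_i$. Inverting Definition~\ref{Procrustes_new}, write $\boldsymbol{E}_i=\tfrac1{\alpha_i}\boldsymbol{X}_i\boldsymbol{R}_i-\boldsymbol{M}$ and substitute into the matrix-normal density behind \eqref{covN_new}. Because $\boldsymbol{R}_i$ is orthogonal, $|\boldsymbol{R}_i\boldsymbol{\Sigma}_m\boldsymbol{R}_i^\top|=|\boldsymbol{\Sigma}_m|$, so the log-determinant term does not involve $\boldsymbol{R}_i$, and maximizing the likelihood reduces to minimizing the (weighted) Procrustes discrepancy
\[
Q(\boldsymbol{R}_i,\alpha_i)=\tr\!\Big[\boldsymbol{\Sigma}_m^{-1}\big(\tfrac1{\alpha_i}\boldsymbol{X}_i\boldsymbol{R}_i-\boldsymbol{M}\big)^{\!\top}\boldsymbol{\Sigma}_n^{-1}\big(\tfrac1{\alpha_i}\boldsymbol{X}_i\boldsymbol{R}_i-\boldsymbol{M}\big)\Big].
\]

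Expanding $Q$ by bilinearity of the trace gives three pieces: $\tfrac1{\alpha_i^2}\tr[\boldsymbol{\Sigma}_m^{-1}\boldsymbol{R}_i^\top\boldsymbol{X}_i^\top\boldsymbol{\Sigma}_n^{-1}\boldsymbol{X}_i\boldsymbol{R}_i]=\tfrac1{\alpha_i^2}\|\boldsymbol{\Sigma}_n^{-1/2}\boldsymbol{X}_i\boldsymbol{R}_i\boldsymbol{\Sigma}_m^{-1/2}\|^2$, a cross term, and the $\boldsymbol{R}_i$-free constant $\|\boldsymbol{\Sigma}_n^{-1/2}\boldsymbol{M}\boldsymbol{\Sigma}_m^{-1/2}\|^2$. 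The two cross contributions coincide (apply transposition and cyclicity of the trace), so they sum to $-\tfrac2{\alpha_i}\tr[\boldsymbol{R}_i^\top\boldsymbol{X}_i^\top\boldsymbol{\Sigma}_n^{-1}\boldsymbol{M}\boldsymbol{\Sigma}_m^{-1}]=-\tfrac2{\alpha_i}\langle\boldsymbol{R}_i,\boldsymbol{X}_i^\top\boldsymbol{\Sigma}_n^{-1}\boldsymbol{M}\boldsymbol{\Sigma}_m^{-1}\rangle$, with $\langle\boldsymbol{A},\boldsymbol{B}\rangle=\tr[\boldsymbol{A}^\top\boldsymbol{B}]$. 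Using the orthogonality of $\boldsymbol{R}_i$, the first piece does not depend on $\boldsymbol{R}_i$, so for every fixed $\alpha_i>0$ minimizing $Q$ over $\boldsymbol{R}_i\in\mathcal{O}(m)$ is equivalent to maximizing $\langle\boldsymbol{R}_i,\boldsymbol{X}_i^\top\boldsymbol{\Sigma}_n^{-1}\boldsymbol{M}\boldsymbol{\Sigma}_m^{-1}\rangle$. With the SVD $\boldsymbol{X}_i^\top\boldsymbol{\Sigma}_n^{-1}\boldsymbol{M}\boldsymbol{\Sigma}_m^{-1}=\boldsymbol{U}_i\boldsymbol{D}_i\boldsymbol{V}_i^\top$ and $\boldsymbol{W}:=\boldsymbol{V}_i^\top\boldsymbol{R}_i^\top\boldsymbol{U}_i\in\mathcal{O}(m)$, the inner product equals $\tr[\boldsymbol{D}_i\boldsymbol{W}]\le\tr(\boldsymbol{D}_i)$ because $\boldsymbol{D}_i$ is diagonal with nonnegative entries and $|W_{jj}|\le1$; equality holds iff $\boldsymbol{W}=\boldsymbol{I}_m$, i.e.\ $\hat{\boldsymbol{R}}_i=\boldsymbol{U}_i\boldsymbol{V}_i^\top$ (the classical orthogonal Procrustes solution; see, e.g., \citet{GolubMatrix}).

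It remains to minimize over $\alpha_i$. Substituting $\hat{\boldsymbol{R}}_i$ into $Q$, the cross term becomes $-\tfrac2{\alpha_i}\langle\hat{\boldsymbol{R}}_i,\boldsymbol{U}_i\boldsymbol{D}_i\boldsymbol{V}_i^\top\rangle=-\tfrac2{\alpha_i}\tr(\boldsymbol{D}_i)$, and the first piece becomes $\tfrac1{\alpha_i^2}a_i$ with $a_i:=\tr[\boldsymbol{\Sigma}_m^{-1}\hat{\boldsymbol{R}}_i^\top\boldsymbol{X}_i^\top\boldsymbol{\Sigma}_n^{-1}\boldsymbol{X}_i\hat{\boldsymbol{R}}_i]=\|\boldsymbol{\Sigma}_m^{-1/2}\hat{\boldsymbol{R}}_i^\top\boldsymbol{X}_i^\top\boldsymbol{\Sigma}_n^{-1/2}\|^2$ (again a cyclic-trace rewriting). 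Hence $Q(\hat{\boldsymbol{R}}_i,\alpha_i)=a_i u^2-2\tr(\boldsymbol{D}_i)u+\|\boldsymbol{\Sigma}_n^{-1/2}\boldsymbol{M}\boldsymbol{\Sigma}_m^{-1/2}\|^2$ in the variable $u=1/\alpha_i>0$: a parabola opening upward ($a_i>0$), uniquely minimized at $u=\tr(\boldsymbol{D}_i)/a_i$, i.e.\ $\hat{\alpha}_i=a_i/\tr(\boldsymbol{D}_i)=\|\boldsymbol{\Sigma}_m^{-1/2}\hat{\boldsymbol{R}}_i^\top\boldsymbol{X}_i^\top\boldsymbol{\Sigma}_n^{-1/2}\|^2/\tr(\boldsymbol{D}_i)$, which is the stated estimator.

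The step I expect to be the main obstacle is the claim that the quadratic piece $\tfrac1{\alpha_i^2}\|\boldsymbol{\Sigma}_n^{-1/2}\boldsymbol{X}_i\boldsymbol{R}_i\boldsymbol{\Sigma}_m^{-1/2}\|^2$ drops out of the optimization over $\boldsymbol{R}_i$ — this is exactly what decouples $\hat{\boldsymbol{R}}_i$ from $\alpha_i$ and collapses the problem to the bare orthogonal Procrustes form — and, relatedly, keeping careful track of which functional is being minimized over $\alpha_i$ (the Procrustes discrepancy $Q$, i.e.\ the exponent of the density, rather than the full normalized log-likelihood, whose $\boldsymbol{R}_i$-free determinant factor is $\alpha_i$-dependent). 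Once those points are settled, the remainder is routine: bilinearity and cyclicity of the trace, the SVD, the one-line orthogonal Procrustes inequality, and a quadratic minimization.
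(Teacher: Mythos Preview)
Your proof follows essentially the same route as the paper's: isolate the $\boldsymbol{R}_i$-dependence of the log-likelihood as the Frobenius inner product $\langle \boldsymbol{R}_i,\boldsymbol{X}_i^\top\boldsymbol{\Sigma}_n^{-1}\boldsymbol{M}\boldsymbol{\Sigma}_m^{-1}\rangle$, maximize via the SVD to obtain $\hat{\boldsymbol{R}}_i=\boldsymbol{U}_i\boldsymbol{V}_i^\top$ (the paper cites \citet{GowerGPA} for the inequality you prove directly), then profile over $\alpha_i$ via a one-variable quadratic in $1/\alpha_i$. The two concerns you flag---that the quadratic piece $\tr[\boldsymbol{\Sigma}_m^{-1}\boldsymbol{R}_i^\top\boldsymbol{X}_i^\top\boldsymbol{\Sigma}_n^{-1}\boldsymbol{X}_i\boldsymbol{R}_i]$ is treated as $\boldsymbol{R}_i$-free and that the $\alpha_i$-dependent determinant factor is dropped---are handled identically in the paper's proof, which simply absorbs both into unspecified constants $C^\star$ and $S^\star$ without further comment.
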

Now consider $N$ independent observations $\boldsymbol{X}_1, \dots, \boldsymbol{X}_N$. The joint log-likelihood is simply the sum of $N$ log-likelihoods.

In the case of unknown nuisance parameters $\boldsymbol{M}$, $\boldsymbol{\Sigma}_m$\gramm{,} and $\boldsymbol{\Sigma}_n$, the joint likelihood cannot be written as product of separated likelihoods, one for each $\boldsymbol{X}_i$, \gramm{because} each of the unknown parameters is \gramm{a} function of the others. The solution must be found by an iterative algorithm. In particular, the two covariance matrices $\boldsymbol{\Sigma}_m$ and $\boldsymbol{\Sigma}_n$ can be estimated by a two-stage algorithm defined in \cite{dutilleul1999mle}, where $\boldsymbol{\hat{\Sigma}}_n = \{\sum_{i=1}^{N}(\boldsymbol{X}_i - \boldsymbol{\hat{M}}) \boldsymbol{\hat{\Sigma}}_m^{-1} (\boldsymbol{X}_i - \boldsymbol{\hat{M}})^\top\}/Nm$ and 
$\boldsymbol{\hat{\Sigma}}_m = \{\sum_{i=1}^{N}(\boldsymbol{X}_i - \boldsymbol{\hat{M}})^\top \boldsymbol{\hat{\Sigma}}_n^{-1} (\boldsymbol{X}_i - \boldsymbol{\hat{M}})\}/Nn$ are maximum likelihood estimators.

The necessary and sufficient condition for the existence of $\boldsymbol{\hat{\Sigma}}_n$ and $\boldsymbol{\hat{\Sigma}}_m$ is $N \ge\frac{m}{n} + 1$, assuming $\boldsymbol{\Sigma}_m$ and $\boldsymbol{\Sigma}_n$ \gramm{are} positive definite \gramm{matrices}. In real applications, this assumption could be problematic. \angela{For example, in fMRI data analysis, $m$ roughly equals $200,000$, and n approximately equals $200$; therefore, the researcher would have to analyzing at least $1,001$ subjects, which is virtually impossible because fMRI is costly.}

Various solutions can be found in \gramm{the} literature: \cite{Douglas} proposed a regularization for the covariance matrix, \gramm{whereas} \cite{Lele} estimated $\boldsymbol{\Sigma}_n$ using the distribution of $\boldsymbol{X}_i \boldsymbol{X}_i^\top$. In this work, we maintain a general formulation of the estimator for $\boldsymbol{\hat \Sigma}_m = g(\boldsymbol{\hat{\Sigma}}_n, \boldsymbol{\hat{M}}, \boldsymbol{X}_i)$ and $\boldsymbol{\hat \Sigma}_n = g(\boldsymbol{\hat{\Sigma}}_m, \boldsymbol{\hat{M}}, \boldsymbol{X}_i)$, \gramm{because} we aim to find a proper estimate of $\boldsymbol{R}_i$ rather than $\boldsymbol{\Sigma}_n$ and $\boldsymbol{\Sigma}_m$. The shared matrix $\boldsymbol{M}$ is estimated by the element-wise arithmetic mean of $\{\hat{\boldsymbol{X}}_i\}_{i = 1, \dots, N}$, where $\hat{\boldsymbol{X}}_i = \hat{\alpha}_{\hat{\boldsymbol{R}}_i}^{-1} \boldsymbol{X}_i \hat{\boldsymbol{R}}_i$. We then modified the iterative algorithm of \cite{GowerGPA} \gramm{(}i.e., \gramm{g}eneralized Procrustes \gramm{a}nalysis\gramm{)}, to estimate $\boldsymbol{R}_i$. 

\begin{algorithm}
	\begin{algorithmic}[1]
		
	\Require $\boldsymbol{X}_i$, $\mathtt{T}$ , $\mathtt{maxIt}$, $\forall i=1,\dots,N$
	\Ensure $\hat{\boldsymbol{X}}_i$ $\forall i=1,\dots,N$
\State	$\boldsymbol{\hat{M}} = \sum_{i = 1}^{N} \boldsymbol{X}_i /N$, $\hat{\alpha_i} = 1$,  $\boldsymbol{\hat{\Sigma}}_{n\text{; old}} = \hat{\boldsymbol{\Sigma}}_n = \boldsymbol{I}_n$, $\hat{\boldsymbol{\Sigma}}_{m\text{; old}} = \hat{\boldsymbol{\Sigma}}_m = \boldsymbol{I}_m$
\State	$\mathtt{count} = 0$, $\mathtt{dist} = \mathtt{Inf}$
	\While{$\mathtt{dist}$ $>$ $\mathtt{T}$  OR  $\mathtt{count}$ $<$ $\mathtt{maxIt}$}
		\For{$i=1$ to $N$}
		\State	$\boldsymbol{U}_i \boldsymbol{D}_i \boldsymbol{V}_i^\top = \texttt{SVD}(\boldsymbol{X_i}^\top \boldsymbol{\hat{\Sigma}_n}^{-1} \boldsymbol{\hat{M}} \boldsymbol{\hat{\Sigma}_m}^{-1})$ \Comment{\small Singular value decomposition } \label{alg:svd}
		\State	$\boldsymbol{\hat{R}}_i = \boldsymbol{U}_i \boldsymbol{V}_i^\top$
		\State	$\boldsymbol{\hat{X}}_i = \boldsymbol{X}_i \boldsymbol{\hat{R}}_i$ 
		\State	$\hat{\alpha_i}_{\boldsymbol{\hat{R}}_i} = ||\boldsymbol{\hat{\Sigma}}_m^{-1/2} \hat{\boldsymbol{R}}_i^\top \boldsymbol{X}_i^\top \boldsymbol{\hat{\Sigma}_n}^{-1/2}||^2/tr(\boldsymbol{D}_i)$ 
		\State	$\boldsymbol{\hat{X}}_i = \hat{\alpha_i}_{\boldsymbol{\hat{R}}_i}^{-1} \hat{\boldsymbol{X}}_i$ \Comment{\small Update $\boldsymbol{X}_i$}
		
	\EndFor
	\State	$\boldsymbol{\hat{M}}_{\text{old}} = \boldsymbol{\hat{M}}$\Comment{\small Save $\boldsymbol{\hat{M}}$}
	\State	$\boldsymbol{\hat{M}}=\sum_{i = 1}^{N}\boldsymbol{\hat{X}}_i / N$\Comment{\small Update $\boldsymbol{\hat{M}}$}
	\State	$\boldsymbol{\hat{\Sigma}}_n = g(\boldsymbol{\hat{\Sigma}}_m, \boldsymbol{\hat{M}}, \boldsymbol{X}_i)$, $\boldsymbol{\hat{\Sigma}}_m = g(\boldsymbol{\hat{\Sigma}}_n, \boldsymbol{\hat{M}}, \boldsymbol{X}_i) $
		\While{$||\boldsymbol{\hat{\Sigma}}_n - \boldsymbol{\hat{\Sigma}}_{n \text{; old}}||^2  > \epsilon_1$  OR  $||\boldsymbol{\hat{\Sigma}}_m - \boldsymbol{\hat{\Sigma}}_{m \text{; old}}||^2 > \epsilon_2$}
			
		\State	$\boldsymbol{\hat{\Sigma}}_{n \text{; old}} = \boldsymbol{\hat{\Sigma}}_n$, $\boldsymbol{\hat{\Sigma}}_{m \text{; old}} = \boldsymbol{\hat{\Sigma}}_m$
		\State	$\boldsymbol{\hat{\Sigma}}_n = g(\boldsymbol{\hat{\Sigma}}_m, \boldsymbol{\hat{M}}, \boldsymbol{X}_i)$, $\boldsymbol{\hat{\Sigma}}_m = g(\boldsymbol{\hat{\Sigma}}_n, \boldsymbol{\hat{M}}, \boldsymbol{X}_i) $
		\EndWhile
		
	\State	$\mathtt{dist} = ||\boldsymbol{\hat{M}} - \boldsymbol{\hat{M}}_{\text{old}}||^2$, $\mathtt{count} = \mathtt{count} + 1$.
	\EndWhile
	
	\caption{Generalized Procrustes \gramm{a}nalysis \citep{GowerGPA} with the two-stage algorithm from \cite{dutilleul1999mle}, where $\mathtt{T}$ is the threshold for $||\hat{\boldsymbol{M}} - \hat{\boldsymbol{M}}_{old}||^2$, $\mathtt{maxIt}$ is the maximum number of iterations allowed, and $\epsilon_1$\gramm{and} $\epsilon_2$ denote two infinitesimal positive quantities. 
	}\label{alg:algo}
\end{algorithmic}
\end{algorithm}

\cite{groisser2005convergence} proved the convergence of the \gramm{g}eneralized Procrustes \gramm{a}nalysis algorithm. However, it leads to non-identifiable estimators of $\boldsymbol{R}_i$, $i = 1, \dots, N$\gramm{,} proved by the following lemma\gramm{:}

\begin{lemma}\label{thm:u2}
	Let $\{\boldsymbol{\hat{R}}_i\}_{i=1,\dots,N}$ be the maximum likelihood solutions for $\{\boldsymbol{R}_i\}_{i=1,\dots,N}$ with $\boldsymbol{M}$, $\boldsymbol{\Sigma}_n$, and $\boldsymbol{\Sigma}_m$ \gramm{as} unknown parameters. If $\boldsymbol{Z} \in \mathcal{O}(m)$, then $\{\boldsymbol{\hat{R}}_i \boldsymbol{Z}\}_{i = 1, \dots, N}$ are still valid maximum likelihood solutions for $\{\boldsymbol{\hat{R}}_i\}_{i = 1, \dots, N}$. 
\end{lemma}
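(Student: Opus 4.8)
The plan is to exhibit an explicit \emph{equivariance} of the whole estimation problem under the right action $\boldsymbol{R}_i\mapsto\boldsymbol{R}_i\boldsymbol{Z}$ of $\mathcal{O}(m)$ and then observe that the maximized objective is unchanged. Concretely, I would consider the simultaneous reparametrisation $\boldsymbol{R}_i\mapsto\boldsymbol{R}_i\boldsymbol{Z}$ for all $i$, together with $\boldsymbol{M}\mapsto\boldsymbol{M}\boldsymbol{Z}$ and $\boldsymbol{\Sigma}_m\mapsto\boldsymbol{Z}^\top\boldsymbol{\Sigma}_m\boldsymbol{Z}$, while $\boldsymbol{\Sigma}_n$ and each $\alpha_i$ are left fixed, and show that the joint log-likelihood assembled from \eqref{covN_new} takes the same value at the original and at the transformed parameter vector. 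Since the joint likelihood is a sum of $N$ terms of the same shape, it suffices to treat one term.

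For the key step I would write the matrix-normal log-likelihood of $\boldsymbol{X}_i$ explicitly; up to an additive constant its dependence on $(\boldsymbol{R}_i,\alpha_i,\boldsymbol{M},\boldsymbol{\Sigma}_n,\boldsymbol{\Sigma}_m)$ is through $-\tfrac12\tr[\boldsymbol{\Sigma}_m^{-1}(\alpha_i^{-1}\boldsymbol{X}_i\boldsymbol{R}_i-\boldsymbol{M})^\top\boldsymbol{\Sigma}_n^{-1}(\alpha_i^{-1}\boldsymbol{X}_i\boldsymbol{R}_i-\boldsymbol{M})]-\tfrac n2\log|\boldsymbol{\Sigma}_m|-\tfrac m2\log|\boldsymbol{\Sigma}_n|-nm\log\alpha_i$. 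Under the transformation the residual becomes $(\alpha_i^{-1}\boldsymbol{X}_i\boldsymbol{R}_i-\boldsymbol{M})\boldsymbol{Z}$ and $\boldsymbol{\Sigma}_m^{-1}$ becomes $\boldsymbol{Z}^\top\boldsymbol{\Sigma}_m^{-1}\boldsymbol{Z}$, so the quadratic form acquires a $\boldsymbol{Z}^\top\cdots\boldsymbol{Z}$ sandwich that collapses by cyclicity of the trace together with $\boldsymbol{Z}\boldsymbol{Z}^\top=\boldsymbol{I}_m$; the log-determinant term is unchanged because $|\boldsymbol{Z}^\top\boldsymbol{\Sigma}_m\boldsymbol{Z}|=|\boldsymbol{\Sigma}_m|$; and the remaining terms do not involve $\boldsymbol{Z}$. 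Summing over $i$ yields invariance of the joint log-likelihood. It then remains to check that the transformation maps admissible values to admissible ones, namely $\boldsymbol{R}_i\boldsymbol{Z}\in\mathcal{O}(m)$ and $\boldsymbol{Z}^\top\boldsymbol{\Sigma}_m\boldsymbol{Z}\succ0$, and that it is compatible with the profiling/plug-in relations actually used: $\hat{\boldsymbol{M}}=N^{-1}\sum_i\hat\alpha_i^{-1}\boldsymbol{X}_i\hat{\boldsymbol{R}}_i$ is equivariant ($\hat{\boldsymbol{M}}\mapsto\hat{\boldsymbol{M}}\boldsymbol{Z}$), the estimators $g(\cdot)$ for $\boldsymbol{\Sigma}_n,\boldsymbol{\Sigma}_m$ (and the explicit maximum likelihood forms of \cite{dutilleul1999mle}) transform as $\boldsymbol{\Sigma}_n\mapsto\boldsymbol{\Sigma}_n$, $\boldsymbol{\Sigma}_m\mapsto\boldsymbol{Z}^\top\boldsymbol{\Sigma}_m\boldsymbol{Z}$, and, by Theorem \ref{thm:R}, the cross-product $\boldsymbol{X}_i^\top\boldsymbol{\Sigma}_n^{-1}\boldsymbol{M}\boldsymbol{\Sigma}_m^{-1}$ picks up a right factor $\boldsymbol{Z}$, so its singular values are unchanged and its profile minimiser and scaling are exactly $\hat{\boldsymbol{R}}_i\boldsymbol{Z}$ and $\hat\alpha_i$. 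Hence if $(\{\hat{\boldsymbol{R}}_i\},\{\hat\alpha_i\},\hat{\boldsymbol{M}},\hat{\boldsymbol{\Sigma}}_n,\hat{\boldsymbol{\Sigma}}_m)$ maximizes the likelihood (equivalently, is a fixed point of Algorithm~\ref{alg:algo}, whose convergence is established in \cite{groisser2005convergence}), so does the transformed tuple, and in particular $\{\hat{\boldsymbol{R}}_i\boldsymbol{Z}\}$ is again a set of maximum likelihood solutions.

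The main obstacle is bookkeeping rather than depth: because the nuisance parameters are not free but are tied to $\{\boldsymbol{R}_i\}$ through the plug-in/profiling relations, one must verify that \emph{each} of those relations (arithmetic mean for $\boldsymbol{M}$, the two-stage fixed-point equations for $\boldsymbol{\Sigma}_n$ and $\boldsymbol{\Sigma}_m$, and Theorem \ref{thm:R} for $\boldsymbol{R}_i$ and $\alpha_i$) is itself equivariant under $\boldsymbol{Z}$, so that the transformed tuple is not merely likelihood-equivalent but an honest stationary point of the same iterative scheme. A convenient shortcut is to note that every quantity appearing in the algorithm — the residuals $\alpha_i^{-1}\boldsymbol{X}_i\boldsymbol{R}_i-\boldsymbol{M}$ and the Gram-type matrices entering $g(\cdot)$ and the singular value decomposition — is either right-multiplied by $\boldsymbol{Z}$ or conjugated by $\boldsymbol{Z}$ in a way that leaves invariant every Frobenius norm and every trace that occurs, whence the whole orbit $\{\hat{\boldsymbol{R}}_i\boldsymbol{Z}\}_{\boldsymbol{Z}\in\mathcal{O}(m)}$ consists of solutions and $\{\boldsymbol{R}_i\}$ is identified only up to a common right orthogonal factor.
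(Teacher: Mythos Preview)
Your proposal is correct and follows essentially the same route as the paper: both exploit the equivariance of the model under the simultaneous reparametrisation $\boldsymbol{R}_i\mapsto\boldsymbol{R}_i\boldsymbol{Z}$, $\boldsymbol{M}\mapsto\boldsymbol{M}\boldsymbol{Z}$, $\boldsymbol{\Sigma}_m\mapsto\boldsymbol{Z}^\top\boldsymbol{\Sigma}_m\boldsymbol{Z}$ and collapse the resulting trace via cyclicity and $\boldsymbol{Z}\boldsymbol{Z}^\top=\boldsymbol{I}_m$. Your version is more thorough in that you also treat the log-determinant term and explicitly verify equivariance of the plug-in estimators for $\boldsymbol{M}$, $\boldsymbol{\Sigma}_n$, $\boldsymbol{\Sigma}_m$ and $\alpha_i$, whereas the paper works only with the reduced trace objective from the proof of Theorem~\ref{thm:R}; this extra bookkeeping is welcome but does not change the argument's substance.
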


\begin{proof}
	Consider $\boldsymbol{Z} \in \mathcal{O}(m)$, and
	\begin{equation*}
	\dfrac{1}{\alpha_i} \boldsymbol{X}_i \boldsymbol{R}_i \boldsymbol{Z} - \boldsymbol{M}\boldsymbol{Z} = \boldsymbol{E}_i \boldsymbol{Z} \sim \mathcal{MN}(0, \boldsymbol{\Sigma}_n, \boldsymbol{Z}^\top \boldsymbol{\Sigma}_m \boldsymbol{Z}).
	\end{equation*}
	Consider the proof of Theorem \ref{thm:R} \angela{placed in the appendix}, we have
	\begin{align}
	\max_{\boldsymbol{R}_i \in \mathcal{O}(m)} \sum_{i = 1}^{N} <\boldsymbol{R}_i, \boldsymbol{X}_i^\top \boldsymbol{\Sigma}_n^{-1} \boldsymbol{M} \boldsymbol{\Sigma}_m^{-1}> \label{min} = \max_{\boldsymbol{R}_i \in \mathcal{O}(m)} \sum_{i =1}^{N} tr(\boldsymbol{Z}^\top \boldsymbol{R}_i^\top \boldsymbol{X}_i^\top \boldsymbol{\Sigma}_n^{-1} \boldsymbol{M} \boldsymbol{Z} \boldsymbol{Z}^\top \boldsymbol{\Sigma}_m^{-1} \boldsymbol{Z}). 
	\end{align}
	Since $\boldsymbol{Z}^\top \boldsymbol{Z} = \boldsymbol{Z} \boldsymbol{Z}^\top = \boldsymbol{I}_m$, the solutions $\{\hat{\boldsymbol{R}}_i \boldsymbol{Z}\}_{i = 1, \dots, N}$ are still valid solutions for the maximization \eqref{min}.
\end{proof}

To sum up, in the more realistic case, when we must estimate the nuisance parameters, the Procrustes solutions are infinite in general, \gramm{in} both in the high-dimensional case \gramm{and} the low-dimensional by Lemma \ref{thm:u2}. We emphasize here that, to resolve the non-identifiability of $\boldsymbol{R}_i$, the proposed ProMises model imposes a prior distribution for the parameter $\boldsymbol{R}_i$. 

\section{ProMises Model}\label{procrustes-solution-with-regularization}
\subsection{Background}
In the previous section, we justified how the perturbation model could be problematic \gramm{because} Lemma \ref{thm:u2} proves the non-identifiability of the parameter $\boldsymbol{R}_i$ in the realistic case of unknown nuisance parameters. This scenario returns to be critical in several applications, where the $m$ columns of the matrix $\boldsymbol{X}_i$ do not express the three-dimensional spatial coordinates, such as in the fMRI data framework illustrated in Section \ref{application}.
In the high-dimensional case, the final orientations of the aligned data can be relevant for interpretation purposes. 

For that, we propose its Bayesian approach: the ProMises model. We stress here that a proper prior distribution leads to a closed-form and interpretable, unique point estimate of $\boldsymbol{R}_i$. The specification of the prior parameters is essential, especially in our high-dimensional context.

\subsection{Interpretation of the Prior Parameters}\label{lmp}

\gramm{Because} $\boldsymbol{R}_i \in \mathcal{O}(m)$, a proper prior distribution must take values in the Stiefel manifold $V_{m}({\rm I\!R}^m)$. The matrix von Mises-Fisher distribution is a non-uniform distribution on $V_{m}({\rm I\!R}^m)$, which describes a rigid configuration of $m$ distinct directions with fixed angles. It was proposed by \cite{Downs} and investigated by many authors \gramm{(}e.g., \cite{Jupp, Chikuse2}\gramm{)}. 

We report below the formal definition of the von Mises-Fisher distribution.

\begin{definition}[\cite{Downs}]\label{fisher_def}
	The von Mises-Fisher distribution for $\boldsymbol{R}_i \in \mathcal{O}(m)$ is    
	\begin{align}
		f(\boldsymbol{R}_i) = C(\boldsymbol{F}, k ) \exp\big\{tr(k \boldsymbol{F}^\top \boldsymbol{R}_i)\big\}
		\label{fisher},
	\end{align}
	where \(C(\boldsymbol{F}, k)\) is a normalizing constant, $\boldsymbol{F} \in {\rm I\!R}^{m \times m}$ is the location matrix parameter, and $k \in {\rm I\!R}^{+}$ is the concentration parameter. 
\end{definition}

The parameter $k$ defined in \eqref{fisher} balances the amount of concentration of the distribution around $\boldsymbol{F}$. If $k \rightarrow 0$, the prior distribution is near a uniform distribution \gramm{(} i.e., unconstrained\gramm{)}. If $k \rightarrow +\infty$, the prior distribution tends toward a Dirac distribution \gramm{(}i.e., maximum constraint\gramm{)}.

A proper specification of the prior distribution leads to improved estimation of $\boldsymbol{R}_i$. Therefore, the core of the ProMises model is the specification of $\boldsymbol{F}$ defined in \eqref{fisher}. Consider the polar decomposition and singular value decomposition of $\boldsymbol{F}= \boldsymbol{P} \boldsymbol{K} = \boldsymbol{L} \boldsymbol{\Sigma}\boldsymbol{B}^\top = \boldsymbol{L}  \boldsymbol{B}^\top \boldsymbol{B} \boldsymbol{\Sigma} \boldsymbol{B}^\top$, where $\boldsymbol{P}, \boldsymbol{L}, \boldsymbol{B} \in \mathcal{O}(m)$, \gramm{and} $\boldsymbol{K} \in {\rm I\!R}^{m \times m}$ \gramm{are} symmetric positive semi-definite matrices, and $\boldsymbol{\Sigma} \in {\rm I\!R}^{m \times m}$ diagonal matrix with non-negative real numbers on the diagonal. The mode of the density defined in \eqref{fisher} equals $\boldsymbol{P}$ \citep{Jupp}, so the most plausible rotation matrix depends on the orientation characteristic of $\boldsymbol{F}$. Merging the two decompositions, $\boldsymbol{P} = \boldsymbol{L} \boldsymbol{B}^\top$ describes the orientation part of $\boldsymbol{F}$, and $\boldsymbol{K} = \boldsymbol{B} \boldsymbol{\Sigma} \boldsymbol{B}^\top$ defines the concentration part. The mode is specified by the product \gramm{of} the left and right singular vectors of $\boldsymbol{F}$. These decompositions are useful to understand when the density $\eqref{fisher}$ is uni-modal. If $\boldsymbol{F}$ has full rank,  $\boldsymbol{\Sigma}$ does too, the \gramm{p}olar \gramm{d}ecomposition is unique, \gramm{and thus} the mode of the density \gramm{(}i.e., $\boldsymbol{P}$ is the global maximum\gramm{)}. Let $\boldsymbol{F}$ be a full rank matrix, then the maximum equals $\max_{\boldsymbol{R}_i \in \mathcal{O}(m)} \,\, tr(\boldsymbol{F} \boldsymbol{R}_i^\top) = tr\{\boldsymbol{L}  \boldsymbol{B}^\top \boldsymbol{B} \boldsymbol{\Sigma} \boldsymbol{B}^\top (\boldsymbol{L} \boldsymbol{B}^\top)^\top\} = tr(\boldsymbol{\Sigma})$.

To sum up, the prior specification allows us to include a priori information about the optimal orientation in the perturbation model. We anticipate here the result of Lemma \ref{thm:un}: \gramm{I}f $\boldsymbol{F}$ is defined as a full rank matrix, the maximum a posteriori solution $\boldsymbol{\hat{R}^{'}}_i$ will be unique with the orientation structure of $\boldsymbol{F}$.

Two simple examples of $\boldsymbol{F}$ are delineated below.

\begin{example}\label{example}
	The most simple definition of $\boldsymbol{F}$ is $\boldsymbol{I}_m$ \citep{Lee}. The eigenvalues are all $1$, and $\boldsymbol{L}$ and $\boldsymbol{B}$ are equal to $\boldsymbol{e}_1,\dots, \boldsymbol{e}_m$, where $\boldsymbol{e}_i$ is the standard basis forming an orthonormal basis of ${\rm I\!R}^{m}$. The prior distribution shrinks the possible solutions for $\boldsymbol{R}_i$ toward orthogonal matrices that consider only the combination of variables with the same location.
	
	\angela{Alternatively, considering the fMRI scenario, the hyperparameter $\boldsymbol{F}$ can be defined as an Euclidean similarity matrix using the $3D$ anatomical coordinates $x$, $y$ and $z$ of each voxel:
	\begin{align*}
	\boldsymbol{F} = \Big[\exp\Big\{- \sqrt{(x_i - x_j)^2 + (y_i - y_j)^2 + (z_i - z_j)^2}\Big\}\Big],
	\end{align*}
	where $i,j = 1,\dots v$. In this way, $\boldsymbol{F}$ is a symmetric matrix with ones in the diagonal, which means that voxels with the same spatial location are combined with weights equalling $1$, and the weights decrease as the voxels to be combined are more spatially distant.}
\end{example}

\begin{example}\label{examplePlant}
	Consider $N$ matrices, one for each plant, describing the three-dimensional spatial trajectories of a climbing plant, Pisum sativum, having wooden support as a stimulus \citep{Castiello} \gramm{across time}. The spatiotemporal trajectories of the plants are analyzed until they come to grasp the stick. The aim is to functionally align the three time series, one for each coordinate ($x$, $y$, $z$), then we have $m=3$. In this case, we could suppose that the rotation along the $z$ axis is not of interest \gramm{because} the functional misalignment between plants can be along the $x$\gramm{-axis} and $y$\gramm{-axis} \gramm{with} the $z$\gramm{-}axis \gramm{being} the one that reflects the growth of the plants and the $x$\gramm{-axis} and $y$\gramm{-axis} describ\gramm{ing} the elliptical movement \gramm{(}circumnutation\gramm{)} of the plants \citep{Castiello}. Therefore, the $\boldsymbol{F}$ location matrix parameter can be described as \gramm{follows}:
	\begin{equation}\label{plant_F}
		\boldsymbol{F}=\begin{bmatrix}
			0.5 & 0.5 & 0 \\
			0.5 & 0.5 & 0 \\
			0 & 0 & 1 \\
		\end{bmatrix}.
	\end{equation}
	The axes $x$ and $y$ have the same probability of entering in the calculation of the first two dimensions of the final common space, \gramm{whereas} the $z$ axis is not considered.
	
	We use the kinematic plant data from \cite{Castiello}, consisting of \gramm{five} matrices/plants. Figure \ref{fig:plant} shows the elliptical movement expressed by the \gramm{axes} $x$ and $y$, in the case of unaligned and aligned plant trajectories. The rotation transformations are estimated by the ProMises model with $\boldsymbol{F}$ expressed as \eqref{plant_F}. We do not go into detail about the meaning of the results \gramm{because} we have introduced this example to explain the usefulness of $\boldsymbol{F}$. However, we can note how the ProMises model aligns the final coordinates of the tendrils \gramm{(}i.e., when the plant touches the wooden support\gramm{)}.

\begin{figure}[!htb]
	\begin{minipage}{.5\textwidth}
		\centering
		\includegraphics[width=\textwidth]{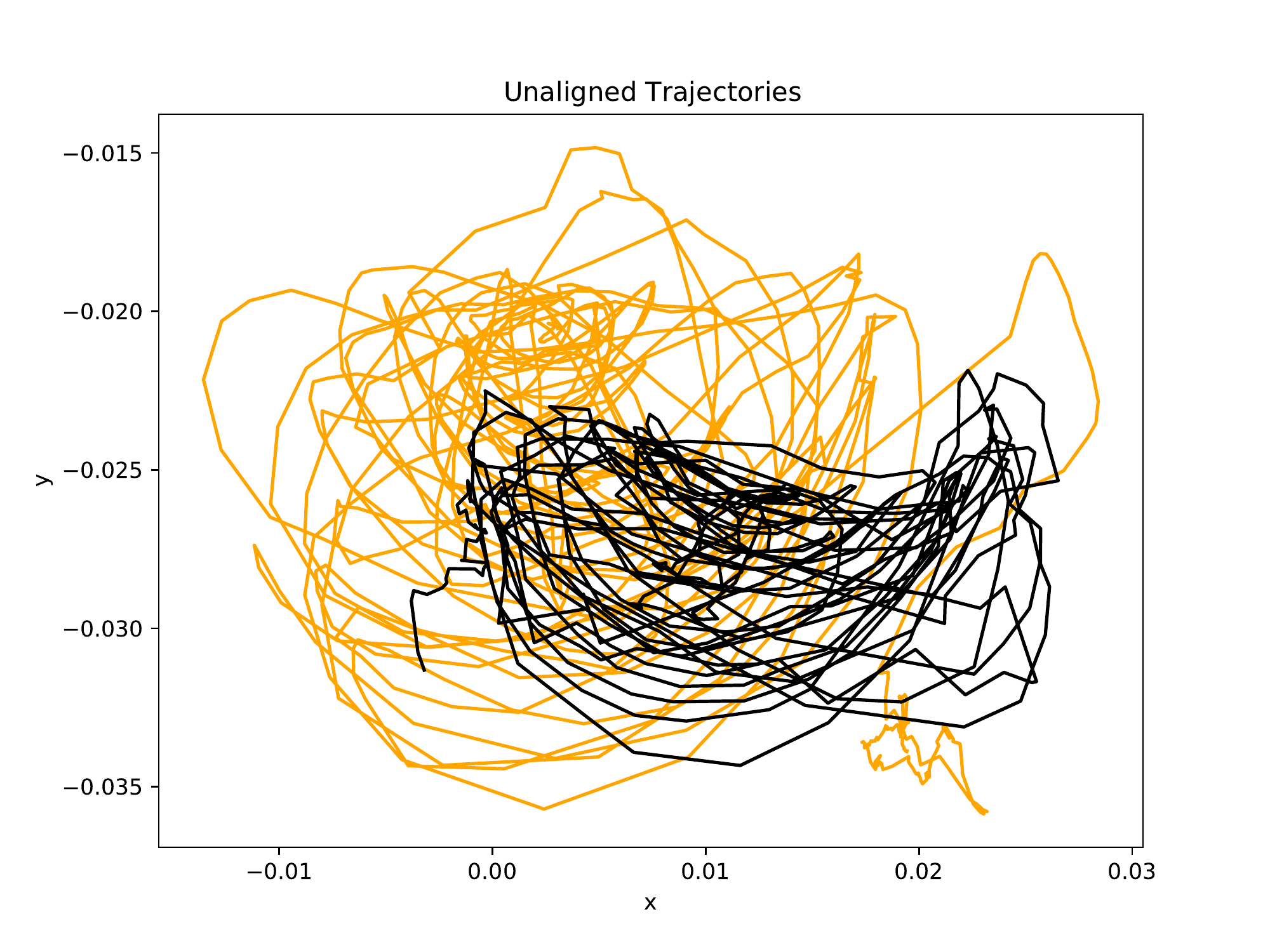}
	\end{minipage}%
	\begin{minipage}{0.5\textwidth}
		\centering
		\includegraphics[width=\textwidth]{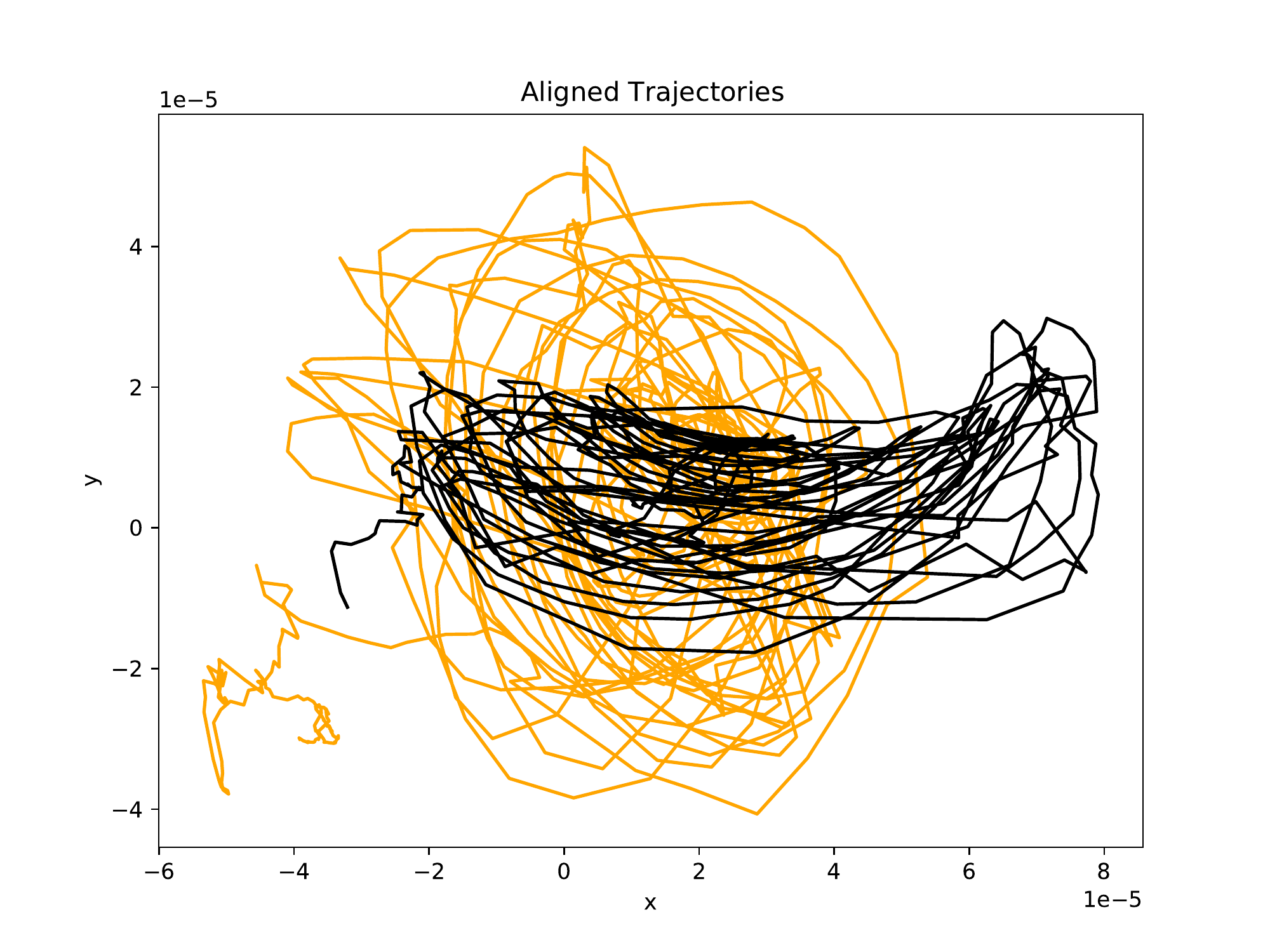}
		
	\end{minipage}
	\caption{Left panel: Unaligned spatial trajectories of the tendrils of two plants. Right panel: Aligned spatial trajectories of the tendrils of two plants.}
	\label{fig:plant}
\end{figure}
\end{example}

\subsection{Von Mises-Fisher Conjugate Prior}\label{maximum-posterior-estimation}
The von Mises-Fisher distribution \eqref{fisher} was proved by \cite{Khatri} to be a member of the standard exponential family \citep{Barndorff}. \cite{MardiaBayesian1} mentioned that the von Mises-Fisher distribution is a conjugate prior for the matrix normal distribution, \gramm{which} we formally prove in the following lemma under the perturbation model's assumptions.
\begin{lemma}\label{lemma3}
	Consider the perturbation model of Definition \ref{Procrustes_new}, with $\boldsymbol{R}_i$ distributed \angela{according} to \eqref{fisher}, then the posterior distribution $f(\boldsymbol{R}_i| k, \boldsymbol{F}, \boldsymbol{X}_i)$ is \gramm{a} conjugate distribution to the von Mises-Fisher prior distribution with location posterior parameter equal\gramm{ling the following:}
	\begin{align}\label{eq:eq1111}
		\boldsymbol{F}^\star=\boldsymbol{X_i}^\top \boldsymbol{\Sigma}_n^{-1} \boldsymbol{M} \boldsymbol{\Sigma}_m^{-1} + k \boldsymbol{F}.
	\end{align}
\end{lemma}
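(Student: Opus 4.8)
The plan is to obtain the posterior density directly from Bayes' rule, $f(\boldsymbol{R}_i\mid k,\boldsymbol{F},\boldsymbol{X}_i)\propto f(\boldsymbol{X}_i\mid\boldsymbol{R}_i,\alpha_i,\boldsymbol{M},\boldsymbol{\Sigma}_n,\boldsymbol{\Sigma}_m)\,f(\boldsymbol{R}_i\mid k,\boldsymbol{F})$ with the nuisance parameters held fixed, and to recognise the result as a von Mises--Fisher density. First I would take the likelihood from \eqref{covN_new}: its normalising constant involves $|\boldsymbol{R}_i\boldsymbol{\Sigma}_m\boldsymbol{R}_i^\top\otimes\alpha_i^2\boldsymbol{\Sigma}_n|^{-1/2}=|\boldsymbol{\Sigma}_m|^{-n/2}|\alpha_i^2\boldsymbol{\Sigma}_n|^{-m/2}$ since $|\det\boldsymbol{R}_i|=1$, so it is free of $\boldsymbol{R}_i$ and passes into the proportionality sign.

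Next I would expand the Gaussian exponent, most conveniently written as $-\tfrac12\tr\!\left[\boldsymbol{\Sigma}_m^{-1}\left(\tfrac{1}{\alpha_i}\boldsymbol{X}_i\boldsymbol{R}_i-\boldsymbol{M}\right)^{\!\top}\boldsymbol{\Sigma}_n^{-1}\left(\tfrac{1}{\alpha_i}\boldsymbol{X}_i\boldsymbol{R}_i-\boldsymbol{M}\right)\right]$. Multiplying out produces a term $\tr(\boldsymbol{\Sigma}_m^{-1}\boldsymbol{M}^\top\boldsymbol{\Sigma}_n^{-1}\boldsymbol{M})$ with no $\boldsymbol{R}_i$; two cross terms which, by cyclicity of the trace and $\tr(\boldsymbol{A})=\tr(\boldsymbol{A}^\top)$ together with the symmetry of $\boldsymbol{\Sigma}_n$ and $\boldsymbol{\Sigma}_m$, both equal $-\tfrac{1}{\alpha_i}\tr(\boldsymbol{R}_i^\top\boldsymbol{X}_i^\top\boldsymbol{\Sigma}_n^{-1}\boldsymbol{M}\boldsymbol{\Sigma}_m^{-1})=-\tfrac{1}{\alpha_i}<\boldsymbol{R}_i,\boldsymbol{X}_i^\top\boldsymbol{\Sigma}_n^{-1}\boldsymbol{M}\boldsymbol{\Sigma}_m^{-1}>$; and a quadratic term $\tfrac{1}{\alpha_i^2}\tr(\boldsymbol{\Sigma}_m^{-1}\boldsymbol{R}_i^\top\boldsymbol{X}_i^\top\boldsymbol{\Sigma}_n^{-1}\boldsymbol{X}_i\boldsymbol{R}_i)$, which --- exactly as in the proof of Theorem~\ref{thm:R} and as already exploited in \eqref{min} --- does not depend on $\boldsymbol{R}_i$ once $\boldsymbol{R}_i\boldsymbol{R}_i^\top=\boldsymbol{I}_m$ is used. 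Collecting and discarding every $\boldsymbol{R}_i$-free factor (the constant $\tfrac12$ and the scalar $1/\alpha_i$, the latter being harmless because the von Mises--Fisher kernel depends on $\boldsymbol{F}$ and $k$ only through the product $k\boldsymbol{F}$),
\[
f(\boldsymbol{X}_i\mid\boldsymbol{R}_i,\alpha_i,\boldsymbol{M},\boldsymbol{\Sigma}_n,\boldsymbol{\Sigma}_m)\ \propto\ \exp\!\left\{<\boldsymbol{R}_i,\ \boldsymbol{X}_i^\top\boldsymbol{\Sigma}_n^{-1}\boldsymbol{M}\boldsymbol{\Sigma}_m^{-1}>\right\}.
\]

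Finally I would multiply by the prior \eqref{fisher}, $f(\boldsymbol{R}_i\mid k,\boldsymbol{F})=C(\boldsymbol{F},k)\exp\{\tr(k\boldsymbol{F}^\top\boldsymbol{R}_i)\}=C(\boldsymbol{F},k)\exp\{<\boldsymbol{R}_i,k\boldsymbol{F}>\}$, and add the exponents:
\[
f(\boldsymbol{R}_i\mid k,\boldsymbol{F},\boldsymbol{X}_i)\ \propto\ \exp\!\left\{<\boldsymbol{R}_i,\ \boldsymbol{X}_i^\top\boldsymbol{\Sigma}_n^{-1}\boldsymbol{M}\boldsymbol{\Sigma}_m^{-1}+k\boldsymbol{F}>\right\}\ =\ \exp\!\left\{\tr\!\left((\boldsymbol{F}^\star)^\top\boldsymbol{R}_i\right)\right\},
\]
with $\boldsymbol{F}^\star$ as in \eqref{eq:eq1111}. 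This is precisely the kernel of a von Mises--Fisher density over $\mathcal{O}(m)$; since that family integrates to one, the missing constant is forced to be the corresponding $C(\boldsymbol{F}^\star,1)$, so the posterior is von Mises--Fisher with location parameter $\boldsymbol{F}^\star$, which is the claimed conjugacy. The main obstacle is the reduction flagged in the second paragraph: one must check that the quadratic-in-$\boldsymbol{R}_i$ part of the Gaussian exponent, and jointly the determinant of the Kronecker-product covariance, collapse to constants, so that the likelihood sees $\boldsymbol{R}_i$ only through the linear Frobenius inner product; once that is secured, conjugacy is immediate by pattern matching. A lesser, purely bookkeeping point is to keep the positive multiplicative constants ($1/\alpha_i$, the factor $\tfrac12$, and any scalar contained in $\boldsymbol{\Sigma}_m^{-1}$) out of $\boldsymbol{F}^\star$ and inside the concentration parameter, consistent with the location/concentration trade-off of the von Mises--Fisher parametrisation.
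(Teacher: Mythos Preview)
Your proof is correct and follows essentially the same route as the paper's: write the posterior as likelihood times prior, expand the matrix-normal exponent, discard the $\boldsymbol{R}_i$-free pieces (the determinant, the pure-$\boldsymbol{M}$ term, and the quadratic term, the last handled exactly as in the proof of Theorem~\ref{thm:R}), and recognise the remaining linear Frobenius inner product as a von Mises--Fisher kernel with location $\boldsymbol{F}^\star$. The paper's version is terser---it bundles all the $\boldsymbol{R}_i$-free residue into a single $\Psi_i=f(\boldsymbol{X}_i)$ without separately treating the determinant or the quadratic collapse---but the structure of the argument is identical.
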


The posterior location parameter is the sum of $\boldsymbol{X}_i^\top \boldsymbol{\Sigma}_n^{-1} \boldsymbol{M} \boldsymbol{\Sigma}_m^{-1}$ and the prior location parameter $\boldsymbol{F}$ multiplied by $k$.
Consider the singular value decomposition of $\boldsymbol{X}_i^\top \boldsymbol{\Sigma}_n^{-1} \boldsymbol{M} \boldsymbol{\Sigma}_m^{-1}$:
\begin{align}
	\boldsymbol{X}_i^\top \boldsymbol{\Sigma}_n^{-1} \boldsymbol{M} \boldsymbol{\Sigma}_m^{-1} = \boldsymbol{U}_i \boldsymbol{D}_i \boldsymbol{V}_i^\top = \boldsymbol{U}_i \boldsymbol{V}_i^\top \boldsymbol{V}_i \boldsymbol{D}_i \boldsymbol{V}_i^\top.
	\label{eq:eq10}
\end{align}
The right part of \eqref{eq:eq10} $\boldsymbol{V}_i \boldsymbol{D}_i \boldsymbol{V}_i^\top$ is the elliptical part of $\boldsymbol{X}_i^\top \boldsymbol{\Sigma}_n^{-1} \boldsymbol{M} \boldsymbol{\Sigma}_m^{-1}$, which is a measure of variation relative to the decomposition $\boldsymbol{U}_i \boldsymbol{V}_i^\top$ \gramm{(}i.e., the maximum likelihood estimator of $\boldsymbol{R}_i$\gramm{)}. 
Focus on the right part of \eqref{eq:eq1111}, $\boldsymbol{F}=\boldsymbol{P} \boldsymbol{K}$\gramm{, which is} the \gramm{p}olar \gramm{d}ecomposition of $\boldsymbol{F}$, where $\boldsymbol{P}$ is the mode of the von Mises-Fisher distribution, and $\boldsymbol{K}$ \gramm{is} its measure of variation. Therefore, $\boldsymbol{F}^\star$ is expressed as a combination \gramm{of} the maximum likelihood estimate $\boldsymbol{\hat{R}}_i = \boldsymbol{U}_i \boldsymbol{V}_i^\top$ and the prior mode $\boldsymbol{P}$, multiplied by corresponding measures of variation.

Thanks to the conjugacy, the estimation process remains simple\gramm{;} with a small modification we keep the previous algorithm without increasing the computational burden.

\subsection{Estimation of the ProMises Model}\label{orthogonal-matrix-estimation}
This section delineates the estimation process for $\boldsymbol{R}_i$ using the ProMises method. First of all, $f(\boldsymbol{X_i} | \alpha_i, \boldsymbol{R_i})$ depends only on the product $\alpha_i \boldsymbol{R_i}$, we thus refer to the distribution $f(\boldsymbol{X_i} | \alpha_i \boldsymbol{R_i})$ instead of $f(\boldsymbol{X_i} | \alpha_i, \boldsymbol{R_i})$ defined in \eqref{covN_new}. The following density is then considered \gramm{as prior distribution for the product $\alpha_i \boldsymbol{R_i}$}:
\begin{align}
	f(\alpha_i \boldsymbol{R_i}) \sim \exp\Big\{\dfrac{k}{\alpha_i}tr(\boldsymbol{F}^\top \boldsymbol{R}_i)\Big\}\alpha_i^{-1}.
	\label{prior}
\end{align}

The following theorem delineates the estimation of $\boldsymbol{R}_i$ with known nuisance parameters\gramm{:}
\begin{theorem}\label{thm3}
	The ProMises model is defined as the perturbation model \eqref{eq:eq15_new} imposing the prior distribution \eqref{prior} for $\alpha_i \boldsymbol{R}_i$. Let the singular value decomposition of $\boldsymbol{X}_i^\top \boldsymbol{\Sigma}_n^{-1} \boldsymbol{M} \boldsymbol{\Sigma}_m^{-1} + k \boldsymbol{F}$ be $\boldsymbol{U}_i \boldsymbol{D}_i \boldsymbol{V}_i^\top$. Then\gramm{,} the maximum a posteriori estimators equal $\boldsymbol{\hat{R}^{'}}_i = \boldsymbol{U}_i \boldsymbol{V}_i^\top$ and \\ $\hat{\alpha_i}_{\boldsymbol{\hat{R}^{'}}_i}^{'}=||\boldsymbol{\Sigma}_m^{-1/2} \boldsymbol{\hat{R}^{'\top}}_i \boldsymbol{X}_i^\top \boldsymbol{\Sigma}_n^{-1/2}||^2/tr(\boldsymbol{D}_i)$.
\end{theorem}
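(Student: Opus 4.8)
The plan is to show that the prior \eqref{prior} changes the estimation of Theorem \ref{thm:R} only by replacing the cross-product $\boldsymbol{X}_i^\top\boldsymbol{\Sigma}_n^{-1}\boldsymbol{M}\boldsymbol{\Sigma}_m^{-1}$ throughout with $\boldsymbol{F}^\star$ from \eqref{eq:eq1111}. First I would write the kernel of the log-posterior for $(\alpha_i,\boldsymbol{R}_i)$ as the log-likelihood of \eqref{covN_new} plus the log of \eqref{prior}. Substituting $\boldsymbol{E}_i=\alpha_i^{-1}\boldsymbol{X}_i\boldsymbol{R}_i-\boldsymbol{M}$ into the matrix-normal density of $\boldsymbol{E}_i$ and expanding $tr(\boldsymbol{\Sigma}_m^{-1}\boldsymbol{E}_i^\top\boldsymbol{\Sigma}_n^{-1}\boldsymbol{E}_i)$, I would use cyclicity of the trace together with $\boldsymbol{R}_i^\top\boldsymbol{R}_i=\boldsymbol{R}_i\boldsymbol{R}_i^\top=\boldsymbol{I}_m$ to collapse the two mixed terms into $-2\alpha_i^{-1}<\boldsymbol{R}_i,\boldsymbol{X}_i^\top\boldsymbol{\Sigma}_n^{-1}\boldsymbol{M}\boldsymbol{\Sigma}_m^{-1}>$, to reduce the term quadratic in $\boldsymbol{M}$ to the $\boldsymbol{R}_i$-free constant $tr(\boldsymbol{\Sigma}_m^{-1}\boldsymbol{M}^\top\boldsymbol{\Sigma}_n^{-1}\boldsymbol{M})$, and to identify the remaining piece with $\alpha_i^{-2}||\boldsymbol{\Sigma}_m^{-1/2}\boldsymbol{R}_i^\top\boldsymbol{X}_i^\top\boldsymbol{\Sigma}_n^{-1/2}||^2$ (transpose-invariance of the Frobenius norm). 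This is exactly the computation underlying Theorem \ref{thm:R}.

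The crucial point is that the prior \eqref{prior} contributes precisely $k\alpha_i^{-1}tr(\boldsymbol{F}^\top\boldsymbol{R}_i)$, which carries the same $\alpha_i^{-1}$ factor as the mixed likelihood term — this is the reason \eqref{prior} is written with $k/\alpha_i$ rather than $k$. Summing the two gives $\alpha_i^{-1}tr\big(\boldsymbol{R}_i^\top(\boldsymbol{X}_i^\top\boldsymbol{\Sigma}_n^{-1}\boldsymbol{M}\boldsymbol{\Sigma}_m^{-1}+k\boldsymbol{F})\big)=\alpha_i^{-1}tr(\boldsymbol{R}_i^\top\boldsymbol{F}^\star)$, so the objective to be maximized over $(\alpha_i,\boldsymbol{R}_i)$ is, up to terms not depending on $\boldsymbol{R}_i$, of exactly the form appearing in Theorem \ref{thm:R} with the cross-product replaced by $\boldsymbol{F}^\star$; equivalently, by the conjugacy of Lemma \ref{lemma3}, the $\boldsymbol{R}_i$-part of the posterior is a von Mises-Fisher law with location $\alpha_i^{-1}\boldsymbol{F}^\star$.

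It then remains to carry out the joint maximization. For the orthogonal parameter I would solve the orthogonal Procrustes problem $\max_{\boldsymbol{R}_i\in\mathcal{O}(m)}tr(\boldsymbol{R}_i^\top\boldsymbol{F}^\star)$: writing the singular value decomposition $\boldsymbol{F}^\star=\boldsymbol{U}_i\boldsymbol{D}_i\boldsymbol{V}_i^\top$, cyclicity gives $tr(\boldsymbol{R}_i^\top\boldsymbol{F}^\star)=tr\big((\boldsymbol{V}_i^\top\boldsymbol{R}_i^\top\boldsymbol{U}_i)\boldsymbol{D}_i\big)\le tr(\boldsymbol{D}_i)$ because $\boldsymbol{V}_i^\top\boldsymbol{R}_i^\top\boldsymbol{U}_i$ is orthogonal and hence has diagonal entries bounded by $1$, with equality iff $\boldsymbol{V}_i^\top\boldsymbol{R}_i^\top\boldsymbol{U}_i=\boldsymbol{I}_m$, i.e. $\boldsymbol{\hat{R}^{'}}_i=\boldsymbol{U}_i\boldsymbol{V}_i^\top$ — the orthogonal factor of the polar decomposition of $\boldsymbol{F}^\star$, which, as recalled in Section \ref{lmp}, is the mode of the posterior von Mises-Fisher density. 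Inserting $\boldsymbol{\hat{R}^{'}}_i$ and noting $tr(\boldsymbol{\hat{R}^{'\top}}_i\boldsymbol{F}^\star)=tr(\boldsymbol{D}_i)$, the objective reduces to $\alpha_i^{-1}tr(\boldsymbol{D}_i)-\tfrac12\alpha_i^{-2}||\boldsymbol{\Sigma}_m^{-1/2}\boldsymbol{\hat{R}^{'\top}}_i\boldsymbol{X}_i^\top\boldsymbol{\Sigma}_n^{-1/2}||^2$; differentiating in $\alpha_i$ and setting the derivative to zero yields $\hat{\alpha_i}_{\boldsymbol{\hat{R}^{'}}_i}^{'}=||\boldsymbol{\Sigma}_m^{-1/2}\boldsymbol{\hat{R}^{'\top}}_i\boldsymbol{X}_i^\top\boldsymbol{\Sigma}_n^{-1/2}||^2/tr(\boldsymbol{D}_i)$, as claimed.

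The step I expect to be the real obstacle is the coupling between $\alpha_i$ and $\boldsymbol{R}_i$ in the objective: the scale enters both linearly, through $\alpha_i^{-1}tr(\boldsymbol{R}_i^\top\boldsymbol{F}^\star)$, and quadratically, through $\alpha_i^{-2}||\boldsymbol{\Sigma}_m^{-1/2}\boldsymbol{R}_i^\top\boldsymbol{X}_i^\top\boldsymbol{\Sigma}_n^{-1/2}||^2$, so one must justify that the maximizer in $\boldsymbol{R}_i$ can be determined before $\alpha_i$ is profiled out. This is settled exactly as in the proof of Theorem \ref{thm:R}, and the prior adds no new difficulty at this point, since it merely rescales $\boldsymbol{F}^\star$ by the positive factor $\alpha_i^{-1}$ and therefore leaves its singular vectors — hence the Procrustes solution $\boldsymbol{U}_i\boldsymbol{V}_i^\top$ — unchanged for every value of $\alpha_i$.
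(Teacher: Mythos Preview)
Your overall strategy coincides with the paper's: write the log-posterior, recognize that the prior merely adds $k\,\mathrm{tr}(\boldsymbol{F}^\top\boldsymbol{R}_i)$ to the cross-product term, solve the resulting orthogonal Procrustes problem via the SVD of $\boldsymbol{F}^\star$, and then profile out $\alpha_i$. The derivation of $\boldsymbol{\hat{R}'}_i=\boldsymbol{U}_i\boldsymbol{V}_i^\top$ is essentially identical in both proofs.

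There is, however, one omission in your treatment of $\alpha_i$. You say the prior \eqref{prior} ``contributes precisely $k\alpha_i^{-1}\mathrm{tr}(\boldsymbol{F}^\top\boldsymbol{R}_i)$'', but the density in \eqref{prior} also carries the factor $\alpha_i^{-1}$ outside the exponential, so the log-prior contains an additional $-\log\alpha_i$. With that term retained, the profiled objective in $\alpha_i$ is $\alpha_i^{-1}\mathrm{tr}(\boldsymbol{D}_i)-\tfrac{1}{2}\alpha_i^{-2}\|\boldsymbol{\Sigma}_m^{-1/2}\boldsymbol{\hat{R}^{'\top}}_i\boldsymbol{X}_i^\top\boldsymbol{\Sigma}_n^{-1/2}\|^2-\log\alpha_i$, and setting the derivative to zero gives a genuine quadratic in $\alpha_i^{-1}$ rather than a linear equation. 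The paper handles this by invoking Vi\'ete's theorem and the side condition $\mathrm{tr}(\boldsymbol{D}_i)/\|\boldsymbol{\Sigma}_m^{-1/2}\boldsymbol{\hat{R}^{'\top}}_i\boldsymbol{X}_i^\top\boldsymbol{\Sigma}_n^{-1/2}\|^2\gg 1/\mathrm{tr}(\boldsymbol{D}_i)$, under which the positive root reduces to the stated formula. Your clean first-order computation therefore reproduces the claimed $\hat{\alpha_i}'$ only because the $-\log\alpha_i$ term was dropped; to match the paper you should either reinstate it and argue the approximation, or at least flag that the displayed estimator for $\alpha_i$ is obtained under that asymptotic condition.
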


The prior information about $\boldsymbol{R}_i$'s structure \gramm{is directly entered} in the singular value decomposition step\gramm{;} the maximum a posteriori estimator turns out to be a slight modification of the solution given in Theorem \ref{thm:R}. We decompose $\boldsymbol{X}_i^\top \boldsymbol{\Sigma}_n^{-1} \boldsymbol{M} \boldsymbol{\Sigma}_m^{-1} + k \boldsymbol{F}$ instead of $\boldsymbol{X}_i^\top \boldsymbol{\Sigma}_n^{-1} \boldsymbol{M} \boldsymbol{\Sigma}_m^{-1}$.

Let $\{\boldsymbol{X}_{i}  \in {\rm I\!R}^{n \times m} \}_{i = 1,\dots,N}$ be a set of independent matrices. Then the joint posterior distribution is simply the product of the single posterior distribution. 

If $\boldsymbol{M}$, $\boldsymbol{\Sigma}_n$, \gramm{and} $\boldsymbol{\Sigma}_m$ are unknown, the maximization problem has no closed-form solution, \gramm{like} in Section \ref{krm_GPA}. \gramm{Because} we proved that the prior specification modifies only the singular value decomposition step of Theorem \ref{thm:R}, we then modify the \gramm{L}ine \ref{alg:svd} of Algorithm \ref{alg:algo}\gramm{, as follows}:


\begin{algorithm}

	\caption{ProMises model.
	}\label{alg:algo_promises}
 
	Use Algorithm \ref{alg:algo}, only modify:
	\begin{algorithmic}[1]
	\setalglineno{5}
	\State $\boldsymbol{U}_i \boldsymbol{D}_i \boldsymbol{V}_i^\top = \texttt{SVD}(\boldsymbol{X_i}^\top \boldsymbol{\hat{\Sigma}_n}^{-1} \boldsymbol{\hat{M}} \boldsymbol{\hat{\Sigma}_m}^{-1} + k \boldsymbol{F})$\Comment{\small Singular value decomposition }
	\end{algorithmic}
\end{algorithm}

\subsection{On the Choice of the Parameter of the von Mises-Fisher Distribution}\label{choice}
We choose the von Mises-Fisher distribution as prior distribution for $\boldsymbol{R}_i$ \gramm{for its} useful and practical properties. First, as shown, it is a conjugate prior distribution, leading to a direct calculation and interpretation of $\boldsymbol{\hat{R}^{'}}_i$. Second, it expresses the orthogonality constraint imposed by the Procrustes problem. Finally, the definition of $\boldsymbol{F}$ does not require strong assumptions \citep{Downs}; nevertheless, if we specify it as a full-rank matrix, we guarantee \gramm{solution's uniqueness}. This permits formulat\gramm{ion of} the \gramm{below} lemma.

\begin{lemma}\label{thm:un}
	If $\boldsymbol{F}$ has full rank, the maximum a posteriori estimates for $\boldsymbol{R}_i$ given by Theorem \ref{thm3} are unique.
\end{lemma}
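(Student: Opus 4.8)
The plan is to reduce the statement to the uniqueness of the orthogonal factor in a polar decomposition. By Theorem~\ref{thm3}, with $\boldsymbol{M},\boldsymbol{\Sigma}_n,\boldsymbol{\Sigma}_m$ fixed the maximum a posteriori estimate $\hat{\boldsymbol{R}}'_i$ equals $\boldsymbol{U}_i\boldsymbol{V}_i^\top$ for a singular value decomposition $\boldsymbol{F}^\star=\boldsymbol{U}_i\boldsymbol{D}_i\boldsymbol{V}_i^\top$ of $\boldsymbol{F}^\star=\boldsymbol{X}_i^\top\boldsymbol{\Sigma}_n^{-1}\boldsymbol{M}\boldsymbol{\Sigma}_m^{-1}+k\boldsymbol{F}$; equivalently, by the same trace reduction used in Theorem~\ref{thm:R} and Lemma~\ref{thm:u2} (now applied to the posterior location parameter of Lemma~\ref{lemma3}), $\hat{\boldsymbol{R}}'_i$ maximizes $\langle\boldsymbol{R}_i,\boldsymbol{F}^\star\rangle=\mathrm{tr}(\boldsymbol{R}_i^\top\boldsymbol{F}^\star)$ over $\boldsymbol{R}_i\in\mathcal{O}(m)$. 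So it suffices to show that this linear program has a single maximizer — i.e.\ that the product $\boldsymbol{U}_i\boldsymbol{V}_i^\top$ is independent of the chosen decomposition — whenever $\boldsymbol{F}^\star$ is invertible.

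For the core step I would reduce via the singular value decomposition: writing $\boldsymbol{F}^\star=\boldsymbol{U}_i\boldsymbol{D}_i\boldsymbol{V}_i^\top$ and substituting $\boldsymbol{Q}=\boldsymbol{V}_i^\top\boldsymbol{R}_i^\top\boldsymbol{U}_i$, which ranges over all of $\mathcal{O}(m)$ as $\boldsymbol{R}_i$ does, gives $\mathrm{tr}(\boldsymbol{R}_i^\top\boldsymbol{F}^\star)=\mathrm{tr}(\boldsymbol{Q}\boldsymbol{D}_i)=\sum_{j=1}^{m}Q_{jj}(\boldsymbol{D}_i)_{jj}$. Each diagonal entry of an orthogonal matrix satisfies $Q_{jj}\le 1$, and each $(\boldsymbol{D}_i)_{jj}\ge 0$, so $\mathrm{tr}(\boldsymbol{Q}\boldsymbol{D}_i)\le\mathrm{tr}(\boldsymbol{D}_i)$, with equality exactly when $Q_{jj}=1$ for every index $j$ with $(\boldsymbol{D}_i)_{jj}>0$. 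If $\boldsymbol{F}^\star$ has full rank then every singular value is strictly positive, so equality forces $Q_{jj}=1$ for all $j$; an orthogonal matrix all of whose diagonal entries equal $1$ is necessarily $\boldsymbol{I}_m$ (each row is a unit vector carrying a $1$ in its diagonal slot). Hence $\boldsymbol{Q}=\boldsymbol{I}_m$ and $\hat{\boldsymbol{R}}'_i=\boldsymbol{U}_i\boldsymbol{V}_i^\top$ is the unique maximizer; equivalently it is the orthogonal factor of the polar decomposition $\boldsymbol{F}^\star=(\boldsymbol{U}_i\boldsymbol{V}_i^\top)(\boldsymbol{V}_i\boldsymbol{D}_i\boldsymbol{V}_i^\top)$, which is well defined precisely because $\boldsymbol{F}^\star$ is invertible — even when the singular values are repeated and $\boldsymbol{U}_i,\boldsymbol{V}_i$ individually are not.

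The delicate point, and what I expect to be the main obstacle, is that the hypothesis guarantees full rank of $\boldsymbol{F}$ while the argument needs full rank of $\boldsymbol{F}^\star=\boldsymbol{X}_i^\top\boldsymbol{\Sigma}_n^{-1}\boldsymbol{M}\boldsymbol{\Sigma}_m^{-1}+k\boldsymbol{F}$, and the sum of a full-rank matrix with an arbitrary one can fail to be full rank. Setting $\boldsymbol{A}=\boldsymbol{X}_i^\top\boldsymbol{\Sigma}_n^{-1}\boldsymbol{M}\boldsymbol{\Sigma}_m^{-1}$ and factoring $\boldsymbol{F}^\star=\boldsymbol{F}(\boldsymbol{F}^{-1}\boldsymbol{A}+k\boldsymbol{I}_m)$, one sees that $\boldsymbol{F}^\star$ is singular if and only if $-k$ is an eigenvalue of $\boldsymbol{F}^{-1}\boldsymbol{A}$, which occurs for at most $m$ values of $k$; hence for every $k>0$ outside a finite set — and, since $\boldsymbol{A}$ is a continuous function of the absolutely continuous random matrix $\boldsymbol{X}_i$, with probability one — $\boldsymbol{F}^\star$ has full rank and the conclusion of the lemma holds. (A deterministic sufficient condition is $k\,\sigma_{\min}(\boldsymbol{F})>\sigma_{\max}(\boldsymbol{A})$, since then $\sigma_{\min}(\boldsymbol{F}^\star)\ge k\,\sigma_{\min}(\boldsymbol{F})-\sigma_{\max}(\boldsymbol{A})>0$.) I would state the lemma with this qualification; the remainder is the standard von Neumann/Procrustes trace inequality above.
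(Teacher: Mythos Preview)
Your core argument---reduce to the maximization of $\mathrm{tr}(\boldsymbol{R}_i^\top\boldsymbol{F}^\star)$ over $\mathcal{O}(m)$ and show the maximizer is unique precisely when $\boldsymbol{F}^\star$ has full rank, via the trace inequality / polar decomposition---is the same idea the paper uses; the paper simply outsources the ``unique iff full rank'' step to Lemma~\ref{thm:u1} rather than writing out the $\boldsymbol{Q}=\boldsymbol{V}_i^\top\boldsymbol{R}_i^\top\boldsymbol{U}_i$ computation as you do. One ingredient the paper includes that you do not: it first revisits the argument of Lemma~\ref{thm:u2} to check that the $\boldsymbol{Z}$-rotation ambiguity (which arises when $\boldsymbol{M},\boldsymbol{\Sigma}_m$ are unknown and estimated jointly, so that $\boldsymbol{M}\mapsto\boldsymbol{M}\boldsymbol{Z}$, $\boldsymbol{\Sigma}_m\mapsto\boldsymbol{Z}^\top\boldsymbol{\Sigma}_m\boldsymbol{Z}$ is a reparametrization) no longer leaves the objective invariant once the prior term $k\,\mathrm{tr}(\boldsymbol{F}^\top\boldsymbol{R}_i)$ is present. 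Your proof treats only the fixed-nuisance case addressed by Theorem~\ref{thm3}.

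More interestingly, the gap you flag---that full rank of $\boldsymbol{F}$ does not literally force full rank of $\boldsymbol{F}^\star=\boldsymbol{A}+k\boldsymbol{F}$---is real, and the paper simply asserts the implication (``If $\boldsymbol{F}$ is defined with full rank, so $\tilde{\boldsymbol{X}}_i^\top\tilde{\boldsymbol{M}}+k\boldsymbol{F}$'') without justification. Your qualifications (at most finitely many bad values of $k$; almost-sure full rank when $\boldsymbol{X}_i$ is absolutely continuous; the deterministic sufficient condition $k\,\sigma_{\min}(\boldsymbol{F})>\sigma_{\max}(\boldsymbol{A})$) are a genuine sharpening of the paper's treatment of this point.
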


\begin{proof}
	Consider the proof of Lemma \ref{thm:u2}. Multiplying by $\boldsymbol{Z}$ leads to the following maximization:
	\begin{align*}
	\sum_{i=1}^{N} tr(\boldsymbol{Z}^\top \boldsymbol{R}_i^\top (\boldsymbol{X}_i^\top \boldsymbol{\Sigma}_n^{-1} \boldsymbol{M} \boldsymbol{Z} \boldsymbol{Z}^\top \boldsymbol{\Sigma}_m^{-1} \boldsymbol{Z} + k \boldsymbol{F})) \ne \sum_{i=1}^{N} tr( \boldsymbol{R}_i^\top (\boldsymbol{X}_i^\top \boldsymbol{\Sigma}_n^{-1} \boldsymbol{M}  \boldsymbol{\Sigma}_m^{-1} + k \boldsymbol{F}))
	\end{align*} 
	since the cyclic permutation invariance property of the trace does not work as Lemma \ref{thm:u2} having the additional term $k tr( \boldsymbol{F} \boldsymbol{R}_i)$. 
	
	In addition, recalling Lemma \ref{thm:u1}, the solution for $\boldsymbol{R}_i$ is unique if and only if $\boldsymbol{X}_i^\top \boldsymbol{\Sigma}_n^{-1} \boldsymbol{M} \boldsymbol{\Sigma}_m^{-1}+ k \boldsymbol{F}$ has full rank. If $\boldsymbol{F}$ is defined with full rank, so $\boldsymbol{\tilde{X}}_i^\top \boldsymbol{\tilde{M}} + k \boldsymbol{F}$, and the solution for $R_i$ is unique. Furthermore, recalling \cite{Jupp}, the mode of the von Mises-Fisher is the orientation part of location matrix parameter. \gramm{Because} the polar decomposition of $\boldsymbol{F}^\star$ is unique, the maximum a posteriori estimate is unique.
\end{proof}

To sum up, the ProMises model enables resolving the non-identifiability of $\boldsymbol{R_i}$ that characterizes the perturbation model. The prior information inserted in the model permits \gramm{guidance of} the estimation process, computing a unique and interpretable data orthogonal transformation. Finally, all these properties are reached without complicating the estimation process of the perturbation model, we only modify the singular value decomposition step of Algorithm \ref{alg:algo}.

\section{Efficient ProMises Model}\label{light}
The framework depicted above can be applied both in low and high-dimensional settings. However, the extension to the high-dimensional case does not come for free if the perturbation model is used. When $n<m$, rank equals $n$, the identifiability of the solution is lost even when the nuisance parameters are known. The lemma below \gramm{formally} states\gramm{:}

\begin{lemma}\label{thm:u1}
	Consider $\boldsymbol{X}_i \in {\rm I\!R}^{n \times m}$, if $n < m$, then the maximum likelihood estimate for $\boldsymbol{R}_i$ defined in Theorem \ref{thm:R} is not unique.
\end{lemma}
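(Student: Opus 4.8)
The plan is to exploit the rank deficiency of the cross-product matrix that drives the singular value decomposition in Theorem \ref{thm:R}. Set $\boldsymbol{A}_i := \boldsymbol{X}_i^\top \boldsymbol{\Sigma}_n^{-1} \boldsymbol{M} \boldsymbol{\Sigma}_m^{-1} \in {\rm I\!R}^{m \times m}$. Because $\boldsymbol{X}_i^\top$ has only $n$ columns, $\mathrm{rank}(\boldsymbol{A}_i) \le n < m$, so in any singular value decomposition $\boldsymbol{A}_i = \boldsymbol{U}_i \boldsymbol{D}_i \boldsymbol{V}_i^\top$ the diagonal matrix $\boldsymbol{D}_i$ has at least $m - n \ge 1$ vanishing entries. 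As established in the proof of Theorem \ref{thm:R}, the maximum likelihood estimate $\boldsymbol{\hat{R}}_i$ is any maximizer over $\mathcal{O}(m)$ of $\mathrm{tr}(\boldsymbol{R}_i^\top \boldsymbol{A}_i)$, the maximal value being $\mathrm{tr}(\boldsymbol{D}_i)$, and $\boldsymbol{U}_i \boldsymbol{V}_i^\top$ attains it. I will show this maximizer is far from unique.

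First I would make the SVD ambiguity at the zero singular values explicit. Partition $\boldsymbol{U}_i = [\boldsymbol{U}_{i,1}\ \boldsymbol{U}_{i,2}]$ and $\boldsymbol{V}_i = [\boldsymbol{V}_{i,1}\ \boldsymbol{V}_{i,2}]$, where $\boldsymbol{U}_{i,1}, \boldsymbol{V}_{i,1} \in {\rm I\!R}^{m \times n}$ collect the columns paired with the (possibly nonzero) singular values and $\boldsymbol{U}_{i,2}, \boldsymbol{V}_{i,2} \in {\rm I\!R}^{m \times (m-n)}$ the remaining ones; accordingly $\boldsymbol{D}_i = \mathrm{diag}(\boldsymbol{D}_{i,1}, \boldsymbol{0}_{m-n})$. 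For any $\boldsymbol{W} \in \mathcal{O}(m-n)$ put
\[
\boldsymbol{R}_i(\boldsymbol{W}) := \boldsymbol{U}_i \begin{bmatrix} \boldsymbol{I}_n & \boldsymbol{0} \\ \boldsymbol{0} & \boldsymbol{W} \end{bmatrix} \boldsymbol{V}_i^\top = \boldsymbol{U}_{i,1} \boldsymbol{V}_{i,1}^\top + \boldsymbol{U}_{i,2} \boldsymbol{W} \boldsymbol{V}_{i,2}^\top .
\]
Then $\boldsymbol{R}_i(\boldsymbol{W}) \in \mathcal{O}(m)$ as a product of orthogonal matrices, and a one-line trace computation using that the lower-right block of $\boldsymbol{D}_i$ is zero gives $\mathrm{tr}(\boldsymbol{R}_i(\boldsymbol{W})^\top \boldsymbol{A}_i) = \mathrm{tr}(\mathrm{diag}(\boldsymbol{I}_n, \boldsymbol{W}^\top) \boldsymbol{D}_i) = \mathrm{tr}(\boldsymbol{D}_{i,1}) = \mathrm{tr}(\boldsymbol{D}_i)$. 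Hence every $\boldsymbol{R}_i(\boldsymbol{W})$ is a valid maximum likelihood estimate (equivalently, $\boldsymbol{A}_i = [\boldsymbol{U}_{i,1}\ \boldsymbol{U}_{i,2}\boldsymbol{W}]\,\boldsymbol{D}_i\,\boldsymbol{V}_i^\top$ is another legitimate SVD of $\boldsymbol{A}_i$).

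Finally I would confirm these estimates are genuinely distinct. Choosing $\boldsymbol{W} = \boldsymbol{I}_{m-n}$ and $\boldsymbol{W} = -\boldsymbol{I}_{m-n}$, both in $\mathcal{O}(m-n)$ since $m - n \ge 1$, yields $\boldsymbol{R}_i(\boldsymbol{I}) - \boldsymbol{R}_i(-\boldsymbol{I}) = 2\boldsymbol{U}_{i,2}\boldsymbol{V}_{i,2}^\top$, and using $\boldsymbol{U}_{i,2}^\top\boldsymbol{U}_{i,2} = \boldsymbol{V}_{i,2}^\top\boldsymbol{V}_{i,2} = \boldsymbol{I}_{m-n}$ we get $||\boldsymbol{U}_{i,2}\boldsymbol{V}_{i,2}^\top||^2 = \mathrm{tr}(\boldsymbol{V}_{i,2}\boldsymbol{U}_{i,2}^\top\boldsymbol{U}_{i,2}\boldsymbol{V}_{i,2}^\top) = \mathrm{tr}(\boldsymbol{V}_{i,2}\boldsymbol{V}_{i,2}^\top) = m - n > 0$, so the two are different matrices in $\mathcal{O}(m)$, both maximizing the likelihood. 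This establishes non-uniqueness. The only point requiring care is the bookkeeping in the middle step --- checking that replacing the padding blocks by an arbitrary rotation still produces an SVD of $\boldsymbol{A}_i$ and preserves the maximum of the Procrustes objective --- but this is exactly the classical non-uniqueness of the SVD in the presence of repeated (here, zero) singular values, so no genuine obstacle is expected; any further degeneracy that lowers $\mathrm{rank}(\boldsymbol{A}_i)$ below $n$ only enlarges the family of solutions.
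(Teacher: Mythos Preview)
Your argument is correct. The paper's own proof takes a different, much terser route: it simply observes that the Procrustes objective is $\max_{\boldsymbol{R}_i \in \mathcal{O}(m)} \mathrm{tr}(\boldsymbol{A}_i^\top \boldsymbol{R}_i)$ with $\boldsymbol{A}_i = \boldsymbol{X}_i^\top \boldsymbol{\Sigma}_n^{-1} \boldsymbol{M} \boldsymbol{\Sigma}_m^{-1}$, then cites results of Trendafilov--Lippert and Myronenko--Song stating that the maximizer is unique if and only if $\boldsymbol{A}_i$ has full rank, and notes that $\mathrm{rank}(\boldsymbol{A}_i) \le n < m$. Your approach is more elementary and fully self-contained: rather than invoking an external characterization, you explicitly construct the $(m-n)$-parameter family $\boldsymbol{R}_i(\boldsymbol{W})$ of optimizers by rotating within the null singular block, verify each attains $\mathrm{tr}(\boldsymbol{D}_i)$, and exhibit two distinct members. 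What you gain is a proof that does not depend on outside references and that makes the \emph{structure} of the non-uniqueness transparent (an $\mathcal{O}(m-n)$ orbit of solutions); what the paper's citation-based proof buys is brevity and a sharp if-and-only-if statement that also covers the converse direction, which you do not need here.
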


\angela{Although} the ProMises model provides unique solutions even in high-dimensional frameworks, a second issue remains prominent: the computational load. At each step, the presented algorithms perform $N$ singular values decompositions of $m\times m$ matrices, which have a polynomial-time complexity $O(m^3)$. When $m$ becomes large, as in fMRI data where $m$ is a few hundred thousands, the computation runtime, \gramm{and} the required storing memory, \gramm{become} inadmissible.

This section proposes the Efficient ProMises model, which resolves the two \gramm{above} points. The method is efficient in terms of space and time complexity and fixes the non-identifiability of $\boldsymbol{R}_i$. The algorithm allows a faster and more accessible shape analysis without loss of information in the case of $n \ll m$. It essentially merges the thin singular value decomposition \citep{bai2000templates} with the Procrustes problem.

In practice, the Efficient ProMises approach projects  the matrices $\boldsymbol{X}_i$ into a\gramm{n} $n$\gramm{-}lower-dimensional space \gramm{using} a specific semi-orthogonal transformation \citep{Abadir, Grob} $\boldsymbol{Q}_i$, with dimensions $m \times n$, which preserve all the data's information. It aligns, then, the reduced $n\times n$ matrices $\{\boldsymbol{X}_i \boldsymbol{Q}_i \in {\rm I\!R}^{n \times n }\}_{i = 1, \dots, N}$ by the perturbation or ProMises model. Finally, it projects the aligned matrices \gramm{back} to the original $n\times m$-size matrices $\{\boldsymbol{X}_i \in {\rm I\!R}^{n \times m }\}_{i = 1, \dots, N}$ using the transpose of $\{\boldsymbol{Q}_i\}_{i = 1, \dots, N}$.

The following theorem proves that the maximum \angela{defined in Equation \eqref{min}} using $\{\boldsymbol{X}_i \boldsymbol{Q}_i \in {\rm I\!R}^{n \times n }\}_{i = 1, \dots, N}$ equals the original \gramm{maximum} \gramm{because} the Procrustes problem analyzes the first $n \times n$ dimensions of $\boldsymbol{R}_i$. The maximum remains the same if we multiply $\{\boldsymbol{X}_i \in {\rm I\!R}^{n \times m }\}_{i = 1, \dots, N} $ by $\boldsymbol{Q}_i$.

\begin{theorem}\label{thm1}
	Consider the perturbation model in Definition \ref{Procrustes_new} with $\boldsymbol{\Sigma}_m = \sigma^2 \boldsymbol{I}_m$ and the thin singular value decompositions of $\boldsymbol{X}_i = \boldsymbol{L}_i \boldsymbol{S}_i \boldsymbol{Q}_i^\top$ for each $i = 1, \dots, N$, where $\boldsymbol{Q}_i$ has dimensions $m \times n$. The following holds
	\begin{align*}
		\max_{\boldsymbol{R}_i \in \mathcal{O}(m)} tr(\boldsymbol{R}_i^\top \boldsymbol{X}_i^\top \boldsymbol{\Sigma}_n^{-1} \boldsymbol{X}_j \boldsymbol{\Sigma}_m^{-1}) = \max_{\boldsymbol{R}_i^{\star} \in \mathcal{O}(n)} tr(\boldsymbol{R}_i^{ \star \top} \boldsymbol{Q}_i^\top \boldsymbol{X}_i^\top \boldsymbol{\Sigma}_n^{-1} \boldsymbol{X}_j \boldsymbol{\Sigma}_m^{-1} \boldsymbol{Q}_j^\top).
	\end{align*}
	
\end{theorem}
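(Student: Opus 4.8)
The plan is to read both sides as orthogonal--Procrustes maxima and to reduce each to the same $n\times n$ core matrix built from the thin singular value decompositions. First recall the fact underlying Theorem~\ref{thm:R}: for any square real matrix $\boldsymbol{A}$ of order $k$, $\max_{\boldsymbol{R}\in\mathcal{O}(k)}tr(\boldsymbol{R}^\top\boldsymbol{A})$ equals the sum of the singular values of $\boldsymbol{A}$ (write $\boldsymbol{A}=\boldsymbol{U}\boldsymbol{D}\boldsymbol{V}^\top$, so $tr(\boldsymbol{R}^\top\boldsymbol{A})=tr(\boldsymbol{V}^\top\boldsymbol{R}^\top\boldsymbol{U}\,\boldsymbol{D})$, and the diagonal entries of the orthogonal matrix $\boldsymbol{V}^\top\boldsymbol{R}^\top\boldsymbol{U}$ are bounded by one, with equality at $\boldsymbol{R}=\boldsymbol{U}\boldsymbol{V}^\top$). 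Applying this with $k=m$ to the left-hand side and with $k=n$ to the right-hand side --- where the trailing factor is the $m\times n$ matrix $\boldsymbol{Q}_j$, so that the argument of the trace is an $n\times n$ matrix --- it suffices to show that
\begin{align*}
\boldsymbol{X}_i^\top\boldsymbol{\Sigma}_n^{-1}\boldsymbol{X}_j\boldsymbol{\Sigma}_m^{-1}\qquad\text{and}\qquad\boldsymbol{Q}_i^\top\boldsymbol{X}_i^\top\boldsymbol{\Sigma}_n^{-1}\boldsymbol{X}_j\boldsymbol{\Sigma}_m^{-1}\boldsymbol{Q}_j
\end{align*}
have the same nonzero singular values.

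I would then substitute the thin decompositions $\boldsymbol{X}_i=\boldsymbol{L}_i\boldsymbol{S}_i\boldsymbol{Q}_i^\top$ and $\boldsymbol{X}_j=\boldsymbol{L}_j\boldsymbol{S}_j\boldsymbol{Q}_j^\top$, where $\boldsymbol{L}_i,\boldsymbol{L}_j\in\mathcal{O}(n)$, the matrices $\boldsymbol{S}_i,\boldsymbol{S}_j$ are $n\times n$ nonnegative diagonal, and $\boldsymbol{Q}_i^\top\boldsymbol{Q}_i=\boldsymbol{Q}_j^\top\boldsymbol{Q}_j=\boldsymbol{I}_n$, and use $\boldsymbol{\Sigma}_m^{-1}=\sigma^{-2}\boldsymbol{I}_m$. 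Writing $\boldsymbol{B}:=\sigma^{-2}\boldsymbol{S}_i\boldsymbol{L}_i^\top\boldsymbol{\Sigma}_n^{-1}\boldsymbol{L}_j\boldsymbol{S}_j\in{\rm I\!R}^{n\times n}$, the first matrix becomes $\boldsymbol{Q}_i\boldsymbol{B}\boldsymbol{Q}_j^\top$, while in the second the blocks $\boldsymbol{Q}_i^\top\boldsymbol{Q}_i$ and $\boldsymbol{Q}_j^\top\boldsymbol{Q}_j$ collapse to identities, leaving exactly $\boldsymbol{B}$. This is precisely where the hypothesis $\boldsymbol{\Sigma}_m=\sigma^2\boldsymbol{I}_m$ is used: for a general $\boldsymbol{\Sigma}_m$ the rows of $\boldsymbol{X}_j\boldsymbol{\Sigma}_m^{-1}$ need not lie in the row space of $\boldsymbol{Q}_j^\top$, so the trailing $\boldsymbol{\Sigma}_m^{-1}$ cannot be absorbed and the two matrices would no longer share their singular values.

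To conclude, I would check that $\boldsymbol{Q}_i\boldsymbol{B}\boldsymbol{Q}_j^\top$ and $\boldsymbol{B}$ have the same singular values up to appended zeros. Complete $\boldsymbol{Q}_i$ and $\boldsymbol{Q}_j$ to full orthogonal matrices $[\boldsymbol{Q}_i\;\boldsymbol{Q}_i^{\perp}],[\boldsymbol{Q}_j\;\boldsymbol{Q}_j^{\perp}]\in\mathcal{O}(m)$; then $\boldsymbol{Q}_i\boldsymbol{B}\boldsymbol{Q}_j^\top=[\boldsymbol{Q}_i\;\boldsymbol{Q}_i^{\perp}]\,\diag(\boldsymbol{B},\boldsymbol{0})\,[\boldsymbol{Q}_j\;\boldsymbol{Q}_j^{\perp}]^\top$, and since left and right multiplication by orthogonal matrices leaves singular values unchanged, the singular values of $\boldsymbol{Q}_i\boldsymbol{B}\boldsymbol{Q}_j^\top$ are those of $\boldsymbol{B}$ together with $m-n$ zeros. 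Their sums therefore coincide, and by the first paragraph so do the two maxima. The same computation identifies the maximizers: if $\boldsymbol{R}_i^{\star}\in\mathcal{O}(n)$ attains the reduced maximum, then $[\boldsymbol{Q}_i\;\boldsymbol{Q}_i^{\perp}]\,\diag(\boldsymbol{R}_i^{\star},\boldsymbol{W})\,[\boldsymbol{Q}_j\;\boldsymbol{Q}_j^{\perp}]^\top$ attains the original one for every $\boldsymbol{W}\in\mathcal{O}(m-n)$, which is what legitimizes projecting the reduced solution back to ${\rm I\!R}^{n\times m}$ via $\boldsymbol{Q}_i,\boldsymbol{Q}_j$.

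The only genuinely delicate point is the ``lossless compression'' claim --- that replacing the $m\times m$ cross-product by its $n\times n$ core discards only zero singular values, so that enlarging the feasible set from $\mathcal{O}(n)$ to $\mathcal{O}(m)$ gains nothing. This rests entirely on $\boldsymbol{X}_i$ and $\boldsymbol{X}_j$ having rank at most $n$ and on $\boldsymbol{\Sigma}_m$ being a scalar multiple of the identity; the remaining manipulations are routine linear algebra.
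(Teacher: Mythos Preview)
Your argument is correct. You reduce both sides to the nuclear norm (sum of singular values) via the standard Procrustes identity $\max_{\boldsymbol{R}\in\mathcal{O}(k)}tr(\boldsymbol{R}^\top\boldsymbol{A})=\sum_l\sigma_l(\boldsymbol{A})$, and then verify that the $m\times m$ matrix $\boldsymbol{X}_i^\top\boldsymbol{\Sigma}_n^{-1}\boldsymbol{X}_j\boldsymbol{\Sigma}_m^{-1}=\boldsymbol{Q}_i\boldsymbol{B}\boldsymbol{Q}_j^\top$ and the $n\times n$ core $\boldsymbol{B}$ share their nonzero singular values. The paper instead works with the \emph{full} SVD $\boldsymbol{X}_i=\boldsymbol{L}_i[\boldsymbol{S}_i^\star\ \boldsymbol{0}]\boldsymbol{C}_i^\top$, substitutes into $tr(\boldsymbol{\Sigma}_n^{-1}\boldsymbol{X}_i\boldsymbol{R}_i\boldsymbol{X}_j^\top)$, changes variable to $\boldsymbol{R}_i^{o}=\boldsymbol{C}_i^\top\boldsymbol{R}_i\boldsymbol{C}_j\in\mathcal{O}(m)$, and partitions $\boldsymbol{R}_i^{o}$ into blocks to show that only the upper-left $n\times n$ block $\boldsymbol{R}_{11i}^{o}$ enters the trace; the maximum over $\mathcal{O}(m)$ then collapses to a maximum over the $n\times n$ block. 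The two routes meet at the same place---your $\boldsymbol{B}$ is the paper's $\boldsymbol{S}_i^\star\boldsymbol{L}_i^\top\boldsymbol{\Sigma}_n^{-1}\boldsymbol{L}_j\boldsymbol{S}_j^{\star\top}$ up to the scalar $\sigma^{-2}$---but your nuclear-norm framing is somewhat cleaner: it avoids the implicit step in the block argument that, although $\boldsymbol{R}_{11i}^{o}$ ranges over all $n\times n$ contractions (not just $\mathcal{O}(n)$) as $\boldsymbol{R}_i^{o}$ ranges over $\mathcal{O}(m)$, the optimum is nevertheless attained at an orthogonal block. The paper's approach, on the other hand, is more explicitly constructive about how the large rotation decomposes, which motivates the back-projection $\boldsymbol{Q}_i\boldsymbol{R}_i^\star\boldsymbol{Q}_j^\top$ used in the algorithm; you recover this in your last paragraph via the $\diag(\boldsymbol{R}_i^\star,\boldsymbol{W})$ embedding.
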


\gramm{Additionally}, the condition for the existence of $\hat{\boldsymbol{\Sigma}}_n$ by \cite{dutilleul1999mle} is satisfied \gramm{because} $\boldsymbol{X}_i$ now \gramm{has} dimensions $n \times n$, and the functional alignment needs at least $N = 2$ observations.

So, Theorem \ref{thm1} is used to define an Efficient version of the ProMises model.

\begin{lemma}\label{light_vmp}
	Consider the assumptions of Theorem \ref{thm1}, then 
	\begin{align*}
		\max_{\boldsymbol{R}_i \in \mathcal{O}(m)} tr(\boldsymbol{R}_i^\top \boldsymbol{X}_i^\top \boldsymbol{\Sigma_n}^{-1} \boldsymbol{X}_j \boldsymbol{\Sigma}_m^{-1} + k \boldsymbol{F}) = \max_{\boldsymbol{R}_i^{\star} \in \mathcal{O}(n)} tr\{\boldsymbol{R}_i^{\star \top} (\boldsymbol{Q}_i^\top \boldsymbol{X}_i^\top \boldsymbol{\Sigma}_n^{-1} \boldsymbol{X}_j \boldsymbol{\Sigma}_m^{-1} \boldsymbol{Q}_j^\top + k \boldsymbol{F}^\star)\},
	\end{align*}
	where $\boldsymbol{F} \in {\rm I\!R}^{m \times m}$ and $\boldsymbol{F}^\star = \boldsymbol{Q}^\top_i \boldsymbol{F} \boldsymbol{Q}_j \in {\rm I\!R}^{n \times n}$.
\end{lemma}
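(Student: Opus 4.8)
The statement is the ProMises analogue of Theorem~\ref{thm1}, so the plan is to run the \emph{same} dimension reduction, but applied to the regularized cross-product $\boldsymbol{A}_{ij}+k\boldsymbol{F}$ with $\boldsymbol{A}_{ij}:=\boldsymbol{X}_i^\top\boldsymbol{\Sigma}_n^{-1}\boldsymbol{X}_j\boldsymbol{\Sigma}_m^{-1}$ in place of $\boldsymbol{A}_{ij}$ alone. Recall from Lemma~\ref{lemma3} and Theorem~\ref{thm3} that imposing the von Mises--Fisher prior on $\alpha_i\boldsymbol{R}_i$ changes the objective of \eqref{min} only by replacing $\boldsymbol{X}_i^\top\boldsymbol{\Sigma}_n^{-1}\boldsymbol{M}\boldsymbol{\Sigma}_m^{-1}$ (here $\boldsymbol{M}$ read as the pairwise target $\boldsymbol{X}_j$) with $\boldsymbol{A}_{ij}+k\boldsymbol{F}$; hence it suffices to show that Theorem~\ref{thm1}'s reduction still goes through verbatim for this modified matrix.

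First I would record the factorization behind Theorem~\ref{thm1}. With $\boldsymbol{\Sigma}_m=\sigma^2\boldsymbol{I}_m$ and the thin SVDs $\boldsymbol{X}_i=\boldsymbol{L}_i\boldsymbol{S}_i\boldsymbol{Q}_i^\top$, one has $\boldsymbol{A}_{ij}=\sigma^{-2}\boldsymbol{Q}_i\boldsymbol{S}_i\boldsymbol{L}_i^\top\boldsymbol{\Sigma}_n^{-1}\boldsymbol{L}_j\boldsymbol{S}_j\boldsymbol{Q}_j^\top=\boldsymbol{Q}_i\boldsymbol{G}_{ij}\boldsymbol{Q}_j^\top$ with $\boldsymbol{G}_{ij}\in{\rm I\!R}^{n\times n}$, and since $\boldsymbol{Q}_i^\top\boldsymbol{Q}_i=\boldsymbol{Q}_j^\top\boldsymbol{Q}_j=\boldsymbol{I}_n$ we get $\boldsymbol{G}_{ij}=\boldsymbol{Q}_i^\top\boldsymbol{A}_{ij}\boldsymbol{Q}_j$, i.e.\ exactly the reduced cross-product that appears on the right-hand side of Theorem~\ref{thm1}. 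Now I add the prior term: since in the Efficient ProMises construction the location hyperparameter is carried into the same $n$-dimensional subspaces spanned by $\boldsymbol{Q}_i$ (rows) and $\boldsymbol{Q}_j$ (columns), I write $\boldsymbol{F}=\boldsymbol{Q}_i\boldsymbol{F}^\star\boldsymbol{Q}_j^\top$ with $\boldsymbol{F}^\star=\boldsymbol{Q}_i^\top\boldsymbol{F}\boldsymbol{Q}_j\in{\rm I\!R}^{n\times n}$. Then $\boldsymbol{A}_{ij}+k\boldsymbol{F}=\boldsymbol{Q}_i(\boldsymbol{G}_{ij}+k\boldsymbol{F}^\star)\boldsymbol{Q}_j^\top$ has precisely the same $\boldsymbol{Q}_i(\cdot)\boldsymbol{Q}_j^\top$ shape, with the $n\times n$ core $\boldsymbol{G}_{ij}+k\boldsymbol{F}^\star$ replacing $\boldsymbol{G}_{ij}$.

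With this in hand the argument of Theorem~\ref{thm1} applies unchanged. For any $\boldsymbol{R}_i\in\mathcal{O}(m)$, $tr\big(\boldsymbol{R}_i^\top(\boldsymbol{A}_{ij}+k\boldsymbol{F})\big)=tr\big((\boldsymbol{Q}_j^\top\boldsymbol{R}_i^\top\boldsymbol{Q}_i)(\boldsymbol{G}_{ij}+k\boldsymbol{F}^\star)\big)$, and $\boldsymbol{Q}_j^\top\boldsymbol{R}_i^\top\boldsymbol{Q}_i$ is always an $n\times n$ contraction, so by the von Neumann trace inequality this is bounded by the nuclear norm $\|\boldsymbol{G}_{ij}+k\boldsymbol{F}^\star\|_*=\max_{\boldsymbol{R}_i^\star\in\mathcal{O}(n)}tr\big(\boldsymbol{R}_i^{\star\top}(\boldsymbol{G}_{ij}+k\boldsymbol{F}^\star)\big)$, giving ``$\le$''. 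Conversely, if $\boldsymbol{G}_{ij}+k\boldsymbol{F}^\star=\boldsymbol{U}\boldsymbol{D}\boldsymbol{V}^\top$ is an SVD, the optimal $\boldsymbol{R}_i^\star=\boldsymbol{U}\boldsymbol{V}^\top\in\mathcal{O}(n)$ is realized as $\boldsymbol{Q}_j^\top\boldsymbol{R}_i^\top\boldsymbol{Q}_i$ for a suitable $\boldsymbol{R}_i\in\mathcal{O}(m)$ (extend the isometry $\text{col}(\boldsymbol{Q}_i)\to\text{col}(\boldsymbol{Q}_j)$ prescribed by $\boldsymbol{U}\boldsymbol{V}^\top$ to the $(m-n)$-dimensional orthogonal complements), giving ``$\ge$''. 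Replacing $\boldsymbol{G}_{ij}$ by $\boldsymbol{Q}_i^\top\boldsymbol{X}_i^\top\boldsymbol{\Sigma}_n^{-1}\boldsymbol{X}_j\boldsymbol{\Sigma}_m^{-1}\boldsymbol{Q}_j$ yields the claimed identity.

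The main obstacle is the realizability/compression step of the last paragraph --- that $\{\boldsymbol{Q}_j^\top\boldsymbol{R}^\top\boldsymbol{Q}_i:\boldsymbol{R}\in\mathcal{O}(m)\}$ is exactly the set of $n\times n$ contractions and that the linear functional $tr(\,\cdot\,(\boldsymbol{G}_{ij}+k\boldsymbol{F}^\star))$ is maximized over it at an orthogonal matrix; this is already the crux of Theorem~\ref{thm1}, so it can be invoked rather than re-derived. A secondary point that deserves an explicit sentence is that the literal equality uses $\boldsymbol{F}=\boldsymbol{Q}_i\boldsymbol{F}^\star\boldsymbol{Q}_j^\top=\boldsymbol{Q}_i\boldsymbol{Q}_i^\top\boldsymbol{F}\boldsymbol{Q}_j\boldsymbol{Q}_j^\top$, i.e.\ only the component of the prior matrix lying in the row/column spans of $\boldsymbol{Q}_i,\boldsymbol{Q}_j$ is seen by the reduced model; in the Efficient ProMises model the prior is specified directly through $\boldsymbol{F}^\star$, so this holds by construction.
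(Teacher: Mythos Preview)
Your proposal is correct and follows the same dimension-reduction idea as the paper: factor the regularized cross-product through $\boldsymbol{Q}_i(\cdot)\boldsymbol{Q}_j^\top$ and then invoke the mechanism of Theorem~\ref{thm1}. The paper's own proof is terser---it splits the objective into the data term and the prior term, points to the block-partition computation \eqref{lightProof1} for the former, and asserts that $k\,tr(\boldsymbol{R}_i^{\star\top}\boldsymbol{Q}_i^\top\boldsymbol{F}\boldsymbol{Q}_j)$ over $\mathcal{O}(n)$ is ``equivalent'' to $k\,tr(\boldsymbol{R}_i^\top\boldsymbol{F})$ over $\mathcal{O}(m)$---whereas you keep the two terms together and make the $\le/\ge$ steps explicit via the contraction/von~Neumann argument and the explicit extension to $\mathcal{O}(m)$. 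Your last paragraph also states openly what the paper leaves implicit: the literal equality uses $\boldsymbol{F}=\boldsymbol{Q}_i\boldsymbol{Q}_i^\top\boldsymbol{F}\boldsymbol{Q}_j\boldsymbol{Q}_j^\top$, i.e.\ only the component of $\boldsymbol{F}$ in the relevant column spaces matters, which is exactly how $\boldsymbol{F}^\star$ is used in the Efficient ProMises algorithm.
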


Proofs of Theorem \ref{thm1} and Lemma \ref{light_vmp} are shown in the appendix, \gramm{whereas} here, we make some further considerations about the proposed method.

\gramm{At first glance,} \gramm{t}he assumption $\boldsymbol{\Sigma}_m = \sigma^2 \boldsymbol{I}_m$ may dilute \gramm{the resul's impact}. However, this assumption does not imply that the data are column-wise independent \gramm{because} this dependence is modeled by $\boldsymbol{R}_i$. \gramm{Additionally}, the joint and accurate estimate of $\boldsymbol{\Sigma}_m$, $\boldsymbol{\Sigma}_n$ and $\boldsymbol{R}_i$ requires a large number of observations. So, it is common in real applications to set $\boldsymbol{\Sigma}_m$ and $\boldsymbol{\Sigma}_n$ to be proportional to the identities in Procrustes-like problems \citep{haxby2020hyperalignment}. When the model is high-dimensional, this problem becomes even more pronounced \gramm{because of} the huge number of parameters to be estimated.

The Efficient ProMises approach reaches the same maximum while working in the reduced space
of the first $n$ eigenvectors, which contains all the information, instead of the full \gramm{data} set. Therefore, the original problem estimates orthogonal matrices of size $m\times m$:
$\boldsymbol{R}_i \in \mathcal{O}(m)$, \gramm{whereas} the Efficient solution provides a set of orthogonal matrices of size $n\times n$: $\boldsymbol{R}_i^{\star} \in \mathcal{O}(n)$.
Even when the solution is projected back into the $m\times m$ space through $\boldsymbol{Q}_i\boldsymbol{R}_i^{\star}\boldsymbol{Q}_i^\top$, the rank remains $n$, \gramm{whereas} the matrices of the original solutions have rank $m$. 
This should clarify that the Efficient approach reaches the same fit to the data under a different set of constraints\gramm{that is} $n\times n$ orthogonal matrices instead of $m\times m$ \gramm{matrices;} hence\gramm{,} the solutions of the two algorithms will not be identical. 

Then\gramm{,} we add on Algorithm \ref{alg:algo} the lines \gramm{used} to reduce the dimensions of $\boldsymbol{X}_i$, and we modify Line \ref{alg:svd} to insert the prior information: 


\begin{algorithm}[H]

	\caption{Efficient ProMises model.
	}\label{algo2}
	
	Use Algorithm \ref{alg:algo}, only add these three lines at the beggining:
	\begin{algorithmic}[1]
		\For{$i=1$ to $N$}
		\State $\boldsymbol{L}_i \boldsymbol{S}_i \boldsymbol{Q}_i^\top = \texttt{SVD}(\boldsymbol{X_i})$ \Comment{\small Thin singular value decomposition}	
		\State $\boldsymbol{X}_i  := \boldsymbol{X_i} \boldsymbol{Q}_i$
		\EndFor
\end{algorithmic}	
		and modify:
\begin{algorithmic}[1]
		\setalglineno{5}
		\State $\boldsymbol{U}_i \boldsymbol{D}_i \boldsymbol{V}_i^\top = \texttt{SVD}(\boldsymbol{X_i}^\top \boldsymbol{\hat{\Sigma}_n}^{-1} \boldsymbol{\hat{M}} \boldsymbol{\hat{\Sigma}_m}^{-1} + k \boldsymbol{F})$\Comment{\small Singular value decomposition }
	\end{algorithmic}
\end{algorithm}

The Efficient approach reduces the time complexity from $\mathcal{O}(m^3)$ to $\mathcal{O}(m n^2)$, and the space complexity from $\mathcal{O}(m^2)$ to $\mathcal{O}(m n)$.

\section{Functional Magnetic Resonance Imaging Data Application}\label{application}


\subsection{Motivation}\label{motivation}

The alignment problem is recognized in fMRI multi-subject studies \gramm{because} the brain's anatomical and functional structure\gramm{s} var\gramm{y} across subjects. The most used anatomical alignment\angela{s} \angela{are} the Talairach normalization \citep{Tal} and \angela{the Montréal Neurological Institute (MNI) space normalization} \citep{Jenkinson}, where the brain images are aligned to an anatomical template by affine transformations using a set of major anatomical landmarks. However, this alignment does not explore the between-subjects \gramm{variability in anatomical positions of the functional loci}. The functional brain regions are not consistently placed on the anatomical landmarks defined by the Talairach \angela{and MNI} template\angela{s}. The anatomical alignment is then an approximate inter-subject registration of the functional cortical areas. \cite{Haxby} proved that functional brain anatomy exhibits a regular organization at a fine spatial scale shared across subjects.

Therefore, we can assume that anatomical and functional structures are subject-specific \citep{HaxbyConn,Sabuncu} and \gramm{that} the neural activities in different brains are noisy rotations of a common space \citep{Haxby}. Functional alignments \gramm{(}e.g., Procrustes methods\gramm{)} attempt to rotate the neural activities to maximize similarity across subjects.

Specifically, each subject's brain activation can be represented by a matrix, where the rows represent the stimuli/time points, and the columns represent the voxels. The stimuli are time-synchronized \gramm{among} subjects, so we have correspondence \gramm{among} the matrices' rows. However, the columns are not assumed to be in correspondence \gramm{among} subjects, as explained before. Each time series of brain activation \gramm{(}i.e., \gramm{each of} the matrices' columns\gramm{)} represents the voxels' functional characteristics that the \angela{anatomical} normalization fails to align. We \gramm{aim} to represent the neural responses to stimuli into a common high-dimensional space, rather than in a canonical anatomical space that does not consider the variability of functional topographies loci.

Figure \ref{fig:toy} shows three voxels' neural activities \gramm{(}i.e., $v_1$, $v_2$, and $v_3$\gramm{)} three columns of the data matrix in two subjects recorded across time. The functional pattern of $v_2$ is equal across subjects, \gramm{whereas} $v_1$ and $v_3$ are swapped. A rotation matrix can resolve this misalignment\gramm{, with}  the swap \gramm{being} a particular case of the rotation matrix. For further details about the motivation in using functional Procrustes-based alignment in fMRI studies, see \cite{haxby2020hyperalignment}.

\begin{figure}[!ht]
	\centering	
	\includegraphics[width=.7\linewidth]{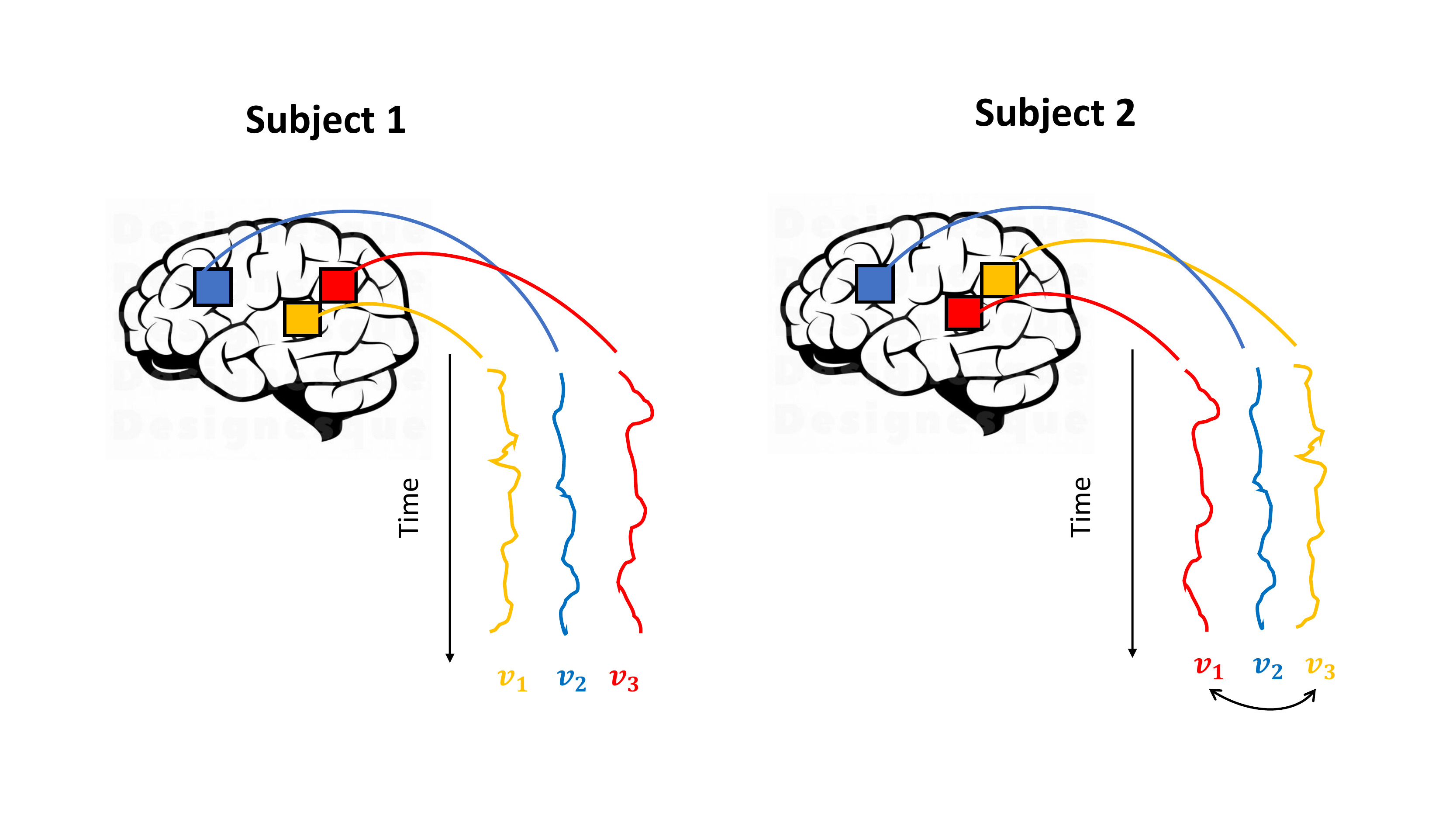}
	\caption{Illustration of functional misalignment between fMRI images, where three voxels’ time series are plotted considering two subjects. The time series of voxels $v_1$ and $v_3$ of the second subject are swapped with respect to the first subject.}
	\label{fig:toy}
\end{figure}

\subsection{Data Description}\label{inference}
We apply the proposed method to data from \cite{Pernet}, available at \url{https://openneuro.org/datasets/ds000158/versions/1.0.0}. The study consists of neural activations of $218$ subjects passively listening to vocal \gramm{(}i.e., speech\gramm{)} and non vocal sounds.\angela{Because the application has had a mere illustrative purpose, we choose to use a small number of subjects ($18$) to facilitate the example's reproducibility by the readers.}. We preprocessed the data using the \gramm{Functional MRI of the Brain} Software Library (FSL) \citep{FSL} using a standard processing procedure. For details about the experimental design and data acquisition, please see \cite{Pernet}. 

\subsection{Functional Connectivity}

We perform\gramm{ed} region of interest and seed-based correlation analysis \citep{cordes2000mapping}. \gramm{The s}eed-based correlation map shows the level of functional connectivity between a seed and every voxel in the brain, \gramm{whereas the} region of interest analysis expresses the functional correlation between predefined regions of interest coming from a standard atlas. \angela{The analysis process is defined as follows: \gramm{F}irst , the subject images are aligned using Algorithm \ref{algo2}, then the element-wise arithmetic mean across subjects is calculated, and finally, the functional connectivity analysis is developed on this average matrix.}

We take the frontal pole as seed, being a region with functional diversity \citep{liu2013connectivity}. \angela{The anatomical alignment considered here refers to the MNI space normalization \citep{Jenkinson}.} Figure \ref{fig:connectivitySeed} shows the correlation values between the seed and each voxel in the brain using data without functional alignment \gramm{(}top of Figure \ref{fig:connectivitySeed}\gramm{)} and with functional alignment using the Efficient ProMises model\gramm{(} bottom of Figure \ref{fig:connectivitySeed}\gramm{)}. 
The first evidence is that the functional alignment produces more interpretable maps, where the various regions, such as the superior temporal gyrus, are delineated by marked spatial edges, while the non-aligned map produces more spread regions, hence being less interpretable. 
It is interesting to evaluate the regions more correlated with the frontal pole, for example, the superior temporal gyrus. This region is associated with the processing of auditory stimuli. The correlation of the superior temporal gyrus with the seed is clear in the bottom part of Figure \ref{fig:connectivitySeed}, where functionally aligned images are used. 

\begin{figure}[!ht]
	\centering
	\minipage{.6\textwidth}%
	\includegraphics[width=\linewidth]{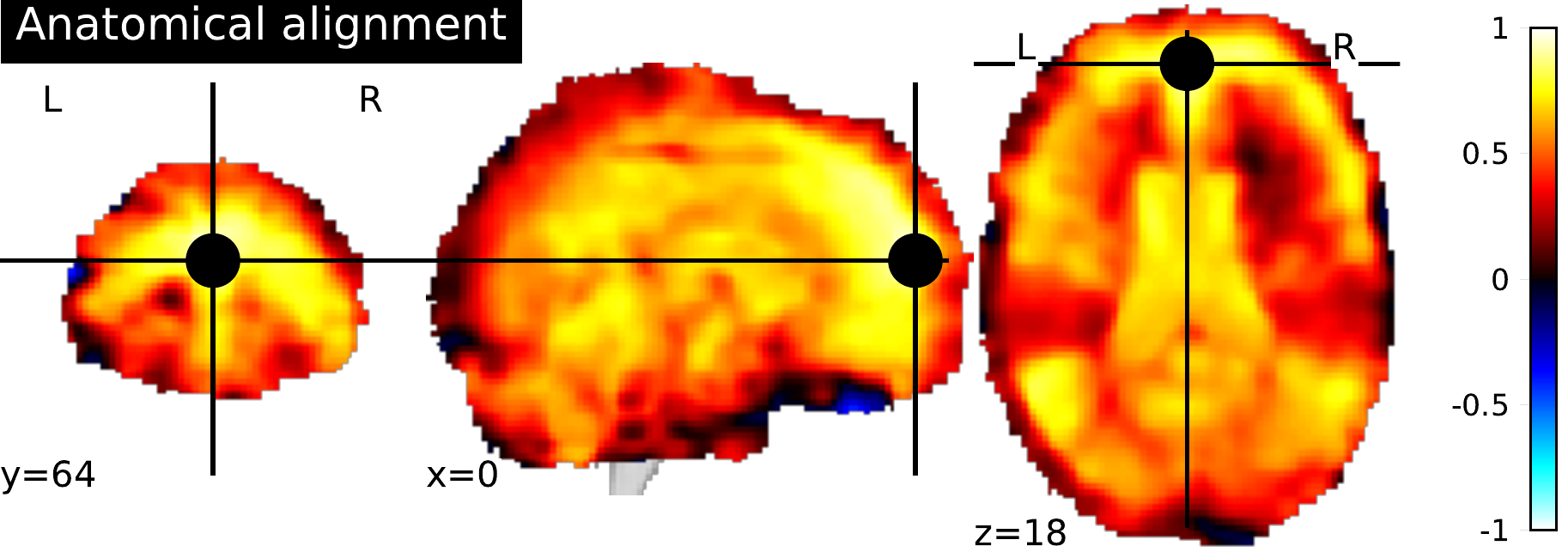}
	\endminipage\vfill
	\minipage{.6\textwidth}%
	\includegraphics[width=\linewidth]{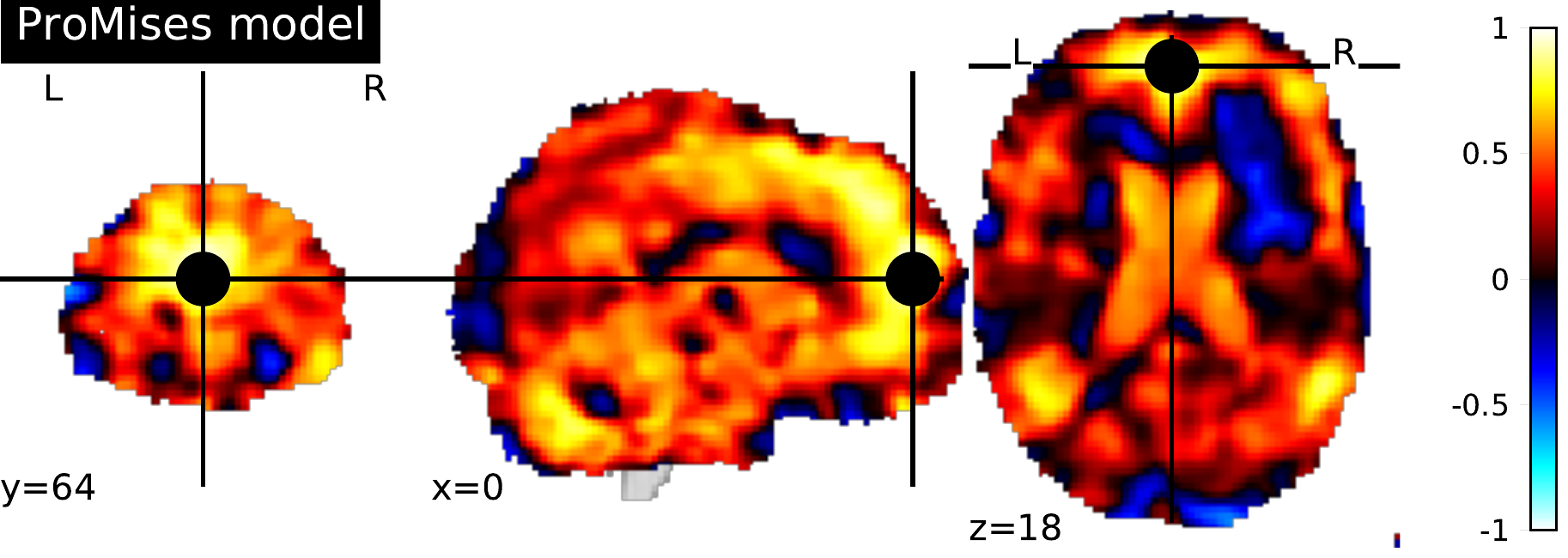}
	\endminipage
	\caption{Seed-based correlation map for $\boldsymbol{M}$, using data only aligned anatomically (top figure), and data also functionally aligned by the Efficient ProMises model (bottom figure). The black point refers to the seed used, i.e., frontal pole with MNI coordinates $(0,64, 18)$. So, the brain map indicates the level of correlation between each voxel and the frontal pole.}
	\label{fig:connectivitySeed}
\end{figure}

\gramm{In contrast}, the region of interest correlations analysis shows the integration mechanisms between specialized brain areas. Figure \ref{fig:connectivity} indicates the correlation matrices of time-series extracted from the $39$ main regions of the atlas of \cite{varoquaux2011multi}. Using functionally aligned data \gramm{(}right \gramm{side} of Figure \ref{fig:connectivity}\gramm{)} we can see delineated blocks of synchronized regions that can be interpreted as large-scale functional networks. Instead, using data without functional alignment \gramm{(}left \gramm{side} of Figure \ref{fig:connectivity}\gramm{)} the distinctions between blocks are clearly wors\gramm{e}. Using functionally aligned data, the left and right visual systems, composed of the dorsolateral prefrontal cortex (DLPFC), frontal pole (Front pol), and parietal (Par), are clearly visible\gramm{, whereas} in the analysis using functionally \gramm{non}aligned data, this distinction is hidden by noise.

\begin{figure}[!ht]
	\centering
	\minipage{.5\textwidth}%
	\includegraphics[width=\linewidth]{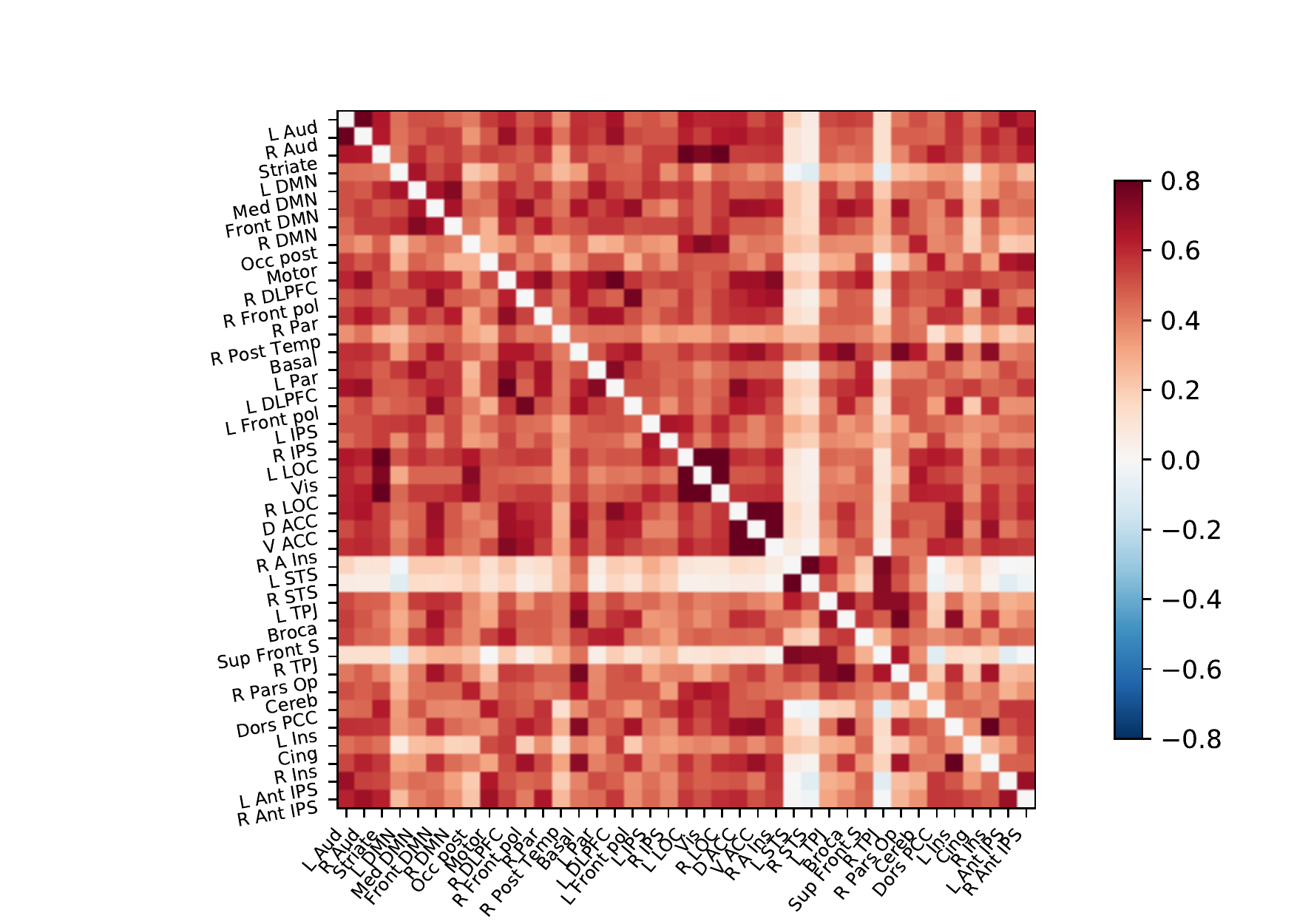}
	\endminipage\hfill
	\minipage{.5\textwidth}%
	\includegraphics[width=\linewidth]{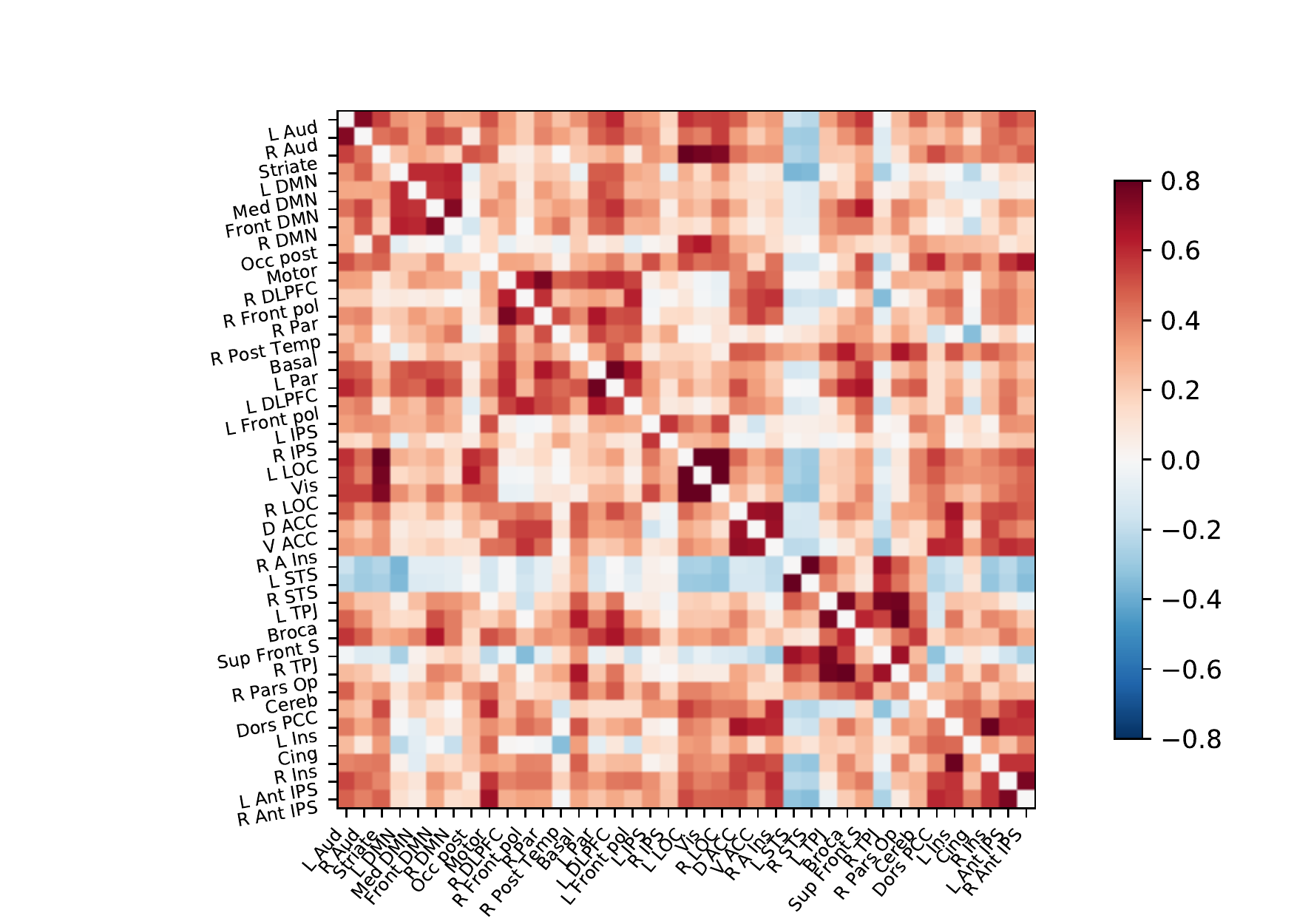}
	\endminipage
	\caption{Correlation matrix for $\boldsymbol{M}$, using data only aligned anatomically, left figure, and data also functionally aligned by the Efficient ProMises model, right figure. The cells of the matrix represent the correlation between the regions (represented by the row/column labels) of the \cite{varoquaux2011multi}'s atlas.}
	\label{fig:connectivity}
\end{figure}

\angela{The preprocessed data are available on the GitHub repository: \url{http://github.com/angeella/fMRIdata}, as well as the code used to perform functional connectivity:  \url{http://github.com/angeella/ProMisesModel/Code/Auditory}.}

\section{Discussion}\label{conclusion}
The ProMises model provides a methodologically grounded approach to the Procrustes problem allowing functional alignment on high-dimensional data in a computationally efficient way. The issues of the perturbation model \citep{Goodall} \gramm{---}non-uniqueness, critical interpretation, and \gramm{in}applicability when $n \ll m$\gramm{---} are completely surpassed thanks to our Bayesian extension. Indeed, the ProMises method returns unique and interpretable orthogonal transformations, and \gramm{its efficient} approach extends the applicability to high-dimensional data. The presented method is particularly useful in fMRI data analysis \gramm{because} it allows the functional alignment of images having roughly $200 \times 200,000$ dimensions, obtaining a unique representation of the aligned images in the brain space and a unique interpretation of the related results. 

\angela{In the application example presented in Section \ref{application}, a subsample was analyzed. However, the algorithm has a linear growth in $N$, and therefore, it is not a problem to work with larger samples. Also, the algorithm permits a parallel computation for the subjects.}

The Bayesian framework gives the user the advantage and the duty to insert prior information into the model through $k$ and $\boldsymbol{F}$. The parameter $k$ plays the role of regularization parameter, which is rarely known a priori. We estimated it by cross-validation, \gramm{although} it may be interesting to adapt approximations-based methods \gramm{(}e.g., generalized cross-validation\gramm{)} to reduce the computational burden. \angela{Alternatively, we could assume a prior distribution taking values in $\mathbb{R}^{+}$ for the regularization parameter $k$ and proceed to jointly estimate this parameter as well.} More interestingly, the matrix $\boldsymbol{F}$ addresses the estimate of the optimal rotations, which is favorable in the analysis of fMRI data \gramm{because}, in this context, the variables have a spatial anatomical location.
In the example in Section \ref{application} our definition of $\boldsymbol{F}$ favors the combination of voxels with equal location. However, a more thoughtful specification can entirely exploit the voxels' specific spatial position in the anatomical template. 
This opens up the possibility to explore various specifications of $\boldsymbol{F}$ and will be the subject of further research.


\section*{Acknowledgements}
The authors thank the Editor, the Associate Editor as well as the two anonymous referees for helpful comments that greatly improved the paper. Angela Andreella gratefully acknowledges funding from the grant BIRD2020/SCAR\_ASEGNIBIRD2020\_01 of the University of Padova, Italy, and PON 2014-2020/DM 1062 of the Ca' Foscari University of Venice, Italy. Some of the computational analyses done in this manuscript were carried out using the University of Padova Strategic Research Infrastructure Grant 2017: ``CAPRI: Calcolo ad Alte Prestazioni per la Ricerca e l’Innovazione'', \url{http://capri.dei.unipd.it}. The authors thank Prof. Umberto Castiello and Dr. Silvia Guerra for sharing the cinematic plant data.

\vspace{\fill}\pagebreak



 \newcommand{\noop}[1]{}


\pagebreak

\appendix
\section{Proof of Theorems and Lemmas}\label{Appendix}
\begingroup
\renewcommand\thetheorem{1}
\begin{theorem}[\cite{Douglas}]\label{theorem1}
	Consider the perturbation model described in Definition \ref{Procrustes_new}, and the singular value decomposition $\boldsymbol{X}_i^\top \boldsymbol{\Sigma}_n^{-1} \boldsymbol{M} \allowbreak \boldsymbol{\Sigma}_m^{-1}= \boldsymbol{U}_i \boldsymbol{D}_i \boldsymbol{V}_i^\top$. So, the maximum likelihood estimators equal $\hat{\boldsymbol{R}}_i = \boldsymbol{U}_i \boldsymbol{V}_i^\top$, and $\hat{\alpha_i}_{\boldsymbol{\hat{R}}_i} = ||\Sigma_m^{-1/2} \hat{\boldsymbol{R}}_i^\top \boldsymbol{X}_i^\top \boldsymbol{\Sigma_n}^{-1/2}||^2/tr(\boldsymbol{D}_i)$.
\end{theorem}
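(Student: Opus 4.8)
The plan is to maximise the likelihood of $\boldsymbol{X}_i$ jointly over $(\boldsymbol{R}_i,\alpha_i)$ by first profiling out $\boldsymbol{R}_i$ for fixed $\alpha_i$ — which reduces to an orthogonal Procrustes problem solved in closed form by a singular value decomposition — and then substituting the optimal $\hat{\boldsymbol{R}}_i$ and optimising the remaining univariate function of $\alpha_i$.

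First I would start from the density \eqref{covN_new} (equivalently, from the matrix normal density of $\boldsymbol{E}_i$ together with the Jacobian of the affine map $\boldsymbol{E}_i\mapsto\boldsymbol{X}_i$) and write, up to an additive constant free of $(\boldsymbol{R}_i,\alpha_i)$, $\ell(\boldsymbol{R}_i,\alpha_i)=-nm\log\alpha_i-\tfrac12\, tr\big(\boldsymbol{\Sigma}_n^{-1}\boldsymbol{E}_i\boldsymbol{\Sigma}_m^{-1}\boldsymbol{E}_i^\top\big)$, with $\boldsymbol{E}_i=\alpha_i^{-1}\boldsymbol{X}_i\boldsymbol{R}_i-\boldsymbol{M}$. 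A first observation is that $\boldsymbol{R}_i^\top\boldsymbol{R}_i=\boldsymbol{I}_m$ makes $\det(\boldsymbol{R}_i\boldsymbol{\Sigma}_m\boldsymbol{R}_i^\top)=\det(\boldsymbol{\Sigma}_m)$, so the normalising constant does not depend on $\boldsymbol{R}_i$. Expanding the quadratic form and repeatedly using $\boldsymbol{R}_i^\top\boldsymbol{R}_i=\boldsymbol{I}_m$ and the cyclic invariance of the trace, the only $\boldsymbol{R}_i$-dependent contribution relevant for its maximisation is the term $\alpha_i^{-1} <\boldsymbol{R}_i,\,\boldsymbol{X}_i^\top\boldsymbol{\Sigma}_n^{-1}\boldsymbol{M}\boldsymbol{\Sigma}_m^{-1}>$, with positive coefficient; hence maximising $\ell$ over $\boldsymbol{R}_i\in\mathcal{O}(m)$ is equivalent to maximising this Frobenius inner product.

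Next I would solve this orthogonal Procrustes problem via the singular value decomposition $\boldsymbol{X}_i^\top\boldsymbol{\Sigma}_n^{-1}\boldsymbol{M}\boldsymbol{\Sigma}_m^{-1}=\boldsymbol{U}_i\boldsymbol{D}_i\boldsymbol{V}_i^\top$: setting $\boldsymbol{W}=\boldsymbol{V}_i^\top\boldsymbol{R}_i^\top\boldsymbol{U}_i\in\mathcal{O}(m)$ gives $tr(\boldsymbol{R}_i^\top\boldsymbol{U}_i\boldsymbol{D}_i\boldsymbol{V}_i^\top)=tr(\boldsymbol{W}\boldsymbol{D}_i)=\sum_j W_{jj}(\boldsymbol{D}_i)_{jj}\le\sum_j(\boldsymbol{D}_i)_{jj}=tr(\boldsymbol{D}_i)$, since the singular values are non-negative and $|W_{jj}|\le1$ for an orthogonal matrix, with equality at $\boldsymbol{W}=\boldsymbol{I}_m$, i.e.\ at $\hat{\boldsymbol{R}}_i=\boldsymbol{U}_i\boldsymbol{V}_i^\top$; this maximiser is the same for every $\alpha_i$. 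Substituting $\hat{\boldsymbol{R}}_i$ back, the quadratic form becomes $\alpha_i^{-2}||\boldsymbol{\Sigma}_m^{-1/2}\hat{\boldsymbol{R}}_i^\top\boldsymbol{X}_i^\top\boldsymbol{\Sigma}_n^{-1/2}||^2-2\alpha_i^{-1} tr(\boldsymbol{D}_i)+(\text{const})$, and setting its derivative in $\alpha_i$ to zero (the stationary point being the minimiser, by a sign check on the second derivative) yields $\hat{\alpha_i}_{\hat{\boldsymbol{R}}_i}=||\boldsymbol{\Sigma}_m^{-1/2}\hat{\boldsymbol{R}}_i^\top\boldsymbol{X}_i^\top\boldsymbol{\Sigma}_n^{-1/2}||^2/tr(\boldsymbol{D}_i)$.

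The delicate point is the reduction in the second step. After expanding, the quadratic form also carries the term $\alpha_i^{-2} tr(\boldsymbol{R}_i^\top\boldsymbol{X}_i^\top\boldsymbol{\Sigma}_n^{-1}\boldsymbol{X}_i\boldsymbol{R}_i\boldsymbol{\Sigma}_m^{-1})=\alpha_i^{-2}||\boldsymbol{\Sigma}_n^{-1/2}\boldsymbol{X}_i\boldsymbol{R}_i\boldsymbol{\Sigma}_m^{-1/2}||^2$, and one must check that it does not shift the argmax over $\boldsymbol{R}_i$: this is immediate when $\boldsymbol{\Sigma}_m=\sigma^2\boldsymbol{I}_m$ (the term reduces to a constant multiple of $||\boldsymbol{\Sigma}_n^{-1/2}\boldsymbol{X}_i||^2$), and otherwise one works with the Procrustes, that is Mahalanobis least-squares, criterion $tr(\boldsymbol{\Sigma}_n^{-1}\boldsymbol{E}_i\boldsymbol{\Sigma}_m^{-1}\boldsymbol{E}_i^\top)$ as the objective, which is also what singles out the displayed closed form for $\hat{\alpha_i}$ (optimising the full log-likelihood, including the $-nm\log\alpha_i$ term, would instead give the root of a quadratic in $\alpha_i$). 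The remaining ingredients — the trace identities, the inner-product bound, and the univariate optimisation — are routine.
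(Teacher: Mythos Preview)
Your proposal is correct and follows essentially the same route as the paper: expand the quadratic form in the matrix-normal log-likelihood, isolate the cross term $\alpha_i^{-1}\langle\boldsymbol{R}_i,\boldsymbol{X}_i^\top\boldsymbol{\Sigma}_n^{-1}\boldsymbol{M}\boldsymbol{\Sigma}_m^{-1}\rangle$, solve the resulting orthogonal Procrustes problem by SVD, then profile over $\alpha_i$. Your self-contained bound $tr(\boldsymbol{W}\boldsymbol{D}_i)\le tr(\boldsymbol{D}_i)$ replaces the paper's citation to \cite{Gower}, and your closing caveat --- that the term $\alpha_i^{-2}\,tr(\boldsymbol{R}_i\boldsymbol{\Sigma}_m^{-1}\boldsymbol{R}_i^\top\boldsymbol{X}_i^\top\boldsymbol{\Sigma}_n^{-1}\boldsymbol{X}_i)$ is $\boldsymbol{R}_i$-free only when $\boldsymbol{\Sigma}_m\propto\boldsymbol{I}_m$, and that the displayed $\hat\alpha_i$ optimises the Mahalanobis criterion rather than the full likelihood with its $-nm\log\alpha_i$ term --- correctly flags points the paper's proof absorbs into the constants $C^\star$ and $S$ without comment.
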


\begin{proof}\label{proof1}
	
	The proof comes directly from \cite{Douglas}, we report here the final part that we will use for other proofs. We can write Equation \eqref{eq:eq15_new} as:
	\begin{equation*}
	\dfrac{1}{\alpha_i} \boldsymbol{X}_i \boldsymbol{R}_i - \boldsymbol{M} = \boldsymbol{E}_i \sim \mathcal{MN}(0, \boldsymbol{\Sigma}_n, \boldsymbol{\Sigma}_m).
	\end{equation*}
	
	The log-likelihood for $\boldsymbol{R}_i$ equals:
	\begin{align*}
	\ell(\boldsymbol{R}_i) = - &\frac{1}{2}  \sum_{i = 1}^{N} tr \{ 
	\boldsymbol{\Sigma}_m^{-1}(\dfrac{1}{\alpha_i}  \boldsymbol{X}_i \boldsymbol{R}_i -  \boldsymbol{M})^\top \boldsymbol{\Sigma}_n^{-1}(\dfrac{1}{\alpha_i}  \boldsymbol{X}_i \boldsymbol{R}_i -  \boldsymbol{M})\} + C,
	\end{align*}
	
	where $(\boldsymbol{M}, \boldsymbol{\Sigma}_n, \boldsymbol{\Sigma}_m)$ are known nuisance parameter and $C$ is a constant value.
	So, we have
	
	\begin{align*}
	\ell(\boldsymbol{R}_i) 
	&= \dfrac{1}{\alpha}_i tr\{\boldsymbol{R}_i^\top \boldsymbol{X}_i^\top \boldsymbol{\Sigma}_n^{-1} \boldsymbol{M} \boldsymbol{\Sigma}_m^{-1}\}+ C^\star, 
	\end{align*}
	where $C^\star$ is a constant.
	The maximization of the log-likelihood function $\ell(\boldsymbol{R}_i)$ leads to
	
	\begin{align}
	\boldsymbol{\hat{R}}_i &= \arg \max_{\boldsymbol{R}_i \in \mathcal{O}(m)} <\boldsymbol{R}_i, \boldsymbol{X}_i^\top \boldsymbol{\Sigma}_n^{-1} \boldsymbol{M} \boldsymbol{\Sigma}_m^{-1}>\nonumber \\
	&= \arg \max_{\boldsymbol{R}_i \in \mathcal{O}(m)} <\boldsymbol{R}_i, \boldsymbol{U}_i \boldsymbol{D}_i \boldsymbol{V}_i>  =\arg\max_{\boldsymbol{R}_i\in \mathcal{O}(m)}<\boldsymbol{D}_i,\boldsymbol{U}_i^\top \boldsymbol{R}_i \boldsymbol{V}_i> \nonumber\\
	&= \arg\max_{\boldsymbol{R}_i \in \mathcal{O}(m)}<\boldsymbol{D}_i,\boldsymbol{R}_i^o> = \boldsymbol{U}_i\boldsymbol{V}_i^\top, \label{eq:eq71} 
	\end{align}
	where $\boldsymbol{R}_i^o = \boldsymbol{U}_i^\top \boldsymbol{R}_i \boldsymbol{V}_i \in \mathcal{O}(m)$.
	The step \eqref{eq:eq71} is proved by \citet{Gower}, i.e., $<\boldsymbol{D}_i,\boldsymbol{R}_i^o>$ is maximum when $\boldsymbol{R}_i^o= \boldsymbol{I}_m$, giving $\boldsymbol{I}_m = \boldsymbol{U}_i^\top \boldsymbol{\hat{R}}_i \boldsymbol{V}_i $. We can note that the maximum likelihood estimator $\hat{\boldsymbol{R}}_i$ does not depend on $\alpha_i$.
	
	Consider the profile log-likelihood for $\alpha_i$:
	\begin{align*}
	\ell_{p}(\alpha_i) &=- \frac{1}{2} \sum_{i = 1}^{N} tr\{ \boldsymbol{\Sigma}_m^{-1/2} (\dfrac{1}{\alpha_i} \boldsymbol{X}_i \hat{\boldsymbol{R}}_i - \boldsymbol{M})^\top \boldsymbol{\Sigma}_n^{-1/2} \boldsymbol{\Sigma}_n^{-1/2} (\dfrac{1}{\alpha_i} \boldsymbol{X}_i \hat{\boldsymbol{R}}_i - \boldsymbol{M} )\boldsymbol{\Sigma}_m^{-1/2} \} + S \\
	&= - \frac{1}{2} \{\sum_{i = 1}^{N} \dfrac{1}{\alpha_i^2} tr(\boldsymbol{\Sigma}_m^{-1/2} \hat{\boldsymbol{R}}_i^\top \boldsymbol{X}_i^\top \boldsymbol{\Sigma_n}^{-1/2} \boldsymbol{\Sigma}_n^{-1/2} \boldsymbol{X}_i \hat{\boldsymbol{R}}_i \boldsymbol{\Sigma}_m^{-1/2}) \\
	& \quad + tr(\boldsymbol{\Sigma}_m^{-1/2} \boldsymbol{M}^\top \boldsymbol{\Sigma}_n^{-1/2} \boldsymbol{\Sigma}_n^{-1/2}\boldsymbol{M}  \boldsymbol{\Sigma}_m^{-1/2}) \\
	& \quad - 2  \dfrac{1}{\alpha_i}    tr(\boldsymbol{\Sigma}_m^{-1/2} \hat{\boldsymbol{R}}_i^\top \boldsymbol{X}_i^\top \boldsymbol{\Sigma}_n^{-1/2} \boldsymbol{\Sigma}_n^{-1/2}\boldsymbol{M}  \boldsymbol{\Sigma}_m^{-1/2})\} + S \\
	&= \sum_{i = 1}^{N} - \dfrac{1}{2 \alpha_i^2} ||\boldsymbol{\Sigma}_m^{-1/2} \hat{\boldsymbol{R}}_i^\top \boldsymbol{X}_i^\top \boldsymbol{\Sigma}_n^{-1/2}||^2 + \dfrac{1}{\alpha_i} tr(\boldsymbol{D}_i) + S^\star,
	\end{align*}
	where $S$, and $S^\star$ are constant values. Taking the first derivative, we have:
	\begin{align*}
	\dfrac{\partial \ell_{p}(\alpha_i)}{\partial \alpha_i}&= \alpha_i^{-1}||\boldsymbol{\Sigma}_m^{-1/2} \hat{\boldsymbol{R}}_i^\top \boldsymbol{X}_i^\top \boldsymbol{\Sigma}_n^{-1/2}||^2 -  tr(\boldsymbol{D}_i) \\
	\hat{\alpha_i}_{\hat{R_i}} &= \frac{||\boldsymbol{\Sigma}_m^{-1/2} \hat{\boldsymbol{R}}_i^\top \boldsymbol{X}_i^\top \boldsymbol{\Sigma_n}^{-1/2}||^2}{tr(\boldsymbol{D}_i)},
	\end{align*}
	having $\hat{\boldsymbol{R}}_i = \boldsymbol{U}_i\boldsymbol{V}_i^\top$ and $\alpha_i \in {\rm I\!R}^{+}$.
	
\end{proof}

\begingroup
\renewcommand\thelemma{2}
\begin{lemma}
	Consider the perturbation model described in Definition \ref{Procrustes_new}, with $\boldsymbol{R}_i$ distributed accordingly to \eqref{fisher}, then the posterior distribution $f(\boldsymbol{R}_i| k, \boldsymbol{F}, \boldsymbol{X}_i)$ is conjugate distribution to the von Mises-Fisher prior distribution with location posterior parameter equals
	\begin{align*}
	\boldsymbol{F}^\star=\boldsymbol{X_i}^\top \boldsymbol{\Sigma}_n^{-1} \boldsymbol{M} \boldsymbol{\Sigma}_m^{-1} + k \boldsymbol{F}.
	\end{align*}
\end{lemma}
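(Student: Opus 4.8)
The plan is to read the posterior off Bayes' rule, $f(\boldsymbol{R}_i\mid k,\boldsymbol{F},\boldsymbol{X}_i)\propto f(\boldsymbol{X}_i\mid\boldsymbol{R}_i)\,f(\boldsymbol{R}_i\mid k,\boldsymbol{F})$, and to recognise the product on the right as the kernel of a von Mises--Fisher density. The only non-routine ingredient is the dependence of the likelihood on $\boldsymbol{R}_i$, and this is already isolated in the proof of Theorem~\ref{thm:R} (appendix): writing the model of Definition~\ref{Procrustes_new} as $\boldsymbol{E}_i=\alpha_i^{-1}\boldsymbol{X}_i\boldsymbol{R}_i-\boldsymbol{M}\sim\mathcal{MN}(0,\boldsymbol{\Sigma}_n,\boldsymbol{\Sigma}_m)$, the matrix-normal density of $\boldsymbol{X}_i$ has a normalising constant and a change-of-variables Jacobian that do not involve $\boldsymbol{R}_i$ (the former because $|\boldsymbol{R}_i\boldsymbol{\Sigma}_m\boldsymbol{R}_i^\top|=|\boldsymbol{\Sigma}_m|$), and, after expanding $tr(\boldsymbol{\Sigma}_m^{-1}\boldsymbol{E}_i^\top\boldsymbol{\Sigma}_n^{-1}\boldsymbol{E}_i)$ and dropping the terms constant in $\boldsymbol{R}_i$, its exponent reduces to $\alpha_i^{-1}tr(\boldsymbol{R}_i^\top\boldsymbol{X}_i^\top\boldsymbol{\Sigma}_n^{-1}\boldsymbol{M}\boldsymbol{\Sigma}_m^{-1})$. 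Hence $f(\boldsymbol{X}_i\mid\boldsymbol{R}_i)\propto\exp\{\alpha_i^{-1}tr(\boldsymbol{R}_i^\top\boldsymbol{X}_i^\top\boldsymbol{\Sigma}_n^{-1}\boldsymbol{M}\boldsymbol{\Sigma}_m^{-1})\}$, the proportionality hiding only factors free of $\boldsymbol{R}_i$; for this lemma I take $\alpha_i$ fixed and, to match the stated $\boldsymbol{F}^\star$, equal to $1$ (the general case merely rescales the data cross-product by $\alpha_i^{-1}$).

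I would then multiply by the prior $f(\boldsymbol{R}_i\mid k,\boldsymbol{F})=C(\boldsymbol{F},k)\exp\{tr(k\boldsymbol{F}^\top\boldsymbol{R}_i)\}$ of Definition~\ref{fisher_def} and collect the two exponents. Using the invariance of the trace under transposition and cyclic permutation, $tr(k\boldsymbol{F}^\top\boldsymbol{R}_i)=tr(\boldsymbol{R}_i^\top k\boldsymbol{F})$, so the combined exponent is
\begin{equation*}
tr(\boldsymbol{R}_i^\top\boldsymbol{X}_i^\top\boldsymbol{\Sigma}_n^{-1}\boldsymbol{M}\boldsymbol{\Sigma}_m^{-1})+tr(\boldsymbol{R}_i^\top k\boldsymbol{F})=tr\big(\boldsymbol{R}_i^\top(\boldsymbol{X}_i^\top\boldsymbol{\Sigma}_n^{-1}\boldsymbol{M}\boldsymbol{\Sigma}_m^{-1}+k\boldsymbol{F})\big)=tr(\boldsymbol{R}_i^\top\boldsymbol{F}^\star).
\end{equation*}
Therefore $f(\boldsymbol{R}_i\mid k,\boldsymbol{F},\boldsymbol{X}_i)\propto\exp\{tr(\boldsymbol{F}^{\star\top}\boldsymbol{R}_i)\}$, which is exactly the kernel of the density \eqref{fisher} with location matrix $\boldsymbol{F}^\star=\boldsymbol{X}_i^\top\boldsymbol{\Sigma}_n^{-1}\boldsymbol{M}\boldsymbol{\Sigma}_m^{-1}+k\boldsymbol{F}$ (the concentration being absorbed into $\boldsymbol{F}^\star$, equivalently equal to $1$). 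Since $\mathcal{O}(m)$ is compact the kernel integrates to a finite constant, so the posterior is a proper von Mises--Fisher law on $\mathcal{O}(m)$, which is the asserted conjugacy.

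The routine parts are the trace expansion of the Gaussian exponent and the transpose/cyclic identity; the step that carries the real content is the reduction of the likelihood to the single factor $\exp\{tr(\boldsymbol{R}_i^\top\boldsymbol{X}_i^\top\boldsymbol{\Sigma}_n^{-1}\boldsymbol{M}\boldsymbol{\Sigma}_m^{-1})\}$ up to an $\boldsymbol{R}_i$-free constant --- in particular the facts that the quadratic term $tr(\boldsymbol{\Sigma}_m^{-1}\boldsymbol{R}_i^\top\boldsymbol{X}_i^\top\boldsymbol{\Sigma}_n^{-1}\boldsymbol{X}_i\boldsymbol{R}_i)$ drops out and that the matrix-normal normalising constant and Jacobian are orthogonally invariant. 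All of this is established in the proof of Theorem~\ref{thm:R}, so I would simply invoke it; the remainder of the argument is the two-line manipulation above, the only care needed being the bookkeeping of the scaling $\alpha_i$ and of the constants hidden behind the proportionality sign.
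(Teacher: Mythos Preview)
Your proposal is correct and follows essentially the same route as the paper: apply Bayes' rule, reduce the matrix-normal likelihood in $\boldsymbol{R}_i$ to the single cross-product trace (by invoking the computation already done in the proof of Theorem~\ref{thm:R}), add the prior exponent $tr(k\boldsymbol{F}^\top\boldsymbol{R}_i)$, and collect both traces into $tr(\boldsymbol{R}_i^\top\boldsymbol{F}^\star)$ to read off the von Mises--Fisher kernel. You are in fact a touch more explicit than the paper about why the normalising constant, the Jacobian, and the quadratic term are $\boldsymbol{R}_i$-free, and you correctly flag the $\alpha_i$ bookkeeping that the paper's statement of $\boldsymbol{F}^\star$ silently sets to $1$.
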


\begin{proof}
	The joint posterior distribution is defined as
	\begin{align}
	\prod_{i=1}^{N} f(\boldsymbol{R}_i| \boldsymbol{X}_i, \boldsymbol{M}, \boldsymbol{\Sigma}_m, \boldsymbol{\Sigma}_n, k, \boldsymbol{F}) =& \prod_{i=1}^{N} \exp[-\dfrac{1}{2}tr\{ \boldsymbol{\Sigma}_m^{-1/2} (\frac{1}{\alpha_i} \boldsymbol{X}_i \boldsymbol{R}_i - \boldsymbol{M})^\top \boldsymbol{\Sigma}_n^{-1/2} \boldsymbol{\Sigma}_n^{-1/2} \nonumber\\ 
	&(\frac{1}{\alpha_i} \boldsymbol{X_i} \boldsymbol{R_i} - \boldsymbol{M} )\boldsymbol{\Sigma}_m^{-1/2} \}] \nonumber 
	\\
	\cdot& \exp\big\{k tr( \boldsymbol{F}^\top \boldsymbol{R}_i)\big\} \cdot C \nonumber \\
	=&  \exp (-\sum_{i=1}^{N} \dfrac{1}{2} \Psi_i ) \nonumber\\
	\cdot &\exp( \sum_{i=1}^{N} < \boldsymbol{X}_i^\top \boldsymbol{\Sigma}_n^{-1} \boldsymbol{M} \boldsymbol{\Sigma}_m^{-1} + k  \boldsymbol{F}, \boldsymbol{R}_i> )\label{post},
	\end{align}
	where $\Psi_i = f(X_i)$ and $C$ is a constant value.
	The quantity \eqref{post} is a kernel of a matrix von Mises-Fisher distribution with location parameter equals 
	\begin{align*}
	\boldsymbol{F}^\star = \boldsymbol{X}_i^\top \boldsymbol{\Sigma}_n^{-1} \boldsymbol{M} \boldsymbol{\Sigma}_m^{-1} + k  \boldsymbol{F}.
	\end{align*}
\end{proof}
\begingroup
\renewcommand\thetheorem{2}
\begin{theorem}
	The ProMises model is defined as the perturbation model \eqref{eq:eq15_new} imposing the prior distribution \eqref{prior} for $\alpha_i \boldsymbol{R}_i$. Let the singular value decomposition of $\boldsymbol{X}_i^\top \boldsymbol{\Sigma}_n^{-1} \boldsymbol{M} \boldsymbol{\Sigma}_m^{-1} + k \boldsymbol{F}$ be $\boldsymbol{U}_i \boldsymbol{D}_i \boldsymbol{V}_i^\top$. Then the maximum a posteriori estimators equal $\boldsymbol{\hat{R}^{'}}_i = \boldsymbol{U}_i \boldsymbol{V}_i^\top$, and \\ $\hat{\alpha_i}_{\boldsymbol{\hat{R}^{'}}_i}^{'}=||\boldsymbol{\Sigma}_m^{-1/2} \boldsymbol{\hat{R}^{'}}_i \boldsymbol{X}_i^\top \boldsymbol{\Sigma}_n^{-1/2}||^2/tr(\boldsymbol{D}_i)$.
\end{theorem}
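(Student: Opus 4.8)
The plan is to replay the proof of Theorem~\ref{thm:R}, substituting everywhere the posterior location matrix $\boldsymbol{F}^\star=\boldsymbol{X}_i^\top\boldsymbol{\Sigma}_n^{-1}\boldsymbol{M}\boldsymbol{\Sigma}_m^{-1}+k\boldsymbol{F}$ identified in Lemma~\ref{lemma3} for the likelihood cross-product $\boldsymbol{X}_i^\top\boldsymbol{\Sigma}_n^{-1}\boldsymbol{M}\boldsymbol{\Sigma}_m^{-1}$. First I would write the (joint) log-posterior as the log-likelihood plus the logarithm of the prior \eqref{prior} plus a constant; because the $\boldsymbol{X}_i$ are independent the posterior factorizes over $i$, so it is enough to treat a single matrix. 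From the proof of Theorem~\ref{thm:R} the only $\boldsymbol{R}_i$-dependent term of the log-likelihood is $\frac{1}{\alpha_i}tr(\boldsymbol{R}_i^\top\boldsymbol{X}_i^\top\boldsymbol{\Sigma}_n^{-1}\boldsymbol{M}\boldsymbol{\Sigma}_m^{-1})$, the remaining $\alpha_i$-dependence entering only through the quadratic term $-\frac{1}{2\alpha_i^2}||\boldsymbol{\Sigma}_m^{-1/2}\boldsymbol{R}_i^\top\boldsymbol{X}_i^\top\boldsymbol{\Sigma}_n^{-1/2}||^2$. Taking the logarithm of \eqref{prior} adds $\frac{k}{\alpha_i}tr(\boldsymbol{F}^\top\boldsymbol{R}_i)=\frac{k}{\alpha_i}tr(\boldsymbol{R}_i^\top\boldsymbol{F})$ (using $tr(\boldsymbol{A})=tr(\boldsymbol{A}^\top)$) together with the $\boldsymbol{R}_i$-free term $-\log\alpha_i$.

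Next I would collect the $\boldsymbol{R}_i$-dependent contributions: for every fixed $\alpha_i>0$ they sum to $\frac{1}{\alpha_i}<\boldsymbol{R}_i,\,\boldsymbol{X}_i^\top\boldsymbol{\Sigma}_n^{-1}\boldsymbol{M}\boldsymbol{\Sigma}_m^{-1}+k\boldsymbol{F}>=\frac{1}{\alpha_i}<\boldsymbol{R}_i,\boldsymbol{F}^\star>$, which is exactly the von Mises--Fisher kernel exhibited in Lemma~\ref{lemma3}. Since $\alpha_i^{-1}>0$, maximizing over $\boldsymbol{R}_i\in\mathcal{O}(m)$ amounts to maximizing $<\boldsymbol{R}_i,\boldsymbol{F}^\star>$; writing the singular value decomposition $\boldsymbol{F}^\star=\boldsymbol{U}_i\boldsymbol{D}_i\boldsymbol{V}_i^\top$ and putting $\boldsymbol{R}_i^o=\boldsymbol{U}_i^\top\boldsymbol{R}_i\boldsymbol{V}_i\in\mathcal{O}(m)$ reduces this to $<\boldsymbol{R}_i^o,\boldsymbol{D}_i>$, maximized at $\boldsymbol{R}_i^o=\boldsymbol{I}_m$ by the same orthogonal-Procrustes argument used in the proof of Theorem~\ref{thm:R}; hence $\boldsymbol{\hat{R}}^{\prime}_i=\boldsymbol{U}_i\boldsymbol{V}_i^\top$. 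The key point is that this maximizer is free of $\alpha_i$, so it is genuinely the $\boldsymbol{R}_i$-coordinate of the joint maximum a posteriori point, and the reasoning applies verbatim to the joint posterior of $\boldsymbol{X}_1,\dots,\boldsymbol{X}_N$.

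Finally I would substitute $\boldsymbol{\hat{R}}^{\prime}_i$ and profile over $\alpha_i$. Up to additive constants the profile log-posterior in $\alpha_i$ is $-\frac{1}{2\alpha_i^2}||\boldsymbol{\Sigma}_m^{-1/2}\boldsymbol{\hat{R}}^{\prime\top}_i\boldsymbol{X}_i^\top\boldsymbol{\Sigma}_n^{-1/2}||^2+\frac{1}{\alpha_i}<\boldsymbol{\hat{R}}^{\prime}_i,\boldsymbol{F}^\star>$, and inserting $\boldsymbol{\hat{R}}^{\prime}_i=\boldsymbol{U}_i\boldsymbol{V}_i^\top$ into the second summand gives $<\boldsymbol{\hat{R}}^{\prime}_i,\boldsymbol{F}^\star>=tr(\boldsymbol{V}_i\boldsymbol{U}_i^\top\boldsymbol{U}_i\boldsymbol{D}_i\boldsymbol{V}_i^\top)=tr(\boldsymbol{D}_i)$, with $\boldsymbol{D}_i$ now the singular values of $\boldsymbol{F}^\star$. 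Differentiating in $\alpha_i$ and setting the derivative to zero, exactly as in the proof of Theorem~\ref{thm:R}, then yields $\hat{\alpha}_i^{\prime}=||\boldsymbol{\Sigma}_m^{-1/2}\boldsymbol{\hat{R}}^{\prime\top}_i\boldsymbol{X}_i^\top\boldsymbol{\Sigma}_n^{-1/2}||^2/tr(\boldsymbol{D}_i)$.

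I expect the main obstacle to be bookkeeping rather than anything conceptual: one has to make sure that the quadratic term $||\boldsymbol{\Sigma}_m^{-1/2}\boldsymbol{R}_i^\top\boldsymbol{X}_i^\top\boldsymbol{\Sigma}_n^{-1/2}||^2$ does not interfere with the maximization over $\boldsymbol{R}_i$ (inherited from the proof of Theorem~\ref{thm:R}) and that the normalizing/Jacobian factor $\alpha_i^{-1}$ in \eqref{prior} contributes only additive constants to the profile objective, so that the stationarity equation keeps the stated closed form. The one genuinely new step is recognizing, through the conjugacy of Lemma~\ref{lemma3}, that the likelihood cross-product and the prior location coalesce into the single matrix $\boldsymbol{F}^\star$, whose singular value decomposition then drives both $\boldsymbol{\hat{R}}^{\prime}_i$ and, via $tr(\boldsymbol{D}_i)$, the scaling estimate, so that ProMises estimation is literally Theorem~\ref{thm:R} applied with $\boldsymbol{F}^\star$ in place of $\boldsymbol{X}_i^\top\boldsymbol{\Sigma}_n^{-1}\boldsymbol{M}\boldsymbol{\Sigma}_m^{-1}$.
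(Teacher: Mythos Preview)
Your treatment of $\boldsymbol{R}_i$ is essentially identical to the paper's: write the log-posterior, isolate the $\boldsymbol{R}_i$-dependent part as $\alpha_i^{-1}\langle\boldsymbol{R}_i,\boldsymbol{F}^\star\rangle$, and run the orthogonal-Procrustes argument on the singular value decomposition of $\boldsymbol{F}^\star$ to obtain $\boldsymbol{\hat R}'_i=\boldsymbol{U}_i\boldsymbol{V}_i^\top$. No objection there.

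The gap is in the $\alpha_i$ step. You write that the factor $\alpha_i^{-1}$ in the prior \eqref{prior} ``contributes only additive constants to the profile objective''; it does not. Its logarithm is $-\log\alpha_i$, which depends on $\alpha_i$, so the profile log-posterior is
\[
-\tfrac{1}{2\alpha_i^2}\,\|\boldsymbol{\Sigma}_m^{-1/2}\boldsymbol{\hat R}^{\prime\top}_i\boldsymbol{X}_i^\top\boldsymbol{\Sigma}_n^{-1/2}\|^2
+\tfrac{1}{\alpha_i}\,tr(\boldsymbol{D}_i)-\log\alpha_i+\text{const},
\]
and the stationarity equation is $\alpha_i^{-2}\|\cdot\|^2-\alpha_i^{-1}tr(\boldsymbol{D}_i)-1=0$, a genuine quadratic in $\alpha_i^{-1}$ rather than the linear equation you get by copying Theorem~\ref{thm:R}. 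The paper does not sidestep this: it keeps the $-\log\alpha_i$ term, derives exactly this quadratic, and then appeals to Vi\`ete's formulas together with the approximation condition $tr(\boldsymbol{D}_i)/\|\cdot\|^2\gg 1/tr(\boldsymbol{D}_i)$ to recover the stated closed form $\hat\alpha'_i=\|\cdot\|^2/tr(\boldsymbol{D}_i)$. So the closed form is an approximation, not the exact root, and your ``exactly as in the proof of Theorem~\ref{thm:R}'' shortcut would miss both the quadratic and the side condition under which the displayed formula is valid.
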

\begin{proof}
	Consider the same assumption of Theorem \ref{thm:R}, the log-posterior distribution for $\boldsymbol{R}_i$ and $\alpha_i$ equals:
	\begin{align*}
	\log f(\alpha_i, \boldsymbol{R_i} | \boldsymbol{X_i}, \boldsymbol{M}, \boldsymbol{\Sigma}_m, \boldsymbol{\Sigma}_n, k, \boldsymbol{F}) =& - \frac{1}{2} \sum_{i = 1}^{N} tr \{ \boldsymbol{\Sigma}_m^{-1} (\dfrac{1}{\alpha_i} \boldsymbol{X}_i \boldsymbol{R}_i - \boldsymbol{M})^\top \boldsymbol{\Sigma}_n^{-1} (\dfrac{1}{\alpha_i} \boldsymbol{X}_i \boldsymbol{R}_i - \boldsymbol{M}) \\
	& - 2 \dfrac{k}{\alpha_i} \boldsymbol{F}^\top \boldsymbol{R}_i \} - \log(\alpha_i) + K,
	\end{align*}
	where $K$ is a constant value. Following the same steps of Theorem \ref{thm:R}'s proof, the maximum a posteriori estimate equals
	\begin{align}
	\boldsymbol{\hat{R}^{'}}_i &= \arg \max_{\boldsymbol{R}_i \in \mathcal{O}(m)} <\boldsymbol{R}_i, \boldsymbol{X}_i^\top \boldsymbol{\Sigma}_n^{-1} \boldsymbol{M} \boldsymbol{\Sigma}_m^{-1}>  + k < \boldsymbol{R}_i, \boldsymbol{F}>\nonumber\\
	&= \arg \max_{\boldsymbol{R}_i \in \mathcal{O}(m)} <\boldsymbol{R}_i, \boldsymbol{X}_i^\top \boldsymbol{\Sigma}_n^{-1} \boldsymbol{M} \boldsymbol{\Sigma}_m^{-1} + k\boldsymbol{F}> \nonumber\\
	&= \arg \max_{\boldsymbol{R}_i \in \mathcal{O}(m)} <\boldsymbol{R}_i, \boldsymbol{U}_i \boldsymbol{D}_i \boldsymbol{V}_i> =\arg\max_{\boldsymbol{R}_i\in \mathcal{O}(m)}<\boldsymbol{D}_i,\boldsymbol{U}_i^\top \boldsymbol{R}_i \boldsymbol{V}_i> \nonumber\\
	&= \max_{\boldsymbol{R}_i \in \mathcal{O}(m)}<\boldsymbol{D}_i,\boldsymbol{R}_i^o> = \boldsymbol{U}_i\boldsymbol{V}_i^\top, \label{eq:eq72} 
	\end{align}
	where step \eqref{eq:eq72} is proved in the same way as step \eqref{eq:eq71} of Theorem \ref{theorem1}' proof.
	
	Then we compute the maximum a posteriori estimate for $\alpha_i$:
	\begin{align*}
	\hat{\alpha_i}_{\boldsymbol{\hat{R}^{'}}_i}^{'} &= \arg \max_{\alpha_i \in {\rm I\!R}^{+}} - \dfrac{1}{2 \alpha_i^2} ||\boldsymbol{\Sigma}_m^{-1/2} \boldsymbol{\hat{R}^{'\top}}_i \boldsymbol{X}_i^\top \boldsymbol{\Sigma_n}^{-1/2}||^2 + \dfrac{1}{\alpha_i} <\boldsymbol{\hat{R}^{'\top}}_i, \boldsymbol{X}_i^\top \boldsymbol{\Sigma}_n^{-1} \boldsymbol{M} \boldsymbol{\Sigma}_m^{-1}> \\ 
	&+ \dfrac{k}{\alpha_i} tr(\boldsymbol{F}^\top \boldsymbol{\hat{R}^{'}}_i) - \log(\alpha_i) + P,
	\end{align*}
	where $P$ is a constant value. Compute the first derivative and set it to zero:
	\begin{align*}
	\alpha_i^{-2}||\boldsymbol{\Sigma}_m^{-1/2} \boldsymbol{\hat{R}^{'\top}}_i \boldsymbol{X}_i^\top \boldsymbol{\Sigma}_n^{-1/2}||^2 - \alpha_i^{-1} <\boldsymbol{\hat{R}^{'}}_i, \boldsymbol{X}_i^\top \boldsymbol{\Sigma}_n^{-1} \boldsymbol{M} \boldsymbol{\Sigma}_m^{-1} + k \boldsymbol{F}> - 1 = 0,
	\end{align*}
	so, applying the Vi\'ete theorem \citep{Vieta}, $\hat{\alpha_i}_{\boldsymbol{\hat{R}^{'}}_i}$ equals:
	\begin{align*}
	\hat{\alpha_i}_{\boldsymbol{\hat{R}^{'}}_i} &= \frac{||\boldsymbol{\Sigma}_m^{-1/2} \boldsymbol{\hat{R}^{'\top}}_i \boldsymbol{X}_i^\top \boldsymbol{\Sigma}_n^{-1/2}||^2}{tr(\boldsymbol{D}_i)},
	\end{align*}
	under the condition $\dfrac{tr(\boldsymbol{D}_i)}{||\Sigma_m^{-1/2} \boldsymbol{\hat{R}^{'\top}}_i \boldsymbol{X}_i^\top \boldsymbol{\Sigma}_n^{-1/2}||^2} \gg \dfrac{1}{tr(\boldsymbol{D}_i)}$, and $\boldsymbol{D}_i$ coming from the singular value decomposition of $\boldsymbol{X}_i^\top \boldsymbol{\Sigma}_n^{-1} \boldsymbol{M} \boldsymbol{\Sigma}_m^{-1} + k \boldsymbol{F}$.
\end{proof}	

\renewcommand\thelemma{4}
\begin{lemma}
	Consider $\boldsymbol{X}_i \in {\rm I\!R}^{n \times m}$, if $n < m$, then the maximum likelihood estimate for $\boldsymbol{R}_i$ defined in Theorem \ref{thm:R} is not unique.
\end{lemma}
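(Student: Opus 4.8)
The plan is to exploit the rank deficiency of the cross-product matrix that gets decomposed in Theorem \ref{thm:R}. Since $\boldsymbol{X}_i \in {\rm I\!R}^{n\times m}$ with $n<m$, the matrix $\boldsymbol{A}_i := \boldsymbol{X}_i^\top \boldsymbol{\Sigma}_n^{-1} \boldsymbol{M} \boldsymbol{\Sigma}_m^{-1}$ has rank at most $\operatorname{rank}(\boldsymbol{X}_i^\top)\le n<m$. Writing its full singular value decomposition $\boldsymbol{A}_i = \boldsymbol{U}_i \boldsymbol{D}_i \boldsymbol{V}_i^\top$ with $\boldsymbol{U}_i,\boldsymbol{V}_i\in\mathcal{O}(m)$, the diagonal matrix $\boldsymbol{D}_i = \operatorname{diag}(d_1,\dots,d_r,0,\dots,0)$ therefore has $r=\operatorname{rank}(\boldsymbol{A}_i)\le n$ strictly positive entries and at least $m-n\ge 1$ zero entries on the diagonal.

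Next I would revisit the maximization underlying Theorem \ref{thm:R}: from its proof, $\hat{\boldsymbol{R}}_i$ maximizes $<\boldsymbol{R}_i, \boldsymbol{A}_i> = <\boldsymbol{D}_i, \boldsymbol{R}_i^o>$ over $\boldsymbol{R}_i\in\mathcal{O}(m)$, where $\boldsymbol{R}_i^o := \boldsymbol{U}_i^\top \boldsymbol{R}_i \boldsymbol{V}_i \in \mathcal{O}(m)$. Because $\boldsymbol{D}_i$ is diagonal with nonnegative entries and every diagonal entry of an orthogonal matrix is bounded by $1$ in absolute value, $<\boldsymbol{D}_i, \boldsymbol{R}_i^o> = \sum_{j=1}^{r} d_j (\boldsymbol{R}_i^o)_{jj} \le \sum_{j=1}^r d_j = \operatorname{tr}(\boldsymbol{D}_i)$, with equality precisely when $(\boldsymbol{R}_i^o)_{jj}=1$ for $j=1,\dots,r$. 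A unit column of $\boldsymbol{R}_i^o$ carrying a $1$ in position $j$ must equal the standard basis vector $\boldsymbol{e}_j$, so any maximizer has the block form $\boldsymbol{R}_i^o = \operatorname{diag}(\boldsymbol{I}_r,\boldsymbol{W})$ with $\boldsymbol{W}\in\mathcal{O}(m-r)$; conversely every such matrix attains the maximum. Hence the full set of maximum likelihood estimates is $\{\boldsymbol{U}_i\operatorname{diag}(\boldsymbol{I}_r,\boldsymbol{W})\boldsymbol{V}_i^\top : \boldsymbol{W}\in\mathcal{O}(m-r)\}$.

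Finally I would note that $\boldsymbol{W}\mapsto \boldsymbol{U}_i\operatorname{diag}(\boldsymbol{I}_r,\boldsymbol{W})\boldsymbol{V}_i^\top$ is injective, being left/right multiplication by the fixed orthogonal matrices $\boldsymbol{U}_i,\boldsymbol{V}_i^\top$, and $m-r\ge m-n\ge 1$, so $\mathcal{O}(m-r)$ contains at least the two distinct elements $\boldsymbol{I}_{m-r}$ and $-\boldsymbol{I}_{m-r}$ (indeed a continuum when $m-r\ge 2$). Therefore the maximum likelihood estimate of $\boldsymbol{R}_i$ is not unique. I expect the only delicate point to be the bookkeeping with the full, rather than thin, singular value decomposition so that $\hat{\boldsymbol{R}}_i$ genuinely lies in $\mathcal{O}(m)$; once that convention is pinned down, the argument reduces to the rank count together with the elementary decoupling of the trace objective.
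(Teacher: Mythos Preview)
Your argument is correct and rests on the same core observation the paper uses: when $n<m$ the matrix $\boldsymbol{A}_i=\boldsymbol{X}_i^\top\boldsymbol{\Sigma}_n^{-1}\boldsymbol{M}\boldsymbol{\Sigma}_m^{-1}$ has rank at most $n<m$, so the orthogonal maximizer of $\operatorname{tr}(\boldsymbol{A}_i^\top\boldsymbol{R}_i)$ cannot be unique. The paper's proof simply invokes the equivalence ``unique maximizer $\Leftrightarrow$ $\boldsymbol{A}_i$ full rank'' from \citet{Trendafilov} and \citet{Myronenko} and stops there; you instead supply a self-contained derivation, explicitly identifying the entire set of maximizers as $\{\boldsymbol{U}_i\,\mathrm{diag}(\boldsymbol{I}_r,\boldsymbol{W})\,\boldsymbol{V}_i^\top:\boldsymbol{W}\in\mathcal{O}(m-r)\}$. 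That extra work buys independence from the external references and a concrete description of the indeterminacy, at the cost of a slightly longer argument; conceptually the two proofs are the same.
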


\begin{proof}
	In practice, without loss of generality, the Procrustes problem can be resumed as:
	\begin{align}
	\max_{\boldsymbol{R}_i \in \mathcal{O}(m)} tr(\boldsymbol{A}_i^\top \boldsymbol{R}_i),
	\label{eq:eq9}
	\end{align} 
	where $\boldsymbol{A}_i = \boldsymbol{X}_i^\top \boldsymbol{\Sigma}_n^{-1} \boldsymbol{M} \boldsymbol{\Sigma}_m^{-1}$. \citet{Trendafilov} and \citet[Lemma 1]{Myronenko} proved that the solution for \eqref{eq:eq9} is unique if and only if the matrix $\boldsymbol{A}_i$ has full rank. In Theorem \ref{thm:R} with $n < m$, $\boldsymbol{A}_i$ is equal to $\boldsymbol{X}_i^\top \boldsymbol{\Sigma}_n^{-1} \boldsymbol{M} \boldsymbol{\Sigma}_m^{-1}$ having rank lower than $m$, so the solution is not unique. Please refer to \citet{Trendafilov} and \citet[Lemma 1]{Myronenko} for further details about the complete proof.
\end{proof}
\begingroup
\renewcommand\thetheorem{3}
\begin{theorem}
	Consider the perturbation model in Definition \ref{Procrustes_new} with $\boldsymbol{\Sigma}_m = \sigma^2 \boldsymbol{I}_m$, and the thin singular value decompositions of $\boldsymbol{X}_i = \boldsymbol{L}_i \boldsymbol{S}_i \boldsymbol{Q}_i^\top$ for each $i = 1, \dots, N$, where $\boldsymbol{Q}_i$ has dimensions $n \times m$. The following holds
	\begin{align*}
	\max_{\boldsymbol{R}_i \in \mathcal{O}(m)} tr(\boldsymbol{R}_i^\top \boldsymbol{X}_i^\top \boldsymbol{\Sigma}_n^{-1} \boldsymbol{X}_j \boldsymbol{\Sigma}_m^{-1}) = \max_{\boldsymbol{R}_i^{\star} \in \mathcal{O}(n)} tr(\boldsymbol{R}_i^{ \star \top} \boldsymbol{Q}_i^\top \boldsymbol{X}_i^\top \boldsymbol{\Sigma}_n^{-1} \boldsymbol{X}_j \boldsymbol{\Sigma}_m^{-1} \boldsymbol{Q}_j^\top).
	\end{align*}
\end{theorem}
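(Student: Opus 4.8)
The plan is to reduce both maxima to the same scalar by passing to the thin factors of $\boldsymbol{X}_i$ and $\boldsymbol{X}_j$. Using $\boldsymbol{\Sigma}_m=\sigma^2\boldsymbol{I}_m$ to replace $\boldsymbol{\Sigma}_m^{-1}$ by $\sigma^{-2}$ and substituting the thin singular value decompositions $\boldsymbol{X}_i=\boldsymbol{L}_i\boldsymbol{S}_i\boldsymbol{Q}_i^\top$ with $\boldsymbol{Q}_i\in{\rm I\!R}^{m\times n}$ and $\boldsymbol{Q}_i^\top\boldsymbol{Q}_i=\boldsymbol{I}_n$ (and likewise for $j$), I would first record
\begin{align*}
\boldsymbol{X}_i^\top\boldsymbol{\Sigma}_n^{-1}\boldsymbol{X}_j\boldsymbol{\Sigma}_m^{-1}=\sigma^{-2}\boldsymbol{Q}_i\boldsymbol{B}_{ij}\boldsymbol{Q}_j^\top,\qquad \boldsymbol{Q}_i^\top\big(\boldsymbol{X}_i^\top\boldsymbol{\Sigma}_n^{-1}\boldsymbol{X}_j\boldsymbol{\Sigma}_m^{-1}\big)\boldsymbol{Q}_j=\sigma^{-2}\boldsymbol{B}_{ij},
\end{align*}
with $\boldsymbol{B}_{ij}:=\boldsymbol{S}_i\boldsymbol{L}_i^\top\boldsymbol{\Sigma}_n^{-1}\boldsymbol{L}_j\boldsymbol{S}_j\in{\rm I\!R}^{n\times n}$; the second identity is simply the cancellation $\boldsymbol{Q}_i^\top\boldsymbol{Q}_i=\boldsymbol{Q}_j^\top\boldsymbol{Q}_j=\boldsymbol{I}_n$. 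The cyclic rearrangement $tr(\boldsymbol{R}_i^\top\boldsymbol{Q}_i\boldsymbol{B}_{ij}\boldsymbol{Q}_j^\top)=tr(\boldsymbol{Q}_j^\top\boldsymbol{R}_i^\top\boldsymbol{Q}_i\boldsymbol{B}_{ij})$ is the only other algebraic manipulation and is immediate.

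The core step is then the orthogonal Procrustes value identity already used in the proof of Theorem \ref{thm:R} (von Neumann's trace inequality, or Gower's lemma): for any matrix $\boldsymbol{A}$ with singular value decomposition $\boldsymbol{A}=\boldsymbol{U}\boldsymbol{D}\boldsymbol{V}^\top$, $\max_{\boldsymbol{R}\in\mathcal{O}(p)} tr(\boldsymbol{R}^\top\boldsymbol{A})=tr(\boldsymbol{D})$, the sum of the singular values of $\boldsymbol{A}$. Applying this in dimension $n$ to $\sigma^{-2}\boldsymbol{B}_{ij}$ makes the right-hand side equal to $\sigma^{-2}$ times the sum of the singular values of $\boldsymbol{B}_{ij}$, and applying it in dimension $m$ to $\sigma^{-2}\boldsymbol{Q}_i\boldsymbol{B}_{ij}\boldsymbol{Q}_j^\top$ makes the left-hand side equal to $\sigma^{-2}$ times the sum of the singular values of $\boldsymbol{Q}_i\boldsymbol{B}_{ij}\boldsymbol{Q}_j^\top$. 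It therefore remains to check that $\boldsymbol{Q}_i\boldsymbol{B}_{ij}\boldsymbol{Q}_j^\top$ and $\boldsymbol{B}_{ij}$ have the same singular values up to appended zeros: writing $\boldsymbol{B}_{ij}=\tilde{\boldsymbol{U}}\tilde{\boldsymbol{D}}\tilde{\boldsymbol{V}}^\top$ we get $\boldsymbol{Q}_i\boldsymbol{B}_{ij}\boldsymbol{Q}_j^\top=(\boldsymbol{Q}_i\tilde{\boldsymbol{U}})\tilde{\boldsymbol{D}}(\boldsymbol{Q}_j\tilde{\boldsymbol{V}})^\top$, and since $\boldsymbol{Q}_i\tilde{\boldsymbol{U}}$ and $\boldsymbol{Q}_j\tilde{\boldsymbol{V}}$ still have orthonormal columns (completable to elements of $\mathcal{O}(m)$), this exhibits $\boldsymbol{Q}_i\boldsymbol{B}_{ij}\boldsymbol{Q}_j^\top$ in singular value decomposition form, with the same nonzero singular values as $\boldsymbol{B}_{ij}$ and $m-n$ further zeros. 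Both sides then equal $\sigma^{-2}\, tr(\tilde{\boldsymbol{D}})$, which is the claim.

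I expect the delicate point to be precisely that last spectral invariance under the semi-orthogonal embeddings $\boldsymbol{Q}_i,\boldsymbol{Q}_j$: it is the step that genuinely uses the thin-SVD identity $\boldsymbol{Q}_i^\top\boldsymbol{Q}_i=\boldsymbol{I}_n$ (and hence the rank deficiency $n<m$ that motivates the whole reduction), and it is what explains why shrinking the orthogonal group from $\mathcal{O}(m)$ to $\mathcal{O}(n)$ leaves the attained maximum unchanged. Two smaller points should be stated carefully: that the value identity imported from Theorem \ref{thm:R} is used for a general matrix $\boldsymbol{A}$ (here built from $\boldsymbol{X}_j$ rather than $\boldsymbol{M}$, which changes nothing in that argument), and that only the optimal \emph{value} is being matched --- the maximizers $\boldsymbol{R}_i\in\mathcal{O}(m)$ and $\boldsymbol{R}_i^\star\in\mathcal{O}(n)$ live in different groups and need not correspond.
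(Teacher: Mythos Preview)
Your proof is correct and takes a genuinely different route from the paper's. The paper passes to the \emph{full} SVD $\boldsymbol{X}_i=\boldsymbol{L}_i\boldsymbol{S}_i\boldsymbol{C}_i^\top$ with $\boldsymbol{S}_i=[\boldsymbol{S}_i^\star\ \boldsymbol{O}]$, performs the orthogonal change of variables $\boldsymbol{R}_i^o=\boldsymbol{C}_i^\top\boldsymbol{R}_i\boldsymbol{C}_j\in\mathcal{O}(m)$, partitions $\boldsymbol{R}_i^o$ into blocks, and observes that only the top-left $n\times n$ block $\boldsymbol{R}_{11i}^o$ survives in the trace; the passage from maximizing over $\mathcal{O}(m)$ to maximizing over $\boldsymbol{R}_{11i}^o\in\mathcal{O}(n)$ then follows (implicitly using that a linear objective over top-left blocks of orthogonal matrices is attained at an orthogonal block). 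You instead evaluate both optimal \emph{values} directly via the nuclear-norm identity $\max_{\boldsymbol{R}\in\mathcal{O}(p)}tr(\boldsymbol{R}^\top\boldsymbol{A})=\|\boldsymbol{A}\|_*$ and close the argument by the spectral invariance of the semi-orthogonal embedding $\boldsymbol{B}_{ij}\mapsto\boldsymbol{Q}_i\boldsymbol{B}_{ij}\boldsymbol{Q}_j^\top$. Your approach is arguably cleaner and isolates the essential fact (preservation of nonzero singular values under left/right multiplication by matrices with orthonormal columns) more sharply; the paper's block argument is more constructive in exhibiting how the reduced problem sits inside the full one, but leaves the final equality of maxima somewhat implicit. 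Your closing remark that only the optimal value, not the optimizers, is being matched is a useful clarification the paper does not spell out.
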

\begin{proof}

	Without loss of generality we consider $\boldsymbol{\Sigma}_m = \sigma^2\boldsymbol{I}_m$, and so the following objective function to maximize
	\begin{align*}
	tr(\boldsymbol{R}^\top_i \boldsymbol{X}^\top_i \boldsymbol{\Sigma}_n^{-1} \boldsymbol{X}_j \sigma^2 \boldsymbol{I}_m).
	\end{align*}
	We note that it is equivalent to maximize
	\begin{align}\label{block}
	tr(\boldsymbol{\Sigma}_n^{-1}\boldsymbol{X}_i \boldsymbol{R}_i \boldsymbol{X}_j^\top )
	\end{align}
	thanks to the trace's proprieties.
	
	Let consider the full singular value decomposition $\boldsymbol{X}_i = \boldsymbol{L}_i \boldsymbol{S}_i \boldsymbol{C}_i^\top$, where $\boldsymbol{S}_i \in {\rm I\!R}^{n \times m}$. The $\boldsymbol{S}_i$ matrix is defined as
	\begin{align*}
	\boldsymbol{S}_i = [ \boldsymbol{S}_i^\star  \,\, \boldsymbol{O}]  ,
	\end{align*}
	where $\boldsymbol{S}_i^\star \in {\rm I\!R}^{n \times n}$ and $\boldsymbol{O}$ is a matrix of zero with $n \times (m - n)$ dimensions, since $rank(\boldsymbol{X}_i) = n \,\, \forall i = 1, \dots, N$. Therefore, Expression \eqref{block} equals
	\begin{align*}
	tr(\boldsymbol{\Sigma}_n^{-1}\boldsymbol{X}_i \boldsymbol{R}_i \boldsymbol{X}_j^\top) =  tr(\boldsymbol{\Sigma}_n^{-1}\boldsymbol{L}_i \boldsymbol{S}_i \boldsymbol{C}_i^\top \boldsymbol{R}_i \boldsymbol{C}_j \boldsymbol{S}_j^\top \boldsymbol{L}_j^\top) = tr(\boldsymbol{\Sigma}_n^{-1}\boldsymbol{L}_i \boldsymbol{S}_i \boldsymbol{R}_i^{o} \boldsymbol{S}_j^\top \boldsymbol{L}_j^\top),
	\end{align*}
	where $\boldsymbol{R}_i^{o} = \boldsymbol{C}_i^\top \boldsymbol{R}_i \boldsymbol{C}_j \in \mathcal{O}(m)$ being a product of orthogonal matrices.
	
	Partitioning $\boldsymbol{R}_i^{o}$ in blocks, i.e.,
	\begin{equation}
	\begin{bmatrix}
	\boldsymbol{R}_{11i}^o     \boldsymbol{R}_{12i}^o\\
	\boldsymbol{R}_{21i}^o      \boldsymbol{R}_{22i}^o
	\end{bmatrix}
	\end{equation}
	where $\boldsymbol{R}_{11i}^o \in {\rm I\!R}^{n \times n}$, $\boldsymbol{R}_{12i}^o \in {\rm I\!R}^{n \times m-n}$, $\boldsymbol{R}_{21i}^o \in {\rm I\!R}^{m-n \times n}$, and $\boldsymbol{R}_{22i}^o \in {\rm I\!R}^{m-n \times m-n}$, we have:
	\begin{align}\label{lightProof1}
	\boldsymbol{\Sigma}_n^{-1}\boldsymbol{L}_i \boldsymbol{S}_i \boldsymbol{R}_i^{o} \boldsymbol{S}_j^\top \boldsymbol{L}_j^\top &=  \boldsymbol{\Sigma}_n^{-1}\boldsymbol{L}_i [ \boldsymbol{S}_i^\star  \,\, \boldsymbol{O}]         \begin{bmatrix}
	\boldsymbol{R}_{11}^o     \boldsymbol{R}_{12}^o\\
	\boldsymbol{R}_{21}^o      \boldsymbol{R}_{22}^o
	\end{bmatrix}     \begin{bmatrix}
	\boldsymbol{S}_j^{\top\star} \\
	\boldsymbol{O}^\top
	\end{bmatrix} \boldsymbol{L}_j^\top \\
	&= [ \boldsymbol{\Sigma}_n^{-1}\boldsymbol{L}_i \boldsymbol{S}_i^\star \,\, \boldsymbol{O}] \begin{bmatrix}
	\boldsymbol{R}_{11i}^o     \boldsymbol{R}_{12i}^o\\
	\boldsymbol{R}_{21i}^o      \boldsymbol{R}_{22i}^o
	\end{bmatrix}   \begin{bmatrix}
	\boldsymbol{S}_j^{\top\star} \boldsymbol{L}_j^\top  \nonumber\\
	\boldsymbol{O}^\top
	\end{bmatrix} \\
	&=  \boldsymbol{\Sigma}_n^{-1}\boldsymbol{L}_i \boldsymbol{S}_i^\star \boldsymbol{R}_{11i}^{o}\boldsymbol{S}_j^{\top\star} \boldsymbol{L}_j^\top.   \nonumber
	\end{align}
	
	Then, we have
	\begin{align*}
	\max_{\boldsymbol{R}_i \in \mathcal{O}(m)} tr(\boldsymbol{\Sigma}_n^{-1}\boldsymbol{X}_i \boldsymbol{R}_i \boldsymbol{X}_j^\top) &=   \max_{\boldsymbol{R}_{11i}^o \in \mathcal{O}(n)} tr(\boldsymbol{\Sigma}_n^{-1}\boldsymbol{L}_i \boldsymbol{S}_i^\star \boldsymbol{R}_{11i}^o \boldsymbol{S}_j^{\star\top} \boldsymbol{L}_j^\top) \\
	&= \max_{\boldsymbol{R}_i^{\star} \in \mathcal{O}(n)} tr(\boldsymbol{\Sigma}_n^{-1}\boldsymbol{X}_i\boldsymbol{Q}_i\boldsymbol{R}_i^{ \star} \boldsymbol{Q}_j^\top \boldsymbol{X}_j^\top ).
	\end{align*}
	
	The last equality is due to
	\begin{align*}
	\boldsymbol{\Sigma}_n^{-1}\boldsymbol{X}_i\boldsymbol{Q}_i\boldsymbol{R}_i^{ \star} \boldsymbol{Q}_j^\top \boldsymbol{X}_j^\top  =
	\boldsymbol{\Sigma}_n^{-1}\boldsymbol{L}_i\boldsymbol{S}_i^\star \boldsymbol{Q}_i^\top \boldsymbol{Q}_i \boldsymbol{R}_i^{ \star} \boldsymbol{Q}_j^\top \boldsymbol{Q}_j
	\boldsymbol{S}_j^\star 
	\boldsymbol{L}_j^\top
	= \boldsymbol{\Sigma}_n^{-1} \boldsymbol{L}_i \boldsymbol{S}_i^\star \boldsymbol{R}_{i}^\star \boldsymbol{S}_j^{\star\top} \boldsymbol{L}_j^\top.
	\end{align*}
	
	So essentially, only the first $n$ dimensions are used in maximizing \eqref{block}, if $n < m$ in all Procrustes-based problem. 
\end{proof}

\renewcommand\thelemma{5}
\begin{lemma}\label{light_vmp}
	Consider the assumptions of Theorem \ref{thm1}, then 
	\begin{align*}
	\max_{\boldsymbol{R}_i \in \mathcal{O}(m)} tr(\boldsymbol{R}_i^\top \boldsymbol{X}_i^\top \boldsymbol{\Sigma_n}^{-1} \boldsymbol{X}_j \boldsymbol{\Sigma}_m^{-1} + k \boldsymbol{F}) = \max_{\boldsymbol{R}_i^{\star} \in \mathcal{O}(n)} tr\{\boldsymbol{R}_i^{\star \top} (\boldsymbol{Q}_i^\top \boldsymbol{X}_i^\top \boldsymbol{\Sigma}_n^{-1} \boldsymbol{X}_j \boldsymbol{\Sigma}_m^{-1} \boldsymbol{Q}_j^\top + k \boldsymbol{F}^\star)\},
	\end{align*}
	where $\boldsymbol{F} \in {\rm I\!R}^{m \times m}$ and $\boldsymbol{F}^\star \in {\rm I\!R}^{n \times n}$.
\end{lemma}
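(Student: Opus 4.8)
The plan is to mimic the proof of Theorem~\ref{thm1}, handling the data-fit trace and the prior trace separately and then recombining them. Fix a pair $(i,j)$ and read the left-hand objective as $tr\bigl(\boldsymbol{R}_i^\top(\boldsymbol{X}_i^\top\boldsymbol{\Sigma}_n^{-1}\boldsymbol{X}_j\boldsymbol{\Sigma}_m^{-1}+k\boldsymbol{F})\bigr)$, which by linearity of the trace splits as $tr\bigl(\boldsymbol{R}_i^\top\boldsymbol{X}_i^\top\boldsymbol{\Sigma}_n^{-1}\boldsymbol{X}_j\boldsymbol{\Sigma}_m^{-1}\bigr)+k\,tr(\boldsymbol{R}_i^\top\boldsymbol{F})$, with $\boldsymbol{\Sigma}_m=\sigma^2\boldsymbol{I}_m$. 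Let $\boldsymbol{X}_i=\boldsymbol{L}_i\,[\boldsymbol{S}_i^\star\ \boldsymbol{O}]\,\boldsymbol{C}_i^\top$ be a full singular value decomposition, with $\boldsymbol{C}_i\in\mathcal{O}(m)$ and $\boldsymbol{S}_i^\star\in{\rm I\!R}^{n\times n}$, and let $\boldsymbol{Q}_i$ be the first $n$ columns of $\boldsymbol{C}_i$, so that $\boldsymbol{X}_i=\boldsymbol{L}_i\boldsymbol{S}_i^\star\boldsymbol{Q}_i^\top$ is the thin decomposition; write $\boldsymbol{C}_i=[\boldsymbol{Q}_i\ \boldsymbol{Q}_i^{\perp}]$, and likewise for $j$.

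For the data-fit trace I would copy the computation in the proof of Theorem~\ref{thm1} verbatim: after the substitution $\boldsymbol{R}_i^{o}=\boldsymbol{C}_i^\top\boldsymbol{R}_i\boldsymbol{C}_j\in\mathcal{O}(m)$ and partitioning $\boldsymbol{R}_i^{o}$ into blocks, the zero block of $[\boldsymbol{S}_i^\star\ \boldsymbol{O}]$ annihilates every block of $\boldsymbol{R}_i^{o}$ except the top-left one, so this trace equals $tr\bigl(\boldsymbol{R}_{11i}^{o\top}\boldsymbol{Q}_i^\top\boldsymbol{X}_i^\top\boldsymbol{\Sigma}_n^{-1}\boldsymbol{X}_j\boldsymbol{\Sigma}_m^{-1}\boldsymbol{Q}_j^\top\bigr)$ and depends on $\boldsymbol{R}_i^{o}$ only through $\boldsymbol{R}_{11i}^{o}$. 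For the prior trace, using $\boldsymbol{R}_i=\boldsymbol{C}_i\boldsymbol{R}_i^{o}\boldsymbol{C}_j^\top$ I would write $k\,tr(\boldsymbol{R}_i^\top\boldsymbol{F})=k\,tr\bigl(\boldsymbol{R}_i^{o\top}\,\boldsymbol{C}_i^\top\boldsymbol{F}\boldsymbol{C}_j\bigr)$; partitioning $\boldsymbol{C}_i^\top\boldsymbol{F}\boldsymbol{C}_j$ conformally, its top-left block is precisely $\boldsymbol{F}^\star=\boldsymbol{Q}_i^\top\boldsymbol{F}\boldsymbol{Q}_j$, and since the Frobenius inner product is additive over blocks this trace is $k\,tr(\boldsymbol{R}_{11i}^{o\top}\boldsymbol{F}^\star)$ plus terms involving only the other three blocks of $\boldsymbol{R}_i^{o}$.

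Recombining, the part of the objective that depends on $\boldsymbol{R}_{11i}^{o}$ is the linear functional $tr\bigl(\boldsymbol{R}_{11i}^{o\top}(\boldsymbol{Q}_i^\top\boldsymbol{X}_i^\top\boldsymbol{\Sigma}_n^{-1}\boldsymbol{X}_j\boldsymbol{\Sigma}_m^{-1}\boldsymbol{Q}_j^\top+k\boldsymbol{F}^\star)\bigr)$. As $\boldsymbol{R}_i^{o}$ ranges over $\mathcal{O}(m)$ its top-left block ranges over the $n\times n$ contractions, a compact convex set whose extreme points are $\mathcal{O}(n)$; a linear functional over it attains its maximum at an extreme point, so maximizing this part over $\boldsymbol{R}_i^{o}$ is the same as maximizing over $\boldsymbol{R}_i^{\star}\in\mathcal{O}(n)$, which yields the right-hand side of the lemma.

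The step I expect to be the real obstacle is reconciling the prior trace with this reduction. Unlike the data-fit trace, $k\,tr(\boldsymbol{R}_i^\top\boldsymbol{F})$ also ``sees'' the off-diagonal and bottom-right blocks of $\boldsymbol{R}_i^{o}$, which are free to vary in $\mathcal{O}(m)$; a block-diagonal extension of any $\boldsymbol{R}_i^{\star}\in\mathcal{O}(n)$ already shows the left-hand maximum is at least the right-hand one, but equality requires that those extra blocks contribute nothing at the optimum. This is automatic exactly when $\boldsymbol{C}_i^\top\boldsymbol{F}\boldsymbol{C}_j$ has only its top-left block nonzero, i.e.\ when the column space of $\boldsymbol{F}$ lies in $\mathrm{span}(\boldsymbol{Q}_i)$ and its row space in $\mathrm{span}(\boldsymbol{Q}_j)$, equivalently $\boldsymbol{F}=\boldsymbol{Q}_i\boldsymbol{F}^\star\boldsymbol{Q}_j^\top$. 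The cleanest way to close the proof is therefore to observe that in the Efficient implementation $\boldsymbol{F}$ enters only through the projections $\boldsymbol{Q}_i\boldsymbol{Q}_i^\top$ and $\boldsymbol{Q}_j\boldsymbol{Q}_j^\top$, so one may replace $\boldsymbol{F}$ by $\boldsymbol{Q}_i\boldsymbol{Q}_i^\top\boldsymbol{F}\boldsymbol{Q}_j\boldsymbol{Q}_j^\top$ without changing the reduced problem, after which the block computation above gives the stated equality.
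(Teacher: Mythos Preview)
Your overall strategy coincides with the paper's: split the objective into the data-fit trace and the prior trace, reduce the data-fit trace by reusing the block computation from the proof of Theorem~\ref{thm1}, and then argue that the prior trace reduces analogously. The paper's own proof is extremely terse on the prior term---it simply asserts that ``$k\,tr(\boldsymbol{R}_i^{\star\top}\boldsymbol{Q}_i^\top\boldsymbol{F}\boldsymbol{Q}_j)$ in $\mathcal{O}(n)$ is equivalent to $k\,tr(\boldsymbol{R}_i^\top\boldsymbol{F})$ in $\mathcal{O}(m)$'' with no further justification, and defines $\boldsymbol{F}^\star=\boldsymbol{Q}_i^\top\boldsymbol{F}\boldsymbol{Q}_j$.

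Where you differ is in rigor, and to your advantage. You correctly observe that $k\,tr(\boldsymbol{R}_i^\top\boldsymbol{F})=k\,tr(\boldsymbol{R}_i^{o\top}\boldsymbol{C}_i^\top\boldsymbol{F}\boldsymbol{C}_j)$ depends on \emph{all} blocks of $\boldsymbol{R}_i^{o}$, not only $\boldsymbol{R}_{11i}^{o}$, so the reduction to $\mathcal{O}(n)$ is not automatic unless $\boldsymbol{C}_i^\top\boldsymbol{F}\boldsymbol{C}_j$ has only its top-left block nonzero, i.e.\ unless $\boldsymbol{F}=\boldsymbol{Q}_i\boldsymbol{F}^\star\boldsymbol{Q}_j^\top$. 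This concern is legitimate: for a generic full-rank $\boldsymbol{F}$ (e.g.\ $\boldsymbol{F}=\boldsymbol{I}_m$) the two maxima in the lemma are \emph{not} equal, because the $\mathcal{O}(m)$ optimization can collect additional mass from the complementary $(m-n)$-dimensional block. The paper's one-line assertion skips over this, whereas you diagnose it and propose the natural repair---interpret the $\boldsymbol{F}$ appearing on the left as the projected matrix $\boldsymbol{Q}_i\boldsymbol{Q}_i^\top\boldsymbol{F}\boldsymbol{Q}_j\boldsymbol{Q}_j^\top$, which leaves the reduced problem (and hence the Efficient algorithm) unchanged while making the equality exact. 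Your extreme-point argument for the top-left block (contractions with extreme points $\mathcal{O}(n)$) is also a nice addition that the paper omits.
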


\begin{proof}
	Without loss of generality we consider $\boldsymbol{\Sigma}_m = \sigma^2\boldsymbol{I}_m$, and $\boldsymbol{F}^\star = \boldsymbol{Q}^\top_i \boldsymbol{F} \boldsymbol{Q}_j$. So, the following objective function to maximize
	
	\begin{align}\label{lightProof}
	\max_{\boldsymbol{R}_i^\star \in \mathcal{O}(n)} tr(\boldsymbol{R}_i^{\top\star} \boldsymbol{Q}_i^\top\boldsymbol{X}^\top_i \boldsymbol{\Sigma}_n^{-1} \boldsymbol{X}_j \boldsymbol{Q}_j) + k tr( \boldsymbol{R}_i^{\top\star} \boldsymbol{Q}_i^\top \boldsymbol{F} \boldsymbol{Q}_j).
	\end{align}
	
	The left part of the maximization \eqref{lightProof} equals \eqref{lightProof1}, while $ktr(\boldsymbol{R}_i^{\top\star} \boldsymbol{Q}_i^\top \boldsymbol{F} \boldsymbol{Q}_j)$ in $\mathcal{O}(n)$ is equivalent to $ktr( \boldsymbol{R}_i^{\top}  \boldsymbol{F})$ in $\mathcal{O}(m)$. 
	
\end{proof}




\end{document}